\newcommand*\diff{\mathop{}\!\mathrm{d}}
\theoremstyle{plain}
\newtheorem{theorem}{Theorem}%[section]
\newtheorem{proposition}{Proposition}
\newtheorem{lemma}{Lemma}
\newtheorem{corollary}{Corollary}
\theoremstyle{definition}
\newtheorem{definition}{Definition}
\theoremstyle{remark}
\definecolor{subsectioncolor}{HTML}{8d3a16}
\begin{document}
\title{\Large \textbf{Latent computing by biological neural networks: A dynamical systems framework}}

\author[1,2,3,4]{Fatih Dinc}
\author[1,5]{Marta Blanco-Pozo}
\author[6]{David Klindt}
\author[3]{Francisco Acosta}
\author[1,5]{\authorcr Yiqi Jiang}
\author[1]{Sadegh Ebrahimi}
\author[7]{Adam Shai}
\author[2,8,$\dag$]{\authorcr Hidenori Tanaka}
\author[9,10,$\dag$]{Peng Yuan}
\author[1,5,11,$\dag$]{Mark J. Schnitzer}
\author[3,$\dag$]{Nina Miolane}
\affil[1]{CNC Program, Stanford University, Stanford, USA}
\affil[2]{Phi Lab, NTT Research, Sunnyvale, USA}
\affil[3]{Geometric Intelligence Lab, UC Santa Barbara, Santa Barbara, USA}
\affil[4]{Kavli Institute for Theoretical Physics, UC Santa Barbara, Santa Barbara, USA}
\affil[5]{James Clark Center, Stanford University, Stanford, USA}
\affil[6]{Cold Spring Harbor Laboratory, USA}
\affil[7]{Simplex, Berkeley, USA}
\affil[8]{Center for Brain Science, Harvard University, Cambridge, USA}
\affil[9]{State Key Laboratory of Medical Neurobiology, Institute for Translational Brain Research, MOE Frontiers Center for Brain Science, Fudan University, China}
\affil[10]{Research Institute of Intelligent Complex System, Fudan University, China}
\affil[11]{Howard Hughes Medical Institute, USA}

\affil[$\dag$]{Supervisors}

\maketitle

\begin{abstract}
Although individual neurons and neural populations exhibit the phenomenon of representational drift, perceptual and behavioral outputs of many neural circuits can remain stable across time scales over which representational drift is substantial. These observations motivate a dynamical systems framework for neural network activity that focuses on the concept of \emph{latent processing units,} core elements for robust coding and computation embedded in collective neural dynamics. Our theoretical treatment of these latent processing units yields five key attributes of computing through neural network dynamics. First, neural computations that are low-dimensional can nevertheless generate high-dimensional neural dynamics. Second, the manifolds defined by neural dynamical trajectories exhibit an inherent coding redundancy as a direct consequence of the universal computing capabilities of the underlying dynamical system. Third, linear readouts or decoders of neural population activity can suffice to optimally subserve downstream circuits controlling behavioral outputs. Fourth, whereas recordings from thousands of neurons may suffice for near optimal decoding from instantaneous neural activity patterns, experimental access to millions of neurons may be necessary to predict neural ensemble dynamical trajectories across timescales of seconds. Fifth, despite the variable activity of single cells, neural networks can maintain stable representations of the variables computed by the latent processing units, thereby making computations robust to representational drift. Overall, our framework for latent computation provides an analytic description and empirically testable predictions regarding how large systems of neurons perform robust computations via their collective dynamics.
\end{abstract}

\section{Introduction}
Understanding the principles linking neural activities and information representation in the brain is a central goal of systems neuroscience research. Earlier studies identified fixed correspondence between specific neurons to sensory stimuli \cite{o1976place,fuster1971neuron}, suggesting that each behaviorally relevant variable is represented by individual neurons in the brain \cite{barlow1972single,eliasmith2003neural} (\textit{e.g.}, receptive fields in the cat's striate cortex \cite{hubel1959receptive}). However, recent large scale and longitudinal recordings (\cite{stevenson2011advances,urai2022large,langdon2023unifying,gao2015simplicity,kim2022fluorescence,syeda2024facemap,chen2024brain,stringer2019high,kim2016long}) revealed that information can be stably represented through the neuronal population, while individual neurons’ coding properties can change over time \cite{ziv_long-term_2013, deitch_representational_2021, ebrahimi2022emergent, marks2021stimulus, driscoll2017dynamic, rokni2007motor}. The latter phenomenon, called \textit{representational drift}, has challenged the single neuron doctrine in the brain, and called for a novel principle to explain neural coding.

Neural manifolds have since been hypothesized as the population-level substrates of neural coding (neural manifold hypothesis; \cite{mastrogiuseppe2018linking,yuste2015neuron,langdon2023unifying}). In this conjecture, a set of variables, hidden in the collective dynamics of the neuronal activities, are hypothesized to represent stable neural dynamics over time despite representational drift \cite{deitch_representational_2021}. To date, several dimensional reduction techniques have been developed to extract these coding dimensions \cite{schneider2023learnable,williams2018unsupervised,pandarinath2018inferring,depasquale2023centrality,zimnik2024identifying,duncker2021dynamics,versteeg2023expressive,jazayeri2021interpreting,glaser2020recurrent, cunningham2014dimensionality,bjerke2022understanding}, some of them integrating time-dependent information such as CEBRA \cite{schneider2023learnable} and TCA \cite{williams2018unsupervised}. These methods have provided a big step forward, revealing computational insights in the form of attractor structures \cite{nair2023approximate}, identifying inter-area communication channels in animal brains \cite{perich2021inferring,perich2020rethinking}, and uncovering preserved neural motives across animals \cite{safaie2023preserved}. Yet, performing a dimensional reduction of neural data does not provide a causal and physiologically motivated model of its time dynamics. Notwithstanding, it can be the first step towards a comprehensive theory of neural coding  (\textit{e.g.}, to describe how neural activity manifolds respond to novel perturbations).

Building on the neural manifold hypothesis, new promising approaches for making causal predictions under external perturbations rely on inferring dynamical systems from neural recordings (Fig. \ref{fig1}\textbf{A}). These include models with linear latent dynamics \cite{o2022direct,abbaspourazad2024dynamical}, recurrently switching linear dynamical systems (rSLDS) \cite{linderman2017bayesian,vinograd2024causal,hu2025modeling} and recurrent neural networks \cite{valente2022extracting,mastrogiuseppe2018linking,dubreuil2022role,beiran2021shaping,Langdon2025}. However, these approaches have their own shortcomings (See Table \ref{tabs1} for more details). Linear latent models have significantly restricted dynamics (\textit{e.g.}, linear dynamical systems cannot explain limit cycles), whereas for the others, low-dimensional latent dynamics lead to low-dimensional flat neural manifolds, contradicting empirical evidence from recent studies \cite{stringer2019high,manley2024simultaneous}. All in all, these approaches primarily offer phenomenological descriptions of neural recordings (\textit{e.g.}, identifying approximate line attractors within \emph{projected} neural activities \cite{vinograd2024causal,liu2024encoding}), rather than providing a comprehensive theory of neural computation. 

To bridge this fundamental gap, by building upon recent seminal work connecting structure and function in low-rank RNNs \cite{beiran2021shaping,dubreuil2022role,valente2022extracting,mastrogiuseppe2018linking}, we introduce latent computation framework (LCF), a rigorous theory of neural coding. Our framework is built on the assumption that neural coding is carried out in dynamical systems embedded in high-dimensional neural activities, \textit{i.e.}, ``latent processing units (LPUs).'' Unlike real-world objects that follow immutable physical laws, or central processing units (CPUs) that perform pre-defined logic operations, LPUs consist of dynamical systems designed to carry out computational tasks, which evolve according to a set of \emph{learnable} differential equations (parameterized by synaptic connections in a network). We start by illustrating how to extract LPUs from simulated recurrent neural networks (RNNs) performing behavioral tasks. Later, we demonstrate how LPUs explain several key phenomena identified in systems neuroscience literature, a feat not achieved by existing work modeling neural dynamics. Then, we develop a new method for simulating representational drift within task-trained RNNs. We use it to illustrate how LPU dynamics can remain stable even after almost all neurons in a network change their coding properties. Finally, we discuss how the new generation of large-scale interventional tools can allow testing our empirical predictions, eventually verifying or falsifying the putative use of LPUs in biological networks.

\section{Results} \label{sec:results}
\subsection{Latent processing units: From constrained units to universal computation}

LCF starts with three key elements (illustrated in Fig. \ref{fig1}\textbf{B}): 1) a dynamical model of observed neural activities, 2) LPUs, \textit{i.e.}, coding subspaces within the dynamical model that subserve behavior, and 3) falsifiable empirical predictions based on individual neurons' contributions to these subspaces. We define the latent dynamics through a generic encoding-embedding relationship (Fig. \ref{fig1}\textbf{B}; Definition \ref{def:embedding}):
\begin{subequations} \label{eq1}
\begin{align}
\kappa(t) &= \phi(r(t)), \\
\tau \dot r(t) &= -r(t) + \varphi(\kappa(t),u(t)),
\end{align}
\end{subequations}
where $r(t) \in \mathbb R^{N_{\rm rec}}$ represents the activities of $N_{\rm rec}$ recurrently connected neurons at a given time $t$, $\kappa(t) \in \mathbb R^K$ denotes latent variables with $K \ll N_{\rm rec}$, $\phi(.)$ and $\varphi(.)$ are encoding and embedding maps respectively, $u(t) \in \mathbb R^{N_{\rm in}}$ represents external inputs, and $\tau \in \mathbb R$ is the neuronal timescale. The key insight of this formulation is that the embedding map constrains neural activity derivatives, unlike traditional ``decoder'' formulation that map latent variables back to neural activities (as in models such as CEBRA \cite{schneider2023learnable}; \textbf{Methods}). Intuitively, one can consider the encoding-embedding relationship as the joint generator of the network ($\dot r(t)$) and the latent ($\dot \kappa(t)$) dynamics, with the pair of encoding-embedding maps ($\{\varphi,\phi\}$) defining the dynamical system equations (\textbf{Methods}). Therefore, the resulting model can provide causal predictions of time dynamics under outside perturbations (\emph{e.g.}, optogenetics). This relationship captures the low-rank RNN architecture studied in previous work \cite{mastrogiuseppe2018linking,dubreuil2022role,valente2022extracting,beiran2021shaping} (Table \ref{tabs1}), while opening the door for a broader class of biologically motivated RNNs (\textbf{Methods}).

We refer to the dynamical system defined by latent variables, $\kappa(t)$, as an LPU when two conditions are met (\textbf{Methods}; Definition \ref{def2}): i) $\dot \kappa(t)$ does not explicitly depend on $r(t)$, only implicitly through $\kappa(t) = \phi(r(t))$ (\textit{i.e.}, $\kappa(t)$ constitutes a latent dynamical system), and ii) the network parameters can be tuned to approximate any flow map $\dot \kappa(t) = g(\kappa(t),u(t))$. The latter property is known as universal approximation in the RNN literature \cite{schafer2006recurrent}, in which ``recurrent weights" ($W^{\rm rec}$) between artificial neurons can be trained to approximate time series data. With LPUs, we assume that latent dynamical systems are encoded through linear maps and embedded in neural activities nonlinearly. This design choice is motivated by three distinct reasons: i) physiological relevance, \textit{e.g.}, to allow linear readouts of internal states through motor/downstream neurons (see below for more); ii) theoretical sufficiency, \textit{i.e.}, this combination of the encoding-embedding relationship leads to closed latent dynamical systems, whereas nonlinear embedding enables universal computation (\textbf{Methods}); and iii) this asymmetric choice is also at the core of compressed sensing theory \cite{donoho2006compressed}, which has previously been applied to the study of sparse neural population codes \cite{ganguli2010short,ganguli2012compressed}.

For the rest of this work, all our theorems and propositions are proven on a broad class of biologically motivated RNN architectures, which have linear encoding and nonlinear embedding maps and are characterized by their ability to form multiple synapses between artificial neurons (\textbf{Methods}, Definition \ref{def3}). On the other hand, for clarity and demonstration purposes, we perform our experiments on a simple, yet representative, network architecture. Specifically, by combining the linear encoding with one of the most basic embedding maps—a weighted sum of latent variables followed by neuron-wise $\tanh(.)$ nonlinearity, we arrive at the leaky firing-rate RNN architecture \cite{masse2019circuit,dinc2023cornn}:
\begin{equation} \label{eq_lfrnn}
\tau \dot r(t) = -r(t) + \tanh\left( W^{\rm rec} r(t) + W^{\rm in} u(t) + b \right),
\end{equation}
where $W^{\rm rec} \in \mathbb R^{N_{\rm rec} \times N_{\rm rec}}$ are recurrent weights as alluded to above, $W^{\rm in} \in \mathbb R^{N_{\rm rec} \times N_{\rm in}}$ are input weights, $b \in \mathbb R^{N_{\rm rec}}$ are biases, and the rest are defined as before. This simple architecture (which is distinct from what has been studied in the previous low-rank RNN literature \cite{mastrogiuseppe2018linking,dubreuil2022role,beiran2021shaping,valente2022extracting}; See \textbf{Methods}) satisfies the universal approximation property (Theorem \ref{thm1}), which has two crucial implications: it enables the network to subserve complex behavior through linear readouts (Theorem \ref{thm2}), and allows latent dynamical systems to operate at different timescales compared to neural activities (Proposition \ref{prop1}; \textbf{Supplementary Note}). The latter is particularly relevant given the separation between neuronal ($O(ms)$) and behavioral ($O(s)$) timescales \cite{zheng2024unbearable}. We discuss all these concepts in detail below. 

\subsection{Extracting latent processing units from task-trained RNNs}

Now, with a simple example, we demonstrate how LCF can provide causal predictions and unique insights in the presence of outside interventions. To start with, in order to extract the LPUs in RNNs whose time evolution follows Eq. \eqref{eq_lfrnn}, we enforce a low-rank constraint on the weight matrix: 
\begin{equation}\label{eq_lowrank}
    W^{\rm rec} = \frac{1}{N_{\rm rec}} \sum_{p=1}^K m^{(p)} n^{(p)T}, 
\end{equation}
where $m^{(p)} \in \mathbb R^{N_{\rm rec}}$ and $n^{(p)} \in \mathbb R^{N_{\rm rec}}$ are referred to as ``embedding'' and ``encoding'' weights, respectively. In practice, even when the underlying weight matrix is not low-rank by design, it is often possible to approximate it with a low-rank counterpart for RNNs performing behavioral tasks \cite{valente2022extracting,dubreuil2022role}. (This approach is also how we primarily train low-rank RNNs in this work; \textbf{Methods}.) Then, the effectively low-rank nature of the recurrent weights allows defining latent variables as: 
\begin{equation} \label{eq_kappa}
    \kappa_p(t) = \frac{1}{N_{\rm rec}} n^{(p)T} r(t), \quad \text{for} \quad p=1,\ldots K.
\end{equation}
Since this relationship is a simple weighted sum, we refer to it as ``linear encoding.'' As alluded to above, combined with Theorem \ref{thm1}, the choice of linear encoding ensures that $\kappa(t)$ constitutes a LPU in this architecture. (Here, the latent variable definition in Eq. \eqref{eq_kappa} may seem similar to the existing low-rank RNN literature \cite{valente2022extracting,beiran2021shaping,dubreuil2022role}, but it differs fundamentally. See \emph{``Existing models in encoding-embedding framework''} in \textbf{Methods} and Table \ref{tabs1} for a detailed comparison).

To demonstrate how LPUs perform computations, we trained a rank-2 RNN on a 2-bit flip-flop task (Fig. \ref{fig1}\textbf{C-F}). This task requires the RNN to maintain four distinct internal states representing possible 2-bit combinations (00, 01, 10, 11). RNNs learn to transition between these states in response to the incoming $\pm 1$ pulses through two inputs (Fig. \ref{fig1}\textbf{C}). In a representative RNN, individual neurons exhibited state-dependent persistent activity (Fig. \ref{fig1}\textbf{D}), with four attractive fixed points forming in the two-dimensional LPU (Fig. \ref{fig1}\textbf{E}). While these attractors could also be identified from neural activities using established methods such as energy minimization \cite{sussillo2013opening}, LPUs uniquely allow analyzing network behavior beyond equilibrium states, \textit{e.g.}, latent state trajectories away from attractor points (Fig. \ref{fig1}\textbf{E}). Moreover, it is possible to quantify how latent dynamics would transform under synaptic modifications or external inputs. We demonstrated this capability by introducing an input to the first channel, which induced a bifurcation in the system dynamics (Fig. \ref{fig1}\textbf{F})—a qualitative transformation resulting in the disappearance of two of the four stable fixed-point attractors. 

\subsection{Designing large-scale RNNs with distinct latent and neural timescales}

Biological networks are capable of separating neural ($\tau \sim O(ms)$) and kinematic ($\sim O(s)$) time-scales orders of magnitude apart \cite{zheng2024unbearable}. For instance, biological networks may need to collect evidence for extended durations of time before committing to a final action, which would require the underlying LPU to operate at longer time scales than $\tau$. Fortunately, Proposition \ref{prop1} (\textbf{Methods}) guarantees that LPUs can operate in arbitrarily long time-scales. Yet, this theoretical result holds only in the limit of $N_{\rm rec} \to \infty$, which requires a deeper (empirical) look. 

Now, we study the limitations of LPUs supported by finite number of units (\textit{e.g.}, biological neurons) that operate at ultra-fast time-scales. To this end, we set out to simulate large-scale RNNs capable of solving flip-flop tasks. First, we attempted the traditional strategy of training RNNs using backpropagation through time (BPTT) (Fig. \ref{figs1}\textbf{A}). But, even with networks of 5,000 neurons, BPTT required substantially reduced learning rates to converge (Fig. \ref{figs1}\textbf{B}), making it impractical for training larger networks to store flip-flop states. Nonetheless, studying RNNs that we could successfully train (even though they had only $100$ neurons per network) revealed a key insight: the encoding ($n$) and embedding ($m$) weights for individual neurons were distributed as if sampled from a distribution with zero mean and a positive correlation (Fig. \ref{figs1}\textbf{C}). This observation, combined with a theoretical mean-field approach, allowed designing arbitrarily large RNNs with one-dimensional LPUs to store flip-flop states: We sampled the pair $\{n_i,m_i\}$ for the neuron $i$ from a data-informed probability distribution (Fig. \ref{fig1}\textbf{G}; \textbf{Methods}). 

Using the procedure outlined above, we tested the empirical limits of Proposition \ref{prop1} by designing LPUs operating at various time-scales ($O(ms)$ to $O(s)$) using ultra-fast neurons (with $\tau \sim 1ms$). As anticipated, LPUs with time-scales close to $\tau$ could be reliably constructed with as little as a thousand neurons (Fig. \ref{figs1}\textbf{D-E}). However, the same number of neurons failed to support LPUs operating at second-long time-scales (Fig. \ref{fig1}\textbf{H}). For this case, approximately one million neurons were needed (Fig. \ref{fig1}\textbf{I}). This finding raises up an important practical question for neuroscience research when recording the whole population is not an option, \textit{e.g.}, as in recordings from mammalian brains. Can LPUs be reconstructed using partial observations, \textit{i.e.}, without having recorded every neuron? Unfortunately, studying this question extensively is already a challenging endeavor by itself \cite{ayed2019learning,das2020systematic,qian2024partial}. Instead, in \textbf{Supplementary Note}, we present an analytical framework around a simple (and relevant) toy model of bistable dynamics. There, we study the linearized dynamics of one-dimensional LPUs around $\kappa \sim 0$ and examine how reconstruction errors of the growth rates (which explain the dynamics in the linearized region) scale with the number of recorded neurons ($N_{\rm obs}$). In this model, reconstruction errors decrease following $\propto \tau_{\rm latent} N_{\rm obs}^{-0.5}$ for $N_{\rm obs}\ll N_{\rm rec}$, where $\tau_{\rm latent}$ refers to the timescale of the LPU. For $N_{\rm obs}\sim N_{\rm rec}$, increasing the number of recorded neurons accelerated error reduction beyond a power law (See Figure \ref{figa2}). Overall, combined with prior research demonstrating that few thousands of neurons may be sufficient to optimally extract the linear information encoded in neural populations \cite{hazon2022noise,rumyantsev2020fundamental}, this simple model clearly demonstrates the comparative advantage and necessity of large-scale neural recordings for inferring neural activity dynamics over second-long timescales.

We could have explored the preceding questions using a leaky current low-rank RNN, as studied in prior work \cite{dubreuil2022role,beiran2021shaping,valente2022extracting,mastrogiuseppe2018linking}. Beyond this point, however, the two models yield fundamentally different predictions (Table \ref{tabs1}). In the remainder of this work, we demonstrate how the biologically motivated RNN architectures that constitute the LPUs uniquely explain key empirical observations in systems neuroscience.

\subsection{Identifiable linear readouts can subserve complex behavior}

The choice of the linear encoding for LPUs provides a viable solution to the common identifiability issues shared across the latent variable models \cite{zhou2020learning, naselaris2011encoding, haxby2001distributed, pillow2011model, lange2023bayesian, dabney2020distributional, park2014encoding}. Notably, in a general case with nonlinear encoding, if $\kappa(t)$ were to constitute a (generalized) latent unit satisfying Eq. \eqref{eq1}, so would $T(\kappa(t))$ for an invertible (potentially nonlinear) transformation $T$. In such a case, $\kappa(t)$ would be considered non-identifiable, \textit{i.e.}, there are infinitely many non-trivial transformations that render several choices of $\kappa(t)$ equivalent for Eq. \eqref{eq1}. Fortunately, linearity of the encoding significantly constraints the group of transformations. Specifically, if $\kappa(t)$ constitutes an LPU, so would any $\tilde \kappa(t) = T \kappa(t)$, where $T \in \mathbb R^{K\times K}$ is an arbitrary invertible matrix (\textbf{Methods}). In contrast, a strictly nonlinear transformation $\tilde \kappa = T(\kappa)$ would violate the linearity of the encoding, rendering $\tilde \kappa$ no longer fit for an LPU. Hence, LPUs are linearly identifiable by definition \cite{hyvarinen2024identifiability}, which has been the target to achieve in previous literature \cite{schneider2023learnable}. 

Here, we take the identifiability considerations one step further and study a key question: In order to achieve optimal behavioral readouts $\psi(\kappa)$, which linearly transformed version of $\kappa(t)$, if any, would need to be tracked within a biological network? Fortunately, if the linear information (\textit{i.e.}, what may be readily accessible to downstream neurons through simple synaptic projections \cite{moreno2014information}) was sufficient for subserving complex behavior, there is no reason for networks to keep explicit track of latent variables at all. Linearity of the encoding ensures that a linear readout from the LPUs, $\psi(\kappa)$, can be equally represented with a counterpart defined on neural activities, $\tilde \psi(r)$. Then, information can be extracted from the neurons without explicitly identifying the former (Fig. \ref{fig2}\textbf{A}, Lemma \ref{lm2}; \textbf{Methods}). Though linear readouts have been widely discussed in neuroscience and provide a biologically plausible method for brain regions to efficiently communicate \cite{rigotti2013importance,kira2023distributed,ebrahimi2022emergent,semedo2019cortical}, their optimality has been an open question to date. Below, we demonstrate how they can become optimal in the sense of communicating LPU outputs.

Since the class of nonlinear decoders subsumes their linear counterparts, enforcing linearity on the readout could potentially limit its accuracy. Notwithstanding, prior work has shown that linear readouts in basic RNNs can provably approximate nonlinear functions of time by serving as ``universal approximators'' \cite{schafer2006recurrent}. The resolution of these seemingly paradoxical statements follows from the independency of encoding-embedding maps in our framework (which subsumes the basic RNN architecture \cite{schafer2006recurrent}). Intuitively, the \emph{trainable} embedding of LPUs into neural activities allows complex (nonlinear) computation to unroll over time, whereas having a \emph{linear} (also trainable) encoding allows easy access to latent variables. We formalize this intuition in Theorem \ref{thm2}, where we prove that linear readouts from neural activities can approximate any \emph{differentiable (nonlinear) functions of LPUs} in a broad class of networks. This flexibility, however, requires an increase in the latent dimensionality (\textbf{Methods}). Specifically, if a $K$-dimensional LPU suffices for a nonlinear readout, a $K+1$-dimensional LPU may be necessary for a linear readout to extract the same information. Hence, as long as there are available resources (\textit{i.e.}, $N_{\rm rec} \gg K$), LPUs endow linear readouts with a theoretical sufficiency guarantee.

Testing the sufficiency of linear readouts is challenging, as the ability to decode specific information from a brain region at a given time does not necessarily imply causal communication. Intermediate computational results may still correlate, often nonlinearly, with the variable we wish to decode. Therefore, to rigorously test the sufficiency of linear decoding, we focused on a basic and globally relevant information: trial identity in a Go-NoGo task. Specifically, we reanalyzed previously published large-scale neural recordings of layer 2/3 pyramidal neurons across up to eight neocortical regions (\cite{ebrahimi2022emergent}; Figs. \ref{fig2}\textbf{B-D} and \ref{figs2}). Meanwhile, mice performed visual discrimination tasks in which each trial begins with the presentation of a Go or NoGo stimulus, followed by a brief delay period after which the mouse must lock in its decision by licking a spout or refraining from licking (Fig. \ref{fig2}\textbf{B}). During this task, up to eight neocortical regions are recorded with a mesoscope, capturing on average $3595 \pm 989$ neurons while mice perform $1296 \pm 379$ correct trials within $5 \pm 1$ imaging sessions (all quantities in mean $\pm$ s.d.; Fig. \ref{fig2}\textbf{C}). The correct trial identity correlates with both the cue identity (during stimulus presentation) and the response (during the response period); therefore, we reasoned that this information is sufficiently general to be actively communicated across neocortical regions \cite{ebrahimi2022emergent}. 

To test the sufficiency of linear information readout on trial identity, we evaluated the performance of linear (logistic regression and linear discriminant) and nonlinear (random forest classifier and quadratic discriminant) decoders trained on neural activities from individual brain regions to predict the trial identity (Figs. \ref{fig2}\textbf{D} and \ref{figs2}). The decoders were regularized using dimensional regularization with partial least squares, following the procedures outlined in \cite{ebrahimi2022emergent}. Notably, across all brain regions and trial segments (stimulus, delay, response, post-response), decoder performance saturated with as few as 10 (often even far less) regularization dimensions, beyond which additional dimensions did not enhance performance (Fig. \ref{figs2}). Since the logistic regression and random forest classifiers were allowed to optimize their parameters with cross-validation (\textbf{Methods}), these results suggest that decoding accuracies can remain optimal despite low-dimensional bottlenecks. Moreover, under a very strict bottleneck with only three dimensions, linear decoders slightly outperformed nonlinear ones, likely due to overfitting in more complex architectures (Fig. \ref{fig2}\textbf{D}), though increasing the dimensionality of the bottleneck resulted in nearly equal performance for both decoder types (Fig. \ref{figs2}). 

These results align with previous findings \cite{rumyantsev2020fundamental}, where similar outcomes were observed in decoders trained on primary visual cortical populations. Specifically, \cite{rumyantsev2020fundamental} found that the class covariance matrices were approximately equal, making linear discriminants as effective as their quadratic counterparts. Our findings extend this principle across diverse cortical regions, suggesting that for information regularly communicated between brain regions, \emph{optimal} linear readouts (as potential communication subspaces) may be feasible (Theorem \ref{thm2}).

\subsection{Nonlinear embedding can lead to unbounded scaling of linear dimensionality}

The combination of encoding and embedding must provide nonlinearity to the network; otherwise, the LPU reduces to a simple linear dynamical system (\textbf{Methods}; Table \ref{tabs2}). Our analysis has demonstrated that linear encoding endows LPUs with a crucial, physiologically motivated property: the ability to extract information using linear readouts without requiring latent variable identification. This constrains the nonlinearity to originate from the embedding map (Eq. \eqref{eq1}; Fig. \ref{fig2}\textbf{E}). Below, we demonstrate that this constraint aligns with empirical observations - specifically, the high linear dimensionality observed in large-scale neural recordings could naturally arise from a low-dimensional LPU through nonlinear embedding \cite{manley2024simultaneous,stringer2019high}.

Theoretically, it is possible to design a scenario in which even a one-dimensional latent system can inflate linear dimensionality proportional to the neuron count (Proposition \ref{prop2}; \textbf{Methods}). However, the inflated linear dimensionality is more than just theoretical curiosity and can manifest practically in several ways. Specifically, a study by \cite{stringer2019high} (recently replicated by \cite{manley2024simultaneous} with up to a million neurons; also see \cite{pospisil2024revisiting}) found that many cortical areas exhibit a power law drop-off in the spectrum of neural activity. Intuitively, this means that the $i^{\text{th}}$ principal component (PC) explains $i^{-\alpha}$ variance in the data, with $\alpha$ being the slope of the drop-off. Critically, this $\alpha$ was found to be close to one \cite{stringer2019high}, suggesting a slow drop-off and, consequently, the high dimensionality of the observed neural activities. One explanation is that the neural code is simply high-dimensional, containing coding variables that can only be extracted through observing more and more neurons. An alternative explanation, consistent with LPUs (and briefly alluded to in \cite[Figure 4]{stringer2019high}), is that fixed latent dimensionality can lead to unbounded scaling due to nonlinear embedding maps. Below, we perform three distinct simulations that provide three possible paths to explain high-dimensional neural activities with low-dimensional LPUs. 

Firstly, to show that even static embedding alone can lead to increased linear dimensionality (which is the case in our theoretical example in Proposition \ref{prop2}; \textbf{Methods}), we performed a static \emph{noiseless} embedding of a $K$-dimensional unit sphere following the formula $r = \sin(\sum_{i=1}^K m^{(i)}\kappa_i)$, where $m^{(i)} \in \mathbb R^N$ are embedding weights drawn from a zero mean normal distribution (s.d. varying across experiments), whereas $\kappa \in \mathbb R^K$ are sampled uniformly from a $K$-dimensional unit sphere. A static embedding only constrains the neural activity $r$, not its time-derivative $\dot r$, and in that sense is less redundant. Here, we confirmed that with as little as $50$ latent dimensions, the embedding has shown nearly unbounded scaling up to $10,000$ neurons (Fig. \ref{fig2}\textbf{F}). This occurs as early as $5-10$ latent dimensions with larger embedding weights or a different nonlinearity (Fig. \ref{figs3}\textbf{A}).

A recent biological finding through recording up to a million neurons is that around half the variance was found to be in dimensions uncorrelated with animal's behavior \cite{manley2024simultaneous}. We observed that the same scenario can be reproduced with the embedding of a $10$-dimensional sphere (Fig. \ref{fig2}\textbf{G}). Here, only the neural activities in the first $10$ PCs have non-negligible correlations with the latent sphere. In contrast, approximately half of the variance is in higher PCs that have little-to-no correlations with the LPU (Fig. \ref{fig2}\textbf{G}). We confirmed that this result could not be explained due to the non-smoothness of the embedding manifold, as the configurations used in this experiment preserved the distances in the embedded unit sphere (Fig. \ref{figs3}\textbf{B}). Thus, while the observations in \cite{manley2024simultaneous} can be due to the high-dimensional nature of the neural code, they can also stem from nonlinear embedding of low-dimensional coding variables into curved activity manifolds.

As a second experiment, we studied networks with dynamical embedding maps. Specifically, we considered noisy low-rank leaky firing-rate RNNs trained to perform the sequence sorting tasks (Fig. \ref{fig2}\textbf{H-I}). In this task, the RNNs are presented with a set of numbers, which they have to sort and output in the sorted order. With these RNNs, we found again an increased scaling of the linear dimensionality as a function of neurons (Fig. \ref{fig2}\textbf{H}) but not as a function of the rank, \textit{i.e.}, latent dimensionality (Fig. \ref{fig2}\textbf{I}). Moreover, these observations were robust with respect to the changing sequence length that needed to be sorted (Fig. \ref{figs3}\textbf{C-E}). Overall, this experiment provided evidence that the inherent structure of the LPU, not necessarily its rank, may set the rules of the dimensionality scaling, in line with the previous predictions about task dimensionality \citep{gao2015simplicity,gao2017theory}.

Finally, previous work has shown that full-rank RNNs with randomly sampled weights can show self-sustained time dynamics \cite{sompolinsky1988chaos}. We found that their low-rank counterparts, after appropriate weight adjustments (\textbf{Methods}), show similar behavior, in which nearby trajectories that start with only small distances apart can grow over time (Fig. \ref{figs3}\textbf{F}). For these networks, the scaling between the number of neurons and the linear dimensionality became nearly linear in a log-log plot (Fig. \ref{figs3}\textbf{G}). Though saturation is quite possible when more neurons are used to embed the latent variables, this simulation shows an example of a locally linear increase in linear dimensions of neural activities despite embedding a lower dimensional (randomly connected) dynamical system. In real networks, it is not far-fetched to think that a few ranks of the connectivity matrix may remain random, perhaps due to them not being purposed for a task yet. Thus, increased linear dimensionality can also stem from nonlinear embedding of low-dimensional \emph{random/unpurposed} LPUs.

\subsection{The geometry of the neural code: Principal, tuning, and coding dimensions}

The geometry of neural activity manifolds fundamentally determines what can be extracted from neural recordings about coding subspaces. At one extreme, neural activities may lie on hyperplanes that match the latent dimensionality. This simplifying assumption has been made by most existing models, including rSLDS \cite{linderman2017bayesian,vinograd2024causal,nair2023approximate,mountoufaris2024line,hu2025modeling}, leaky current low-rank RNNs \cite{dubreuil2022role,valente2022extracting,beiran2021shaping,mastrogiuseppe2018linking}, and the latest work on latent circuits \cite{Langdon2025}. These works define a linear constraint from latent variables to neural activities ($r(t) = M \kappa(t)$ for some $M \in \mathbb R^{N_{\rm rec} \times K}$). In the resulting flat geometries, three key types of dimensions align: principal dimensions (those with highest variance), tuning dimensions (those including neurons tuned to latent variables), and coding dimensions (those encoding LPUs). 

However, biological reality is more complex. The brain exhibits high redundancy \cite{ebrahimi2022emergent}, and not all neurons tuned to a particular variable are causally linked to it \cite{feinberg1978efference}. A more realistic scenario, as illustrated in Fig. \ref{fig2}\textbf{E-I}, is that neural activities reside on ``curved'' neural activity manifolds—those with higher dimensionality than the underlying LPU. While mathematically challenging (as principal, tuning, and coding dimensions no longer align), models capable of incorporating these flexible geometries are more general and may better capture biological computation. Fortunately, LPUs provide a rigorous framework for understanding and studying the relationship between these three experimentally relevant neural activity dimensions. In networks with linear encoding weights $n^{(p)}$ for $p=1,\ldots, K$, coding dimensions are defined by the span of these encoding weights, as deviations in neural activities orthogonal to them decay exponentially with no little-to-no effects on the LPU operation (Theorem \ref{thm4}, see below). The nonlinear embedding introduces curvature that inflates principal dimensions, whereas encoding-embedding relationship in Eq. \eqref{eq1} decouples coding dimensions from those tuned to latent variables.

Now, we illustrate the independency of tuning and coding in LPUs by conducting simulations with RNNs (Fig. \ref{fig3}). For RNNs defined in Eq. \eqref{eq_lfrnn}, we can extract the tuning curves of neural activities as follows (\textbf{Methods}): 
\begin{equation} \label{eq_tuning}
r_i(\kappa) = \tanh\left(\sum_{p=1}^K m^{(p)}_i \kappa_p\right), 
\end{equation}
where the $i$th component of the embedding weights, $m^{(p)}_i$, determines the tuning properties of the $i$th neuron with respect to the latent variables $\kappa \in \mathbb{R}^K$. To omit complicating our illustration with linear identifiability discussions, without loss of generality (\textbf{Methods}), we assume that $\kappa$ are aligned with behaviorally relevant variables (\textit{e.g.}, the flip-flop outputs in Fig. \ref{fig1}). 

The expression in Eq. \eqref{eq_tuning} clearly demonstrates that the encoding weights ($n^{(p)}$) do not influence the tuning curves; rather, tuning properties—whether neurons are sparse or mixed-selective—are solely determined by the structure of the embedding weights and the nonlinearity. For instance, consider a well-trained RNN with $N_{\rm rec}$ neurons. A ``redundant'' neuron $j$ can be added to this network such that $m^{(p)}_j \neq 0$ but $n^{(p)}_j = 0$ for $p = 1, \ldots, K$. This redundant neuron would reflect latent variable values in its activities, \textit{i.e.}, be tuned to them. However, zero encoding weight means zero causal contribution to the LPU (Fig. \ref{fig3}\textbf{A}), \textit{i.e.}, the network dynamics would remain invariant under the addition of (and any interventions through) this redundant neuron.

In practice, one brain region may use redundant neurons to convey information about its internal computations to another. However, theoretically, such neurons are unnecessary for performing a specific task, as they do not directly contribute to the computation. To test this hypothesis, we reanalyzed the encoding, embedding, and input weights from the RNNs trained on the 3-bit flip-flop task (Fig. \ref{figs1}\textbf{C}). Across neurons, these three weights exhibited strong correlations (Pearson’s $r = 0.87$, $0.90$, and $0.97$). We observed the existence of neurons with near zero encoding and embedding weights, \textit{i.e.}, a subset was neither involved in computation nor showed tuning to internal states. However, these RNNs contained few, if any, redundant neurons that were tuned to internal variables without contributing to the LPU (Fig. \ref{figs1}\textbf{C}). Thus, as expected, when the goal was to solve a specific task and nothing else, artificial neurons were either coding or had no contribution, but were not redundant.

To demonstrate how studying only the tuning curves might mask the redundant neurons in biological networks, we next incorporated redundant neurons into a representative RNN. Specifically, we modified the RNN from Fig. \ref{fig1}\textbf{D-F} by adding 100 redundant neurons (\textbf{Methods}). These neurons were given new embedding weights, $m^{(p)}$, while their encoding and input weights were set to zero. In this construction, it is theoretically possible to prove that redundant neurons have no effect on the latent computation, as there are no connections between the redundant neurons or from the redundant neurons back to the original population (Fig. \ref{fig3}\textbf{B}). Yet, by Eq. \eqref{eq_tuning}, these neurons are guaranteed to be tuned to $\kappa(t)$, the same way the original coding population does.

To introduce additional complexity to our empirical analysis and show how connectomics may mask functional connections in LPUs, we sampled a random weight matrix, with roughly twice the standard deviation of the structured component, and added it to the low-rank component (Fig. \ref{fig3}\textbf{B}). This led to substantial connections within and between coding and redundant neurons, \textit{i.e.}, redundant neurons now influenced the activities of coding neurons. Nonetheless, while we observed expected state changes when coding neurons were subjected to external perturbations (Fig. \ref{fig3}\textbf{C}, \emph{left}), perturbing redundant neurons did not impact the internal state. Furthermore, once the perturbation on the redundant population was lifted, the network quickly reverted to its original memory state (Fig. \ref{fig3}\textbf{C}, \emph{right}). Hence, even though the two populations were densely connected to each other, the lack of (structured) encoding weights for the redundant neurons made them ineffective for perturbing the LPU dynamics. 

Further increasing the strength of the random components could entirely mask LPU operation, eventually reducing the RNN to a randomly connected network. Yet, even in this simple illustration, the LPU functional connections were obscured by random connections, but still operated faithfully (Fig. \ref{fig3}). The extent to which connectomics constrains network function remains an open question in the field \cite{ozdil2024centralized,lappalainen2024connectome}. LPUs offer a fresh perspective on this issue, highlighting the intricate interplay between structured and random components. A detailed investigation of this phenomenon is left for future work.

\subsection{Robustness of latent processing units to changing neural tuning curves}
A major advantage of LPUs is related to their causal modeling of neural and latent dynamics under external perturbations, which involves changes in the synaptic weights between neurons. Beyond focusing on behavioral readouts in the presence of drift \cite{deitch_representational_2021,driscoll2022representational}, LPUs allow studying the robustness of latent dynamics \emph{and} linear readouts jointly. In our framework, the decoupled representation of encoding and embedding provides a simple path for studying how changes in tuning curves affect network operations. Specifically, the tuning relationship described in Eq. \eqref{eq_tuning} enables a novel approach to simulating representational drift in RNNs: modify a particular weight vector ($m^{(d)} \to m^{(d)} + \Delta m$) while maintaining a fixed set of encoding weights ($n^{(p)}$ for $p = 1, \ldots, K$). Theoretical analysis reveals that LPU dynamics can remain approximately invariant to drift when either of the two conditions are met (Theorem \ref{thm5}; \textbf{Methods}): when small changes $\Delta m$ are either purely random or when they are confined to an $N_{\rm rec} - K$ subspace orthogonal to the encoding subspace (Fig. \ref{fig4}\textbf{A}). Below, we demonstrate these cases by simulating representational drift in RNNs performing $K$-bit flip-flop tasks, which maintain $2^K$ attractive fixed-points in their LPUs. 

To start with, we illustrate a drift simulation in Fig. \ref{fig4}\textbf{B}, using the example RNN from Fig. \ref{fig1} performing a 2-bit flip-flop task. After aligning the latent variables so that each corresponds to a specific output, we can examine the embedding weights to reveal the tuning properties of individual neurons. Notably, only a few neurons exhibit significant tuning to the flip-flop states (black dots; Fig. \ref{fig4}\textbf{B}), with each tuned to a specific input. After a mild drift is applied to the embedding weights (blue dots; Fig. \ref{fig4}\textbf{B}), most neurons altered their tuning properties, with some even developing mixed-selective tuning, as shown in the tuning curves of neuron 59 before and after the drift (Fig. \ref{fig4}\textbf{B}). 

Next, using this mechanism, we simulated representational drift in large-scale RNNs in three distinct ways (\textbf{Methods}): (i) with fully random $\Delta m$ sampled from a Gaussian distribution with zero mean and $g_{\rm drift}$ s.d., (ii) with $\Delta m$ confined to the subspace orthogonal to the encoding, and (iii) with $\Delta m$ tangential to the encoding subspace (Figs. \ref{fig4}\textbf{C-J}). Latent dynamics are theoretically predicted to remain stable under mild drifts of the first two types, but not the third. To test this, we simulated representational drift by varying $g_{\rm drift}$ from $10^{-3}$ to $10$. Given that the extreme values for the embedding weights are approximately $\sim \pm 1$ (Fig. \ref{figs1}\textbf{C}), we expect that at around $g_{\rm drift} = 1$, the tuning properties of individual neurons will show significant changes. We confirmed this prediction by considering the fraction of neurons retaining their coding properties as a function of $g_{\rm drift}$ (Fig. \ref{fig4}\textbf{C}). Consequently, a network that remains invariant to drift at this level can be considered robust. Since the drifts were applied to the embedding weights, a behavioral readout (dependent on invariant encoding weights) defined before drift would be expected to remain accurate if the LPU dynamics remained unaffected by drift. Thus, we assessed robustness by computing state estimation accuracies derived from latent variables aligned with network outputs (\textbf{Methods}). Specifically, we assigned network states based on the positive and negative values of $\kappa$ and compared the internal state to the target state (\textbf{Methods}). The drift tolerance threshold ($g_{\rm half}$) was defined as the transition point from a sigmoidal fit on the accuracy vs. $g_{\rm drift}$ curve. 

In our experiments, RNNs with as few as a few hundred neurons exhibited robustness to drifts orthogonal to encoding subspaces, while robustness to fully random drift emerged only when networks had a few thousand neurons (Fig. \ref{fig4}\textbf{D}). Unsurprisingly, even networks with up to 300,000 neurons were not robust to drifts within encoding subspaces (Fig. \ref{fig4}\textbf{D}). Inspired by this observation, we aimed to develop a mechanistic understanding of robustness (or lack thereof) to drift in encoding orthogonal and tangential directions. We revisited the example RNN from Fig. \ref{fig1}, which was trained to perform a 2-bit flip-flop task and possessed four attractive fixed points corresponding to $2^2 = 4$ internal flip-flop states. We applied a strong encoding orthogonal ($g_{\rm drift} = 0.8$, Fig. \ref{fig4}\textbf{E-F}) and a mild encoding tangential ($g_{\rm drift} = 0.1$, Fig. \ref{fig4}\textbf{G-H}) drifts to this network. In line with Fig. \ref{fig4}\textbf{C}, the strong orthogonal drift had a more pronounced effect on the tuning properties of individual neurons (Fig. \ref{fig4}\textbf{E, G}), yet the mild tangential drift had a significant qualitative impact on the LPU dynamics. Specifically, under strong encoding orthogonal drift, although the attractive fixed points shifted closer to the origin, the LPU retained all four internal states (Fig. \ref{fig4}\textbf{F}). In contrast, even a mild encoding tangential drift resulted in a complete restructuring of the LPU, \textit{i.e.}, the LPU underwent bifurcations that eliminated all four fixed-point attractors (Fig. \ref{fig4}\textbf{H}). These findings align with Theorem \ref{thm5} (\textbf{Methods}), which suggests that changes in embedding weights orthogonal (but not tangential) to the encoding subspace have minimal first-order effects.

Finally, we considered the effects of further increasing the neuron counts on the three distinct types of drift (Fig. \ref{fig4}\textbf{I-J}). As expected, the encoding tangential drift (but not others) significantly affected the latent code, regardless of the number of neurons (Figs. \ref{fig4}\textbf{I}). Interestingly, partial observation from RNNs with up to 300,000 neurons still revealed qualitative robustness to drift, though the quantitative extent of drift tolerance was masked (Fig. \ref{fig4}\textbf{J}).  Thus, these simulations confirmed that even when observing only 100 out of thousands of neurons, a linear readout from these subpopulations could still be robust to drift.

\section{Discussion} \label{sec:discussion}
\subsection{Combining past and present: from tuning curves to population codes}

For over two decades, researchers have explored population codes, where collective neural activities jointly encode computationally and behaviorally relevant variables \cite{abbott1999effect}. While subsequent theoretical work examined the linear information extractable from correlated neural populations \cite{moreno2014information}, these models primarily focused on the readouts from collective dynamics rather than developing a comprehensive theory from fundamental principles \cite{sussillo2014neural}. The field has advanced significantly in this decade, focusing on identifying the complete set of coding latent variables extractable from neural populations \cite{depasquale2023centrality,zimnik2024identifying,schneider2023learnable}. However, since these models reconstruct neural activities without enforcing dynamics, they provide an incomplete, under-determined picture of computation through collective dynamics (\textbf{Methods}). To bridge this gap, recent work utilizing dynamical models has fitted latent dynamical systems to reproduce neural recordings, leading to several scientific breakthroughs \cite{liu2024encoding,valente2022extracting,vinograd2024causal,o2022direct,abbaspourazad2024dynamical}. Yet, existing approaches have fallen short in providing a rigorous and physiologically motivated theoretical account of the neural processing units we seek to identify from large-scale neural recordings.

The LPUs we introduce in this work build upon and significantly extend a recent line of seminal research in computational neuroscience. Previous work has shown that low-rank decomposition, similar to Eq. \eqref{eq_lowrank}, leads to latent dynamical systems in a particular (constrained; \textbf{Methods}) RNN architecture \cite{beiran2021shaping,dubreuil2022role,mastrogiuseppe2018linking,valente2022extracting}. These studies established important foundations by relating low-rank connectivity between individual neurons with computations in lower-dimensional subspaces and proving a universal approximation theorem analogous to our Theorem \ref{thm1}. They also demonstrated that deviations orthogonal to the encoding subspace decay exponentially (as in our Theorem \ref{thm4}), hence the low-dimensional coding subspace. Yet, this line of work focuses on a single RNN architecture, in which latent dynamical systems are obtained through a constraint that enforces a linear map from latent variables to neural activities \cite{beiran2021shaping,dubreuil2022role,mastrogiuseppe2018linking,valente2022extracting}. On the other hand, low-rank structures can be abundantly found in the physical world \cite{thibeault2024low}. With this insight, LPUs generalize the low-rank RNNs by focusing on a broader class, with a flagship architecture in Eq. \eqref{eq_lfrnn} that differs from the RNN architecture studied in this literature. 

Our framework introduces a principled class of models derived from biological constraints—specifically, those exhibiting linear encoding and nonlinear dynamical embedding properties (\textbf{Methods}). These models demonstrate several crucial capabilities: they support universal computations in their LPUs (Fig. \ref{fig1}), allow linear readouts to become optimal for communication between brain regions (Fig. \ref{fig2}; top), account for seemingly high-dimensional curved neural manifolds (Fig. \ref{fig2}; bottom), and maintain robustness both to targeted perturbations (\textit{e.g.}, stimulation experiments in Fig. \ref{fig3}) and to variations in synaptic connections (\textit{e.g.}, representational drift in Fig. \ref{fig4}). While previous studies have proposed separate phenomenological models to explain some individual aspects of these observations \cite{driscoll2022representational,stringer2019high,bounds2024network}, LPUs provide a rigorous theoretical framework that derives these (and more) properties from a first principle: computation in large populations of neurons emerges from low-dimensional universal LPUs, whose time evolution does not explicitly depend on any individual neuronal activities. In summary, our findings in Theorems \ref{thm2}, \ref{thm5} and Propositions \ref{prop1} and \ref{prop2} do not have a counterpart in the existing low-rank RNN literature (though the original proof on full-rank RNNs \cite{schafer2006recurrent} establishes a similar connection as in Theorem \ref{thm2}; \textbf{Methods}), whereas our Theorem \ref{thm4} enables coding subspaces to be curved as opposed to the hyperplanes assumed by existing work \cite{beiran2021shaping,dubreuil2022role,mastrogiuseppe2018linking,valente2022extracting}. With these theorems, LPUs operationalize the \emph{broadly} observed low-rank connections (\cite{thibeault2024low}) as the link between structure and function in networks, and thereby constitute a formal framework to explain the ``neural manifold hypothesis.''

\subsection{Direct empirical tests of latent processing units}

The latent computation framework makes several experimentally testable predictions that can be validated through a two-step experimental approach. The main testable prediction concerns the dimensionality of effective perturbations. While neural populations often exhibit high-dimensional activity patterns when analyzed with methods like PCA (scaling with the number of neurons, $N_{\rm rec}$), our theory predicts that effective perturbations—those that meaningfully influence behavior—should be confined to a much lower-dimensional space (equal to $K$, Fig. \ref{fig3}). This prediction becomes particularly relevant with recent advances in single-cell manipulation techniques. First, using optogenetic manipulations and targeted interventions, we can test whether effective perturbations of neural activity are confined to a low-dimensional subspace. Second, by examining how perturbations decay along orthogonal dimensions, we can determine whether the high linear dimensionality observed in neural recordings arises from nonlinear embedding of low-dimensional latent dynamics. While these experiments would provide direct evidence for our theoretical framework, robustly constraining the rank of the perturbation will require new generation single-cell intervention tools that allow fine-grained  optogenetic access to thousands of individual cells.

Beyond these interventional tests, our theory makes specific predictions about neural manifold geometry. Even in seemingly simple tasks, we predict that neural activities will reside in curved manifolds rather than linear hyperplanes. Unlike existing approaches such as recurrent switching linear dynamical systems (rSLDS) \cite{linderman2017bayesian,vinograd2024causal,nair2023approximate,mountoufaris2024line,hu2025modeling} or leaky current low-rank RNNs \cite{dubreuil2022role,valente2022extracting,beiran2021shaping,mastrogiuseppe2018linking}, which implicitly assume minimal curvature, our theory explicitly predicts non-zero curvature in the neural manifold (Fig. \ref{fig2}). Preliminary evidence exists in spatial navigation, where grid cells form toroidal manifolds \cite{gardner2022toroidal,hermansen2024uncovering}. While previous work has characterized linear dimensionality in passive viewing tasks \cite{stringer2019high}, manifold curvature remains  underexplored, partly due to challenges in estimating manifold dimensionality under noisy conditions (for instance, recent works mitigated this issue by assuming the existence of a particular latent structure \cite{bjerke2022understanding,acosta2023quantifying,klindt2023topological}). That being said, recent tools for analyzing nonlinear geometric features \cite{fortunato2024nonlinear} provide supporting evidence for our predictions and enable more complex tests.

\subsection{Insights for experimental systems neuroscience}

In connecting our framework to classical tuning curve analyses \cite{o1976place,hubel1959receptive}, we uncover a critical distinction: while it is often easier to determine neurons' tuning properties (\textit{i.e.}, their embedding weights $m^{(p)}$), these properties may not reflect their functional impact on behavior. Statistical analyses based on tuning properties identifies neurons with large embedding weights ($M$), but these neurons may reside outside the coding subspace that directly influences the LPU (Fig. \ref{fig3}). Instead, for effective perturbation of the latent state, we must target neural activities within the encoding subspace defined by $n^{(p)}$ (Theorem \ref{thm4}). This prediction is supported by recent findings \cite{bounds2024network}, where perturbations based on tuning curves failed to disrupt behavior, while targeting neurons based on their network contributions did (see also \cite{posani2024rarely}).

Our theoretical framework also provides specific predictions about hub neurons, which are characterized by extremely high connectivity, with local excitatory inputs and long-range projections \cite{bocchio2020hippocampal}. These neurons are uniquely positioned to store and communicate LPU outputs, with their anatomical organization suggesting a natural implementation of the encoding-embedding separation: local inputs could provide access to the encoding subspace, while distant projections enable broadcasting of encoded information. Our theory predicts that for hub neurons to track latent variables, they likely operate in a linear regime of their activation functions (see \cite{lafosse2024cellular}) —a direct consequence of the linear encoding assumption (Fig. \ref{fig1}). This prediction is particularly noteworthy and somewhat counterintuitive because neurons typically operate nonlinearly, yet maintaining linear encoding would require hub neurons to remain within a more restricted, linear range.

The scale-dependence of neural computation provides another key insight. While RNNs can solve many behavioral tasks with few neurons \cite{dubreuil2022role}, biological systems employ millions of neurons for seemingly simple decisions \cite{ebrahimi2022emergent}. This apparent paradox is resolved by our framework's prediction that larger populations are essential for two aspects: maintaining precise control over extended temporal sequences and ensuring robustness to representational drift. Specifically, while networks become increasingly robust to random drift as they grow (Fig. \ref{fig4}), smaller networks require strict orthogonality between drift dimensions and encoding weights—a constraint whose biological plausibility remains uncertain \cite{driscoll2017dynamic}. Reconciling the process with which the drift occurs (random or structured; if latter, encoding orthogonal or tangential) is crucial for a mechanistic explanation of how neural circuits maintain functional stability despite continuous cellular changes. 

Another interesting future direction involves extending our analysis to spiking recurrent neural networks (sRNNs). Prior work introduced a training paradigm in which latent factors—interpreted as time series data—serve as computationally central elements, providing structured targets for robust and flexible training of spiking networks \cite{depasquale2023centrality}. However, it remains unclear whether a self-sufficient and universal latent dynamical system can emerge within such spiking architectures, which is the precursor of the LPUs we introduced here. Moreover, future work will be needed to determine whether LPUs in sRNNs, if they exist, can achieve similar levels of generality and explanation power of empirical phenomena observed in this work, while adhering to the constraints of discrete spike-based communication. Nevertheless, incorporating LPUs into biologically realistic spiking networks could bridge the gap between theoretical models and neural circuit implementations, offering a promising avenue for further biological investigation.

Finally, LPUs, if implemented by biological networks, have direct implications for the development of brain-machine interfaces (BMIs) \cite{safaie2023preserved,degenhart2020stabilization}. As we discussed in this work, while biological systems require large numbers of neurons to sustain complex dynamics, manipulating the information stored in these systems could be achieved through relatively low-dimensional control signals. Specifically, it may be possible to interfere with low-dimensional LPUs by choosing the right group of neurons and in a direction aligned with their encoding weights. Moreover, the linearity of encoding further suggests that local field potential recordings—arising from the linear summation of neural activity—may be as effective as single-unit recordings in decoding information from the brain \cite{jackson2016decoding}, especially if the encoding weights have spatial smoothness properties. This insight implies that high-performance BMIs may require, not necessarily increased spatial resolution, but theoretical advances that enable extracting drift-robust encoding dimensions. Overall, LPU-based algorithms, designed to track the encoding subspaces and compensate for occasional drifts, could significantly improve BMI stability and performance.

To conclude, we have introduced a rigorous \emph{dynamical} theory of neural codes carried out collectively by large neural populations. While some concepts we studied were partly described in experimental and computational neuroscience, our work provides the first rigorous theory that generalizes existing approaches and connects experimental paradigms. The key insight—the decoupled encoding-embedding relationship in Eq. \eqref{eq1}—allows us to explain diverse phenomena under a single theoretical umbrella while providing empirically testable predictions that could conclusively evaluate the neural manifold hypothesis \cite{yuste2015neuron}.

\section*{Acknowledgements}
We thank Vasily Kruzhilin, Dr. Itamar Landau, Mert Yuksekgonul, Dr. Kanaka Rajan, Dr. Sarah Kushner, Mathilde Papillon, and Dr. Ali Cetin for insightful discussions, and Xavier Gonzalez, Dr. Adrian Valente, and Dr. Chong Chen for their helpful feedback on highlighting the relevant work. FD expresses gratitude for the valuable mentorship he received at PHI Lab during his internship at NTT Research. MJS gratefully acknowledges funding from the Simons Collaboration on the Global Brain and the Vannevar Bush Faculty Fellowship Program of the U.S. Department of Defense. FD receives funding from Stanford University's Mind, Brain, Computation and Technology program, which is supported by the Stanford Wu Tsai Neuroscience Institute. This research was supported in part by grant NSF PHY-2309135 and the Gordon and Betty Moore Foundation Grant No. 2919.02 to the Kavli Institute for Theoretical Physics (KITP).

\clearpage
\newpage

\setcounter{equation}{0}
\setcounter{theorem}{0}
\setcounter{lemma}{0}
\setcounter{definition}{0}
\setcounter{corollary}{0}

\renewcommand\theequation{M\arabic{equation}}

\section*{\LARGE Methods}

\section*{A theory of the latent processing units}

What distinguishes recurrent neural networks (RNNs) from their feed-forward counterparts is the ability to model temporally varying computation. Studying temporal dynamics requires understanding dynamical systems, which is our starting point as we present the mathematics behind our latent processing units (LPUs). This exposition yields an \emph{interpretable} and \emph{rigorous} definition of LPUs, and generalizes existing work on a restricted class of low-rank RNNs (\cite{mastrogiuseppe2018linking,beiran2021shaping,dubreuil2022role,valente2022extracting}) and recurrently switching linear dynamical systems (rSLDS) \cite{linderman2017bayesian,vinograd2024causal,nair2023approximate} to a broad class of artificial and biological networks. 

Latent computation framework builds on a significant assumption: Neural computation results from the time evolution of a high-dimensional dynamical system, which follows a generic set of equations:
\begin{equation}
    \dot r(t) = F(r(t),u(t);\tilde W),
\end{equation}
where $r(t) \in \mathbb R^{N_{\rm rec}}$ are the state variables of the dynamical system; $u(t) \in \mathbb R^{N_{\rm in}}$ are inputs provided by the user to control the system; $\tilde W$ (of arbitrary dimensions) are the parameters of the system and $F: \mathbb R^{N_{\rm rec}+N_{\rm in}} \to \mathbb R^{N_{\rm rec}}$ is the vector field that defines the time dynamics. Here, the state variables $r(t) \in \mathbb R^{N_{\rm rec}}$ can be thought of as representing the activity of $N_{\rm rec}$ neurons. The parameters of the dynamical system, $\tilde W$, would then be the combinations of synaptic connections between the $N_{\rm rec}$ neurons and the weighting of the outside inputs. Throughout this work, we use $W^{\rm rec}$ to refer to the former, and $W^{\rm in}$ to refer to the latter. Lastly, $F(.)$ would then define the time dynamics of neural activities in the biological network.

\subsection*{\color{subsectioncolor}A dynamical encoding-embedding framework of neural computation}

In many RNNs, individual units, \emph{e.g.}, neurons, have pre-defined input-output relationships for transforming electrical currents into membrane potentials and subsequent action potentials \cite{kandel2000principles}. Even though these individual units are constrained, their populations can provide complex network outputs and model dynamics of experimentally observed variables \cite{schafer2006recurrent}. In this work, to formalize the ability of a broad class of RNNs' to model low-dimensional dynamical systems, we start by proposing the concepts of encoding and embedding in dynamical systems:
\begin{definition}[Encoding and embedding in dynamical systems]\label{def:embedding}
Let $ r \in \mathbb{R}^{N_{\rm rec}}$ be the state variables of a high-dimensional dynamical system $\dot  r = F(r,u)$, with $F(.)$ its defining vector field, and input variables $u(t) \in \mathbb R^{N_{\rm in}}$. Let $\kappa \in \mathbb{R}^K$ be state variables that follow another, lower-dimensional, dynamical system $\dot \kappa = g(\kappa,u)$, with $g(.)$ its defining vector field, and $K \ll N_{\rm rec} + N_{\rm in}$. Then, the dynamical system defined by $\kappa(t)$ is said to be an encoding of the dynamical system $r(t)$, if two conditions are satisfied:
\begin{enumerate}
    \item There exist a differentiable encoding map $\phi: \mathbb{R}^{N_{\rm rec}} \to \mathbb{R}^K$ and an embedding map $\varphi: \mathbb{R}^{K+N_{\rm in}} \to \mathbb R^{N_{\rm rec}}$ s. t.:
    \begin{equation} \label{eq:encoding}
\begin{split}
      \kappa(t) &= \phi(r(t)), \\
        \tau \dot r(t) &= -r(t) + \varphi(\kappa(t),u(t)),
\end{split}
\end{equation}
    \item The encoding and embedding maps satisfy the following property such that $\kappa(t)$ constitutes a dynamical system:
\begin{equation} \label{eq:embedding_condition}
    \dot \kappa(t) := \nabla \phi(r(t)) \; \dot r(t) =  \underbrace{\nabla \phi(r(t))}_{\text{depends on } r(t)} \underbrace{\left[ - r(t) + \varphi(\kappa(t),u(t))  \right]}_{\text{depends on } r(t)} = \underbrace{g(\kappa(t),u(t))}_{r(t) \text{ independent}},
\end{equation}
\end{enumerate}
where $\tau$ is neuronal decay times, $[\nabla \phi(r)]_ {ij} = \partial \phi_i(r) / \partial r_j$ is the Jacobian of $\phi(r)$, and $g(\kappa(t),u(t))$ does not explicitly depend on $r(t)$.
\end{definition}

This definition states that for a set of variables, $\kappa(t)$, to be considered as embedded in a high-dimensional dynamical system, $r(t)$, their time evolutions should both be self-consistent (with the exception of outside input variables, which are assumed to be pre-defined) and the former dynamical system should be recoverable from the latter. Although the encoding is a map, $\phi$, between state variables, we emphasize that the embedding $\varphi$ only constrains the time derivative $\dot r(t)$, allowing redundant representation, \textit{i.e.}, more than one $r(t)$ can correspond to the same $\kappa(t)$. In contrast, a static embedding function of the form $r(t) = h(\kappa(t))$ would uniquely define an $r(t)$ corresponding to a particular $\kappa(t)$. Hence, we will refer to $\varphi(\cdot)$ as a \textit{dynamical} embedding.

There are several trivial pairs of dynamical systems that can satisfy these encoding-embedding relationships. For example, consider $\kappa(t) = \phi(r(t)) = C$  with some constant $C \in \mathbb R^K$. Notably, this would satisfy the constraints with $\dot \kappa(t) = 0$ and leads to a leaky integrator for the transformed input, $A \dot r(t) = -r(t) + \varphi(u(t))$. However, this trivial latent dynamical system cannot solve complex computational tasks, barring some that require trivial integrations. Therefore, our definition of ``a latent processing unit," \textit{i.e.}, the putative substrate of neural computation, cannot solely depend on the ability of the latent dynamical system to satisfy the encoding and embedding relationship. Yet, as we discuss below, the conditions that $\kappa(t)$ is self-consistent and accessible from $r(t)$ constitute important steps toward rigorously defining LPUs.

For the rest of this section, we show that the dynamical nature of the embedding function leads seamlessly to dynamical system definitions for both $\kappa(t)$ and $r(t)$. This represents a significant departure from the static reconstruction of observed neural activities, \textit{e.g.}, performed by traditional encoder-decoder models \citep{eliasmith2005unified,eliasmith2003neural,boerlin2013predictive}. The type of encoding and embedding maps, linear or nonlinear, divides the networks into four groups with distinct geometrical properties. The geometry then significantly impacts the statistical properties and complexity of the neural computation that can be performed by the dynamical systems (discussed below and summarized in Table \ref{tabs2}). For simplicity and since we are interested in computations that can be intrinsically achieved by the dynamical systems, we assume that there are no inputs (and biases) unless otherwise specified. We now study some (non-exhaustive) properties of dynamical systems belonging to the each of the four groups mentioned above. 

\subsubsection*{Linear encoding and linear embedding} 

Assuming that both the encoding and the embedding are linear leads to the following relationships:
\begin{subequations}
    \begin{align}
        \kappa(t) &= N r(t), \\
       A \dot r(t) &= - r(t) + M \kappa(t), 
    \end{align}
\end{subequations}
where $N \in \mathbb R^{K\times N_{\rm rec}}$, $A \in \mathbb R^{N_{\rm rec}\times N_{\rm rec}}$, and $M \in \mathbb R^{N_{\rm rec}\times K}$ are weight matrices. Unless otherwise stated, we always assume that these matrices are maximally ranked, \textit{e.g.}, $M$ is rank $K$. We refer to them as encoding weights, neuronal decay times (if $M = 0$, $r(t)$ decay to zero following the eigenvalues of $A$), and embedding weights, respectively. Here, in the absence of latent excitations, \emph{i.e.}, $\kappa = 0$, neural activities decay to zero following neuronal decay times stored in the diagonal $A$ matrix. Combining both equations, we obtain the following relationship for the neural activities:
\begin{equation}
    A \dot r(t) = - r(t) + M N r(t) \implies  \dot r(t) = - \underbrace{A^{-1}[I-MN]}_{\Gamma}r(t),
\end{equation}
where we lumped the combination of weight matrices into an effective decay matrix, $\Gamma$. This means that neural activities would follow a linear equation, with the solution $r(t) = \sum_{i=1}^N \alpha_i e_i \exp(-\lambda_i t)$ for some overlap $\alpha_i \in \mathbb R$ depending on the initial conditions, and eigenvalues ($\lambda_i \in \mathbb C$) and eigenvectors ($e_i \in \mathbb C^N$) of $\Gamma$. Consequently, the latent variables would also follow a linear equation, $\kappa(t) = \sum_{i=1}^N \alpha_i \tilde e_i \exp(-\lambda_i t)$, where we defined $\tilde e_i = N e_i$. However, this equation implies that $\kappa(t)$ can only decay, oscillate, or blow up. Hence, for $\kappa(t)$ to support a diverse set of computations, both encoding and embedding cannot be linear. 

\subsubsection*{Nonlinear encoding and linear embedding} 

Next, we consider the case where encoding is nonlinear and embedding is linear, with the following equations:
\begin{subequations}
    \begin{align}
        \kappa(t) &= \phi(r(t)), \\
      A  \dot r(t) &= - r(t) + M \kappa(t), 
    \end{align}
\end{subequations}
for some nonlinear encoding function, $\phi(.)$, and $A$ and $M$ defined as before. This type of encoding enforces some conditions on the nonlinearity, since the latent variables follow the equations:
\begin{equation}
    \dot \kappa(t) = \nabla \phi(r(t)) \dot r(t) = -  \nabla \phi(r(t)) A^{-1} r(t) +  \nabla \phi(r(t)) A^{-1} M \kappa(t).
\end{equation}
Currently, we do not know whether there are encoding and embedding functions that can produce closed-form solutions for $\kappa(t)$. However, for the sake of studying the properties of such a system, we consider the embedding equation closely. Specifically, we define $B = A^{-1}$, $C = A^{-1}M$, and multiply both sides with the matrix $e^{Bt}$:
\begin{equation}
    e^{Bt}\frac{\diff r(t)}{\diff t} = -e^{Bt}Br(t) + e^{Bt}C\kappa(t) \implies \frac{\diff(e^{Bt} r(t))}{\diff t} = e^{Bt}C\kappa(t).
\end{equation}
With this identity, the solution becomes:
\begin{equation}
    r(t) = \int_{-\infty}^t  e^{-B(t-t')}C \kappa(t') d t'.
\end{equation}
In other words, $r(t)$ becomes a weighted sum of latent variables, with some memory kernel $e^{B(t-t')}C$. When $|t-t'| \gg \lambda_B^{-1}$, where $\lambda_B$ is the smallest eigenvalue of $B$, the exponential term is nearly zero, and thus the non-negligible contribution to the integral comes from a region around $|t-t'| \sim \lambda_B^{-1}$. 

An important insight from biology is that several orders of magnitude separate the synaptic and behavioral time scales. Specifically, in biological networks, the synaptic time scales (related to the eigenvalues of $B$) are $O(ms)$, whereas the behavioral time scales for computation are $O(s)$. In mathematical terms, this separation of scales can be stated by assuming that $\kappa(t')$ would be approximately constant for $|t-t'|\sim \lambda_B^{-1}$ such that $\kappa(t') \approx \kappa(t)$. Then, the embedding state variables, $r(t)$, become a linear transformation of the latent variables via:
\begin{equation} \label{eq:linear_embedding}
    r(t) \approx M \kappa(t), \quad \text{where} \quad  \left[\int_{-\infty}^{t} e^{-B(t-t')} \diff t'\right] C = \left[\int_{0}^{\infty} e^{-Bt'} \diff t'\right] C = B^{-1}C = M.
\end{equation}
Thus, in the case of a full-rank linear embedding, the dimension of the linear subspace spanned by the state variables $r(t)$ is at most equal to the dimension of the linear subspace spanned by the latent variables $\kappa(t)$. Yet, since nonlinearity is provided by the (nonlinear) encoding by definition, this class of networks can support complex computation other than linear dynamics. However, whether such constructions, where $\kappa(t)$ can both be a nonlinear function of $r(t)$ but still have the same linear dimensionality as $r(t)$, are feasible remains to be explored.

\subsubsection*{Nonlinear encoding and nonlinear embedding}

A typical autoencoder model involves reconstructing neural activities using low-dimensional bottlenecks. Changing the output from neural activities to time derivative, to be consistent with the definition of a dynamical embedding, could enable defining a model with nonlinear encoding and nonlinear embedding. Since this relationship captures the linear encoding and nonlinear embedding case, which we show can lead to universal approximators, the resulting RNN architectures can allow complex computations. However, it is not exactly clear under which conditions the latent variables follow closed-form dynamical system equations. Therefore, we leave it to future work to study these models. 

\subsubsection*{Linear encoding and nonlinear embedding} 

This form of linear encoding and nonlinear embedding relationship leads to well-known recurrent neural network architectures, as well as many unknown versions. Specifically, as long as the neuronal decay times are homogeneous, \textit{i.e.}, $A = \tau I$, with linear encoding, defined as $\kappa(t) = N r(t)$ following the above definitions, the property in Eq. (\ref{eq:embedding_condition}) is satisfied regardless of the embedding map since:
\begin{equation}
     \dot \kappa(t) = \nabla \phi(r(t)) \; \dot r(t) =  N \dot r(t) = -\frac{1}{\tau}\kappa(t) + \frac{1}{\tau} N\varphi(\kappa(t),u(t)) =g(\kappa(t),u(t)).
\end{equation}
As we show below, even the choice of a simple linear + nonlinear form for the embedding map can endow the network with a universal approximation property. As we discuss below, this form of encoding-embedding relationship can also explain several existing approaches to dynamical modeling of neural activities.

\subsection*{\color{subsectioncolor}Existing models in encoding-embedding framework}

As an illustration of our framework, we first study the static autoencoder models, which are defined with a direct map from latent variables to neural activities, in contrast with our dynamical embedding mapping latent variables to the time derivative of the neural activities. Then, we reproduce the prior results on low-rank RNNs \cite{mastrogiuseppe2018linking,dubreuil2022role,beiran2021shaping,valente2022extracting}, and later a deterministic limit of recurrently switching linear dynamical systems \cite{linderman2017bayesian,vinograd2024causal,nair2023approximate}.

\subsubsection*{Static autoencoder models}

A reasonable question one might ask is why define the embedding in a dynamical manner as we have done in Definition \ref{def:embedding}. To answer this question, we now consider a more traditional autoencoder model, assuming zero input for simplicity. In this model, the embedding map is static and therefore becomes a ``decoder'' of the neural activities (not of their time dynamics) from the latent variables:
\begin{equation} 
\begin{split}
      \kappa(t) &= \phi(r(t)), \\
        r(t) &= \tilde \varphi(\kappa(t)).
\end{split}
\end{equation}
As we noted in Eq. \eqref{eq:embedding_condition}, the time derivative of the latent variable depends on $\dot r$ following:
\begin{equation}
    \dot \kappa(t) = \nabla \phi(r(t)) \dot r(t).
\end{equation}
In Definition \ref{def:embedding}, since the dynamical embedding constrains $\dot r(t)$, fitting the model automatically provided the time evolution for both $\kappa(t)$ and $r(t)$. But, if we assume a static embedding, then we simply find:
\begin{equation}
    \dot r(t) = \nabla \tilde \varphi(\kappa(t)) \dot \kappa(t) \implies \dot r(t) = \nabla \tilde \varphi(\kappa(t)) \nabla \phi(r(t)) \dot r(t).
\end{equation}
This is a tautology, following from the implicit assumption that $r(t) = \tilde \varphi(\kappa(t)) = \tilde \varphi(\phi(r(t)))$. In other words, such a static model construction provides no information about how latent variables or, equivalently, how neural activities evolve in time. Then, as long as $\dot \kappa(t)$ follows a self-sufficient set of equations ($\dot \kappa(t) = g(\kappa(t))$ for some $g$), it is trivial to prove that for any pair of static maps $\tilde \varphi$ and $\phi$, one can always find an equivalent $\varphi$ satisfying Definition \ref{def:embedding} following:
\begin{equation}
    \dot r(t) = \nabla \tilde \varphi(\kappa(t)) \dot \kappa(t) \implies \dot r(t) = -r(t) + \underbrace{ \tilde \varphi(\kappa(t)) + \nabla \tilde \varphi(\kappa(t)) g(\kappa(t))}_{\varphi(\kappa(t))}.
\end{equation}
Thus, though it is possible to fit a post hoc model on a statically learned latent variable system such that $\dot \kappa(t) = g(\kappa(t))$, the resulting fit would be subsumed by the family of the dynamical embedding maps we introduced in Definition \ref{def:embedding}. Another relevant work enforces linear evolution in the latent variables and trains an autoencoder model for the encoding and (static) embedding maps \cite{abbaspourazad2024dynamical}, which is limited in its modeling capabilities due to the linearity assumption. However, the encoding-embedding framework, which enables the joint identification of latent variables and their temporal dynamics, contains these (and potentially more) solutions by definition, is therefore a more general approach to modeling latent variables with self-sufficient time dynamics. Below, we show two such examples that effectively enforce a linear map between latent variables and neural activities, $r(t) = \tilde \varphi(\kappa(t)) = C \kappa(t) + b$ for some linear parameters $C$ and $b$.

\subsubsection*{Leaky current RNNs}
Next, we reproduce the previous work on low-rank leaky current RNNs within the encoding-embedding framework:
\begin{equation} \label{eq:lcrnn}
    \tau \dot r (t)= -r(t) + W^{\rm rec}\tanh(r(t)) + W^{\rm in} u(t) +b,
\end{equation}
where $\tau \in \mathbb R$ is the neuronal time scales, $u(t) \in \mathbb{R}^{N_{\rm in}}$ a vector of inputs, $W^{\rm rec} \in \mathbb R^{N_{\rm rec} \times N_{\rm rec}}$, $W^{\rm in} \in \mathbb R^{N_{\rm rec} \times N_{\rm in}}$, and $b \in \mathbb R^{N_{\rm rec}}$ are weights and bias parameters. 

Previous work has established that this architecture, when subjected to a low-rank constraint such that $W^{\rm rec} = M N$ with $M \in \mathbb R^{N_{\rm rec}\times K}$ and $N \in \mathbb R^{K\times N_{\rm rec}}$ defined as before, can support a self-sustained latent dynamical system if neural activities are constrained in low-dimensional subspaces (reasons for which will be clear below) \cite{beiran2021shaping,valente2022extracting}. For completeness, following \cite{valente2022extracting}, we briefly summarize the mathematical steps. Recent work has introduced a set of latent variables, $\kappa(t) \in \mathbb R^K$, following \cite{beiran2021shaping}:
\begin{equation} \label{eq_s17}
    r(t) = M \kappa(t) +  W^{\rm in} v(t) + r_\perp(t) + b,
\end{equation}
where $v(t) \in \mathbb R^{N_{\rm in}}$ is a function of inputs, \textit{i.e.}, $\dot \tau v(t) = -v(t) + u(t)$, and $r_\perp(t)$ is orthogonal to the linear subspace,  spanned jointly by the column vectors of $M$ and $W^{\rm in}$. Inputting this into Eq. (\ref{eq:lcrnn}), Ref. \cite{beiran2021shaping} arrives at the set of equations:
\begin{subequations} \label{eq:latent_equations_lcrnn}
    \begin{align}
        \tau  \dot \kappa(t) &= - \kappa(t) + N \tanh(r(t)), \\
        \tau \dot v(t) &= - v(t) + u(t) , \\
        \tau \dot r_\perp (t) &= -r_\perp (t).
    \end{align}
\end{subequations}
The last equation implies that any perturbation in the orthogonal direction would die out exponentially. Though, any finite deviation, $r_\perp (t)$, means that the first equation is not self-sufficient since $r(t)$ depends on $r_\perp(t)$ in addition to $\kappa(t)$ and $v(t)$. The second equation simply suggests that a transformed version of the inputs enters the latent dynamical system. 

If we enforce the condition that $r_\perp (t) = 0$ and assuming that $M$ and $W^{\rm in}$ are full-rank, Eq. \eqref{eq_s17} implies that the linear dimensionality of the neural activities is equal to the dimensionality of the latent dynamical system plus the number of inputs. Under these assumptions, the first equation can be rewritten to be self-contained as:
\begin{equation} \label{eq:lcrnn_latent_circuit}
    \tau  \dot \kappa(t) = - \kappa(t) + N \tanh\left( M \kappa(t) +  W^{\rm in} v(t) + b \right).
\end{equation}
It should be noted that previous work has established this latent dynamical system, defined in \eqref{eq:lcrnn_latent_circuit}, as \emph{universal approximators} of any $K$ dimensional dynamical system in the limit $N_{\rm rec} \to \infty$ \cite{beiran2021shaping}.

Now, if one assumes (for simplicity) that the column vectors of $M$ and $W^{\rm in}$ are all orthogonal to each other, Eq. (\eqref{eq_s17}) can be inverted to define the encoding map as \cite{valente2022extracting}:
\begin{equation}
    \kappa_p(t) = \frac{1}{||m^{(p)}||_2^2} m^{(p)T} \left[x(t) -b \right],
\end{equation}
where $m^{(p)}$ is the $p$th column vector of $M$. Notably, this is a linear function. Similarly, using Eq. \eqref{eq:latent_equations_lcrnn}, we can define the embedding map as:
\begin{equation}
    \tau \dot x(t) = - x(t) + \underbrace{W^{\rm rec} \tanh\left(  M \kappa(t) +  W^{\rm in} v(t) + b \right) + W^{\rm in} u(t) + b}_{\varphi(\kappa(t),u(t))},
\end{equation}
where $\varphi(\kappa(t),u(t))$ is a nonlinear function, hence this is a nonlinear embedding. Therefore, our encoding and embedding framework can explain the latent dynamical system in regularly studied low-rank RNN architectures.

\subsubsection*{Recurrently switching linear dynamical systems}

We next discuss a variant of the latent variable models known as recurrently switching linear dynamical systems or rSLDS \cite{linderman2017bayesian,vinograd2024causal,nair2023approximate}. These models start with defining the latent states $\{Z_1,Z_2,\ldots,Z_L\}$ such that for a given time, the network is in one of the given states $Z[s] := Z_i$ for all $s$ and some $i=1,\ldots,L$. (Here, we use the bracket notation $\cdot[s]$ to refer to discrete variables, where $s$ corresponds to the discretized time.) Then, depending on the state, the latent variables follow a discrete linear dynamical system:
\begin{equation} \label{eq:rslds_or}
    \kappa[s+\Delta s] = A_{Z[s]} \kappa[s] + b_{Z[s]} + \sigma_\kappa,
\end{equation}
where $A_{Z[s]} \in \mathbb R^{K\times K}$ and $b_{Z[s]} \in \mathbb R^K$ are state dependent parameters and $\sigma_\kappa \in \mathbb R^K$ is some noise term. Similar to leaky current RNNs, in this model, the discretized neural activities are assumed to follow a linear relationship:
\begin{equation} \label{eq_s23}
    r[s] = M \kappa[s] + d,
\end{equation}
for some $M \in \mathbb R^{N_{\rm rec} \times K}$ and $d \in \mathbb R^{N_{\rm rec}}$. The traditional rSLDS model also considers a transition probability $p(Z[s+\Delta s],Z[s] = k, \kappa_t)$,
which is often defined via a linear term plus softmax function on $\kappa$. In the most general case, it is not possible to assign individual states to particular $\kappa$, since the states follow a stochastic process. Our encoding-embedding framework above, however, focuses on deterministic networks. Fortunately, in practice, it is often possible to assign approximate states to distinct latent variable combinations \cite{vinograd2024causal,nair2023approximate}. Inspired by this, we update Eq. \eqref{eq:rslds_or} as:
\begin{equation}
    \kappa[s+\Delta s] = \sum_{i=1}^L \left[ A_{Z_i} \kappa[s] + b_{Z_i} + \sigma_\kappa \right] \mathbbm 1[Z[s] =  Z_i](\kappa[s]),
\end{equation}
where $\mathbbm 1[Z[s] =  Z_i](\kappa[s])$, a function of $\kappa$, is one for a set of $\kappa$ variables that correspond to the state $Z_i$, zero otherwise. Then, ignoring the noise terms, we can turn the rSLDS equations into continuous form as:
\begin{subequations}
    \begin{align}
        r(t) &= M \kappa(t) + d, \\
        \dot \kappa(t) &= \sum_{i=1}^L \left[ \tilde A_{Z_i} \kappa(t) + \tilde b_{Z_i}\right] \mathbbm 1[Z(t) =  Z_i](\kappa(t)),
    \end{align}
\end{subequations}
where we redefined the parameters to account for changes during the transition from discrete to continuous representation. As a first note, the latent dynamical system is indeed self-consistent. Second, similar to before, we can enforce low-dimensionality on the neural activities and define a \emph{linear} encoding map as:
\begin{equation}
    \kappa_p(t) = \frac{1}{||m^{(p)}||_2^2} m^{(p)T} \left[r(t) -d \right],
\end{equation}
where we once again assumed orthogonality between column vectors of $M$. Then, we have:
\begin{equation}
     \dot r(t) = M \dot \kappa(t) +  d =  \sum_{i=1}^L \left[ M \tilde A_{Z_i} \kappa(t) + M \tilde b_{Z_i} \right] \mathbbm 1[Z(t) =  Z_i](\kappa(t)) + d +\left[ M\kappa(t) +d \underbrace{- M \kappa(t) - d}_{-r(t)}\right],
\end{equation}
which leads to the \emph{nonlinear} embedding relationship:
\begin{equation}
    \dot r(t) = -r(t) + M\kappa(t) + 2d + \sum_{i=1}^L \left[ M \tilde A_{Z_i} \kappa(t) + M \tilde b_{Z_i} \right] \mathbbm 1[Z(t) =  Z_i](\kappa(t)).
\end{equation}
This function is piecewise linear (hence nonlinear) in $\kappa(t)$, where the regions of linearity are marked by the internal states $Z_i$. Consequently, we have reformulated the deterministic limit of rSLDS as an instance of our encoding-embedding framework with linear encoding and nonlinear embedding.

\subsubsection*{Motivations for moving beyond the traditional models}

Above, we discussed that static autoencoder models lead to under-determined systems for latent variables, \textit{i.e.}, no particular time evolution equation ($\dot \kappa = g(\kappa(t))$ for some $g$) is constrained by the data during fitting. As a step forward, leaky-current RNNs and rSLDS constitute two common dynamical models utilized in the field. These models can support self-sufficient latent dynamical systems in their neural activities, hence satisfying the encoding-embedding relationship we introduced in this work. However, both methods have the same shortcomings: 
\begin{enumerate}
    \item Following Eqs. \eqref{eq_s17} and \eqref{eq_s23}, two models assume that neural activities should lie on a flat manifold with the same dimensionality as the LPUs ($K + N_{\rm in})$. However, empirical evidence suggests that neural activities are instead high-dimensional \cite{stringer2019high,manley2024simultaneous}. Thus, these two models either model only a portion (first few principal components) of the data \cite{valente2022extracting}, or simply focus on the reconstruction accuracy of the latent space \cite{nair2023approximate,vinograd2024causal}. New models that no longer enforce low-dimensionality constraints may prove beneficial for modeling neural and the underlying latent dynamics.
    \item The linear relationship from the LPUs to the neural activities prevents these models from explaining complex tuning properties of individual neurons, as all neurons will monotonically increase or decrease their firing rates with respect to $\kappa$ depending on the specific linear weight, $M$. We discuss this further below when we introduce the tuning curves under LPU. 
\end{enumerate}

\subsection*{\color{subsectioncolor}New class of RNNs in encoding-embedding framework}

Having studied the encoding-embedding maps across several models, we are now ready to define LPUs:
\begin{definition}[Latent processing units] \label{def2}
    We define low-dimensional dynamical systems, spanned by $\kappa(t) \in \mathbb R^K$, as LPUs if they satisfy two conditions:
\begin{enumerate}[label=(\roman*)]
    \item \textbf{Accessible embedding of self-sufficient latent dynamics:} The low-dimensional dynamical systems, $\kappa(t)$, are embedded in high-dimensional neural activities, $r(t) \in \mathbb R^{N_{\rm rec}}$, following Definition \ref{def:embedding} above. With this construction, $\kappa(t)$ are both accessible from $r(t)$ (without depending on inputs or other terms) through $\kappa(t) = \phi(r(t))$ and support a self-sufficient latent dynamical system since $\dot \kappa(t) = g(\kappa(t),u(t))$ does not explicitly depend on $r(t)$.
    \item \textbf{Universal approximation property:} The encoding-embedding relationship has trainable parameters, which endow the low-dimensional latent dynamical system with the ability to become universal approximators of $K$ dimensional dynamical systems (with $N_{\rm in}$ inputs) in the limit $N_{\rm rec} \to \infty$.
\end{enumerate}
\end{definition}
For the purpose of this work and for all theorems below, we will also make the additional assumption that LPUs are linearly encoded. Though future work may consider LPUs without linear encoding, linear encoding has several desirable properties that will be apparent below.

With this definition, it is straightforward to show that the leaky current RNNs, introduced above, satisfy these conditions under the $r_{\perp}(t) = 0$ assumption, \textit{i.e.}, in which neural activities are restricted to lie in low-dimensional subspaces \cite{mastrogiuseppe2018linking}. However, as noted earlier, experimental work suggests a high linear dimensionality for neural activities \cite{stringer2019high,manley2024simultaneous}, rendering these approaches incompatible for reconstructing neural activities. Fortunately, the encoding-embedding framework we introduced earlier allows us to extract LPUs from a broader class of artificial and biological neural networks. Interestingly, as we show below, following one of the simplest encoding and embedding functions can lead to an RNN architecture that has been severely understudied compared to its counterparts \cite{masse2019circuit,dinc2023cornn}. Consequently, its latent dynamics and capabilities remained mystery to date. Here, we first derive the time evolution equations of the LPUs for this class of RNNs and then prove a universal approximation theorem for this architecture. Then, we generalize the LPUs to a broader class of RNNs, which we refer to as ``Linearly-encoded latent networks." Afterwards, we connect the LPUs, which explain the neural population coding dynamics, to the tuning curves of individual neurons, a traditional view of neural coding. Finally, we provide a brief discussion of the well-known identifiability issues with the latent variables, and why they are not of concern for studies with LPUs.

\subsubsection*{Latent processing units in leaky firing rate RNNs}
To define the LPUs in leaky firing rate RNNs, we assume linear encoding and nonlinear embedding functions. For the embedding, we assume one of the simplest forms: a linear function followed by nonlinearity, which leads to the equations:
\begin{subequations}
\begin{align}
    \kappa(t) &= N r(t), \\
    \tau \dot r(t) &= -r(t) + f\left( M \kappa(t) + W^{\rm in} u(t) + b \right),
\end{align}
\end{subequations}
where $M$, $N$, $W^{\rm in}$, and $b$ are defined as before, $\tau$ is the homogeneous neuronal decay time, and $f(.)$ is a pre-defined nonlinearity (taken as $f(.) = \tanh(.)$ throughout this work). Defining $W^{\rm rec} = MN$ as the weight matrix and some straightforward algebra reveal both the RNN equations and the latent dynamical system:
\begin{subequations} \label{eq:lfrnns}
    \begin{align}
        \tau \dot r(t) &= -r(t) + f\left( W^{\rm rec} r(t) + W^{\rm in} u(t) + b \right), \\
        \tau \dot \kappa(t) &= - \kappa(t) + N f\left( M \kappa(t) + W^{\rm in} u(t) + b \right).
    \end{align}
\end{subequations}
Moreover, as we show in the next theorem by naturally extending the existing proofs \cite{schafer2006recurrent,hornik1989multilayer}, leaky firing rate RNNs are also universal approximators of $K$-dimensional dynamical systems:
\begin{theorem}[Universal approximation theorem]  \label{thm1}
    Let a recurrent neural network follow the time dynamics:
    \begin{equation}
        \tau \dot r(t) = -r(t) + f\left( W^{\rm rec} r(t) + W^{\rm in} u(t) + b \right),
    \end{equation}
    where $r(t) \in \mathbb R^{N_{\rm rec}}$ are neural activities of $N_{\rm rec}$ neurons, $u(t) \in \mathbb R^{N_{\rm in}}$ inputs, $\tau \in \mathbb R$ is a neuronal time scale, $f(.)$ is some non-polynomial nonlinearity, $W^{\rm rec} = MN$ with $N \in \mathbb R^{K\times N_{\rm rec}}$ and $M \in \mathbb R^{N_{\rm rec} \times K}$, $W^{\rm in} \in \mathbb R^{N_{\rm rec} \times N_{\rm in}}$ input weights, and $b \in \mathbb R^{N_{\rm rec}}$ biases. Then, the latent dynamical systems, whose variables are defined by the linear projection, $\kappa(t) = N r(t) \in \mathbb R^K$, are universal approximators of $K$-dimensional dynamical systems with $N_{\rm in}$ pre-defined inputs, $\dot \kappa(t) = g(\kappa(t),u(t))$ for an arbitrary flow map $g(.)$, in the limit $N_{\rm rec}\to \infty$, i.e., when the number of rows in $M$ and $W^{\rm in}$ goes to infinity.
\end{theorem}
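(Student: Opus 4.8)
The plan is to reduce the claim to the classical universal approximation theorem for single-hidden-layer feedforward networks \cite{hornik1989multilayer,leshno1993multilayer}, which is also the route by which \cite{schafer2006recurrent} established the analogous result for full-rank RNNs. First I would start from the latent dynamics already derived in Eq. \eqref{eq:lfrnns}, namely $\tau \dot\kappa(t) = -\kappa(t) + N f(M\kappa(t) + W^{\rm in} u(t) + b)$, so that the latent flow map the network realizes is $\tilde g(\kappa,u) = \tau^{-1}[-\kappa + N f(M\kappa + W^{\rm in}u + b)]$. Matching the target $\dot\kappa = g(\kappa,u)$ is then exactly equivalent to requiring the map $(\kappa,u) \mapsto N f(M\kappa + W^{\rm in}u + b)$ to approximate the continuous target $h(\kappa,u) := \kappa + \tau g(\kappa,u)$ to arbitrary accuracy.

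The key observation is that this map is precisely a single-hidden-layer feedforward network from the concatenated input $(\kappa,u) \in \mathbb{R}^{K+N_{\rm in}}$ to an output in $\mathbb{R}^K$: the hidden layer has width $N_{\rm rec}$ with input weights $[M \,|\, W^{\rm in}]$, biases $b$, and nonlinearity $f$, while the readout weights are $N$. The point I would stress is that, although $W^{\rm rec} = MN$ couples $M$ and $N$ in the full recurrent network, the latent dynamics sees only the combination $N f(M\kappa + \cdots)$, in which $M \in \mathbb{R}^{N_{\rm rec}\times K}$, $N \in \mathbb{R}^{K\times N_{\rm rec}}$, $W^{\rm in}$, and $b$ may be chosen completely independently; the rank-$K$ constraint on $W^{\rm rec}$ is then automatically met and imposes nothing beyond the freedom of a generic two-layer network. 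Because the hidden width is the neuron count $N_{\rm rec}$ rather than the latent dimension $K$, letting $N_{\rm rec}\to\infty$ supplies arbitrarily many hidden units.

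I would then invoke the Leshno et al. theorem: since $f$ is non-polynomial, single-hidden-layer networks are dense in $C(\mathcal{K})$ for any compact $\mathcal{K} \subset \mathbb{R}^{K+N_{\rm in}}$. Applying this coordinatewise to each component $h_j$ of the vector-valued target and taking the union of the hidden units across the $K$ outputs (so that $N$ selects the relevant units for each component and zeroes out the rest) yields a finite-width network approximating $h$ uniformly on $\mathcal{K}$. Equivalently, $\tilde g$ approximates $g$ uniformly on $\mathcal{K}$, which is exactly universal approximation of the flow map in the limit $N_{\rm rec}\to\infty$.

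The main obstacle I anticipate is bookkeeping rather than anything conceptual: one must fix a compact domain $\mathcal{K}$ for $(\kappa,u)$ and assume $g$ continuous there, since the feedforward theorem only guarantees uniform approximation on compacta, and one must be explicit that the statement concerns the flow map $g$ itself, so that no trajectory-level argument is needed. If one did want a finite-horizon guarantee on the trajectories rather than on the vector field, a standard Gr\"onwall estimate would convert the uniform flow-map error into a trajectory error, but that is not required by the statement as phrased. A minor point worth noting is that the linear term $-\kappa$ and any absorbed output bias are themselves continuous and hence cause no difficulty when folded into the target $h$.
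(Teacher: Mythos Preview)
Your proposal is correct and follows essentially the same route as the paper: derive the latent dynamics $\tau\dot\kappa = -\kappa + Nf(M\kappa + W^{\rm in}u + b)$, recognize $Nf(M\kappa + W^{\rm in}u + b)$ as a single-hidden-layer feedforward network with width $N_{\rm rec}$, and invoke Leshno et al.\ to approximate the target $h(\kappa,u) = \kappa + \tau g(\kappa,u)$. Your discussion of the compact-domain bookkeeping and the independence of $M$, $N$, $W^{\rm in}$, $b$ is more explicit than the paper's, but the argument is the same; the paper also notes an alternative route via the equivalence (after an input filter) to the leaky-current low-rank latent system of \cite{beiran2021shaping}.
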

\begin{proof}
    The dynamical system equations followed by $\kappa(t)$ are given as:
    \begin{equation} \label{eq:lfrnn_latent_circuit}
        \tau \dot \kappa(t) = - \kappa(t) + N f\left( M \kappa(t) + W^{\rm in}u(t) + b \right).
    \end{equation}
    Here, the term $\tau \dot \kappa(t) + \kappa(t)$, is equal to a feed forward network with a single hidden layer that takes $\kappa$ as input. Such a network can approximate arbitrary functions (barring some regularity conditions \cite{leshno1993multilayer}) $\mathbb R^{K+N_{\rm in}} \to \mathbb R^K$ as long as $f(.)$ is a non-polynomial nonlinearity from the standard universal approximation theorem \cite{leshno1993multilayer}. Setting the function approximated by this network to $\tau g(\kappa,u) + \kappa$, we get:
    \begin{equation}
         N f\left( M \kappa + W^{\rm in} u+ b \right)\approx \tau g(\kappa,u) + \kappa,
    \end{equation}
    and thus: $\dot \kappa(t) \approx g(\kappa(t),u(t))$. 
    
    As an alternative proof, after a the transformation of the input $u \to v$ following $\tau \dot v(t) = -v(t) + u(t)$, the latent dynamical systems in Eq. (\ref{eq:lfrnn_latent_circuit}) are the same as the ones in Eq. (\ref{eq:lcrnn_latent_circuit}), and therefore are universal approximators following \cite{beiran2021shaping}.
\end{proof}

The leaky firing rate RNNs in Eq. (\ref{eq:lfrnns}) satisfy the defining conditions of encoding and embedding maps and their latent dynamical systems are universal approximators: thus, they are said to contain LPUs. Notably, unlike leaky current RNNs studied heavily in the field \cite{mastrogiuseppe2018linking,dubreuil2022role,valente2022extracting,beiran2021shaping}, this alternative architecture does not require any restriction on neural dynamics in order for the latent dynamical system to be self-contained but has the same universally expressive power.

\subsubsection*{Multi-synapse latent networks (MSLNs)}

For the leaky firing rate RNNs, we assumed a simple nonlinear embedding that consisted of a linear term followed by a neuron-wise nonlinearity. Here, we discuss how to incorporate a broader class of nonlinear embeddings and still retain the encoding-embedding relationship in a new class of RNNs that we call multi-synapse latent networks. To do so, we first define a current shared between neurons (inline with \cite{perich2021inferring}), using a weight matrix $W^{\rm rec} \in \mathbb R^{N_{\rm rec} \times N_{\rm rec}}$, as:
\begin{equation}
    z(t) = W^{\rm rec} r(t).
\end{equation}
Intuitively, this means that neural activities induce currents in other neurons through synaptic connections, which are stored inside the weight matrix $W^{\rm rec}$. Enforcing low-rank constraint on this weight matrix, $W^{\rm rec} = MN$ with $M$ and $N$ defined as before, leads to a definition for latent variables through linear encoding:
\begin{equation}
    \kappa(t) = N r(t).
\end{equation}
In this analogy, one can consider the firing rates in Eq. \eqref{eq_lfrnn} as:
\begin{equation}
    \tau r(t) = - r(t) + \tanh(z(t)),
\end{equation}
where $\tanh(z(t))$ softly thresholds the currents, transforming them into continuous activities, whereas $\tau \dot r(t) = -r + \cdot$ performs a low-pass filtering in time. This equation describes an artificial synapse, with a single ``current channel'' $z(t)$. Yet, such simple constructions may be insufficient, for instance, to model complicated (and nonlinear) dendritic integration processes (\cite{gidon2020dendritic}), as these processes are likely more complicated than can be captured by simple linear sums of neural activities followed by soft-thresholding discussed above. To introduce a broader class of RNNs that can model complicated nonlinear embedding maps, and thereby dendritic processes, we introduce multi-synapse latent networks:
\begin{definition}[Multi-synapse latent networks] \label{def3}
We broadly define multi-synapse latent networks as those following the dynamical system equations:
    \begin{equation}
    \tau \dot r_i(t) = -r_i(t) + f_i(u(t),[W^{(1)}r(t)+b^{(1)}]_i,[W^{(2)}r(t)+b^{(2)}]_i,\ldots),
\end{equation}
where $\tau$ is the neuronal time scale, $u\in \mathbb R^{N_{\rm in}}$ are input variables, $\{W^{(j)}\}$ with $W^{(j)} \in \mathbb R^{N_{\rm rec} \times N_{\rm rec}}$ are collection of weights and $\{b^{(j)}\}$ the corresponding biases, $r(t) \in \mathbb R^{N_{\rm rec}}$ are the neural activities. In this architecture, any two neurons are allowed to create multi-synapses with each other. The corresponding currents are denoted as $\{z^{(1)}, z^{(2)}, \ldots\}$. A set of nonlinear transfer functions, $f(.) =\{ f_1(\cdot),\ldots,f_{N_{\rm rec}}(\cdot)\}$, takes currents into neurons as inputs and transforms the full set of currents into the neural activities. 
\end{definition}

In MSLNs, similar to before, the currents ($z^{(j)}$) can be used to define a set of latent variables. Specifically, if we assume $W^{(j)} = M^{(j)} N^{(j)}$ with $M^{(j)} \in \mathbb R^{N_{\rm rec} \times K_j}$ and $N^{(j)} \in \mathbb R^{K_j \times N_{\rm rec} }$, we can define a subset of latent variables via $\kappa^{(j)}(t) = N^{(j)}r(t)$ with $\kappa^{(j)}(t) \in \mathbb R^{K_j}$. Then, the full set of latent variables is given by the group $\kappa = \{ \kappa^{(1)}, \ldots \}$ and constitutes a $K = \sum_{j} K_j$ dimensional dynamical system, which is guaranteed by the linearity of the encoding as discussed above. As long as the transfer functions, $f(\cdot)$, are chosen to satisfy the universal approximation property (see leaky firing rate RNNs for a simple example), the variables, $\kappa(t) \in \mathbb R^K$, constitute a LPU.

For simplicity of notation, we ignore the biases moving forward. For this general class of RNNs, the encoding is once again linear, but the embedding takes the form:
\begin{equation} \label{eq:neural_general}
    \tau \dot r_i(t) = -r_i(t) + f_i(u(t),[M^{(1)}\kappa^{(1)}(t)]_i,[M^{(2)}\kappa^{(2)}(t)]_i,\ldots),
\end{equation}
leading to the latent dynamical system:
\begin{equation} \label{eq:latent_general}
    \tau \dot \kappa^{(j)}_p(t) = -\kappa^{(j)}_p(t) + \sum_{k=1}^{N_{\rm rec}} N^{(j)}_{pk} f_k(u(t),[M^{(1)}\kappa^{(1)}(t)]_k,[M^{(2)}\kappa^{(2)}(t)]_k,\ldots),
\end{equation}
where $\kappa^{(j)}_p \in \mathbb R^{K_j}$ is the $p$th latent variable defined through the weights in the $j$th current channel. This is a closed dynamical system, respecting the encoding and embedding framework that we introduced above. However, whether this becomes a universal approximator depends on the properties of $f(.)$, as noted above.

The full extent of LPUs that can be extracted with our encoding- embedding framework is left for future work. For example, the leaky firing rate RNN architecture is an example of this class, for which there is only one current per neuron and $f(.)$ is a predefined non-polynomial nonlinearity. Another example would be to define $f(.)$ as a multilayer perceptron. This approach could be helpful for studying dendritic computations that use multiple currents through distinct synapses \cite{trautmann2021dendritic,gidon2020dendritic}. For this work, we mainly focus on the leaky firing rate RNNs, though we will use the general architecture in several of our proofs to ensure general applicability. In the proofs moving forward, for simplicity of notation, we concatenate all latent variables into a single vector, $\kappa(t) = \{\kappa^{(1)},\ldots\}$ such that:
\begin{equation}
   \tilde f_i\left(u(t),\{ m^{(p)}_i \kappa_{p}\}\right):=\bar f_i(u(t),[M^{(1)}\kappa^{(1)}(t)]_i,[M^{(2)}\kappa^{(2)}(t)]_i,\ldots),
\end{equation}
where $\kappa \in \mathbb R^{K}$ with $K =\sum_j K_j$, $\tilde f_i$ is the redefined collection of nonlinear transfer functions, and $m^{(p)} \in \mathbb R^N$ is the embedding vector of $\kappa_p$, corresponding to one of the column vectors in the collection $\{M^{(i)}\}$. Moreover, similar to the embedding weights, we define $\frac{1}{N}n^{(p)}$ as the encoding weight of $\kappa_p(t)$, which corresponds to one of the row vectors in the collection $\{N^{(i)}\}$ above. 

\subsection*{\color{subsectioncolor}From latent processing units to neural tuning curves}

\subsubsection*{Latent time scales}

Neural tuning curves describe how individual neurons respond to varying stimuli. These curves have historically served as a tool for characterizing preferential activation by specific sensory inputs. Neurons in different brain regions exhibit distinct tuning properties, such as orientation tuning in V1 \cite{hubel1959receptive}, frequency tuning in the auditory cortex \cite{fischer2011owl}, spatial tuning in the hippocampus \cite{mao2021spatial,o1976place}, and gravity tuning in the thalamus \cite{laurens2016gravity}. Although the LPU framework focuses on population-level coding dynamics \cite{yuste2015neuron}, it can also account for neural tuning curves commonly studied in the literature. 

Here, we provide the mathematical details behind the connection between LPUs and neural tuning curves. First, we state a proposition that formalizes the notion that latent variables do not have to adhere to the same time scale ($\tau_\kappa\neq \tau$) as the neural activities:
\begin{proposition}[Arbitrary latent time scale]\label{prop1}
    Let $r(t) \in \mathbb R^{N_{\rm rec}}$ be the neural activities embedding a low-dimensional LPU with the variables $\kappa(t) \in \mathbb R^K$. 
    LPUs can operate at any arbitrary time scale $\tau_\kappa$ such that:
    \begin{equation}
        \tau_\kappa \dot \kappa(t) = - \kappa(t) + g(\kappa(t),u(t)),
    \end{equation}
    where $g: \mathbb R^{K+N_{\rm in}} \to \mathbb R^K$ is a continuous function defining the flow map for the desired dynamics.
\end{proposition}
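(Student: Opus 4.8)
The plan is to reduce the claim to a direct application of the universal approximation property established in Theorem \ref{thm1}. Recall that for the leaky firing rate RNN (and more generally for the multi-synapse latent networks of Definition \ref{def3}), the latent variables obey
\begin{equation*}
    \tau \dot\kappa(t) = -\kappa(t) + N f\left( M \kappa(t) + W^{\rm in} u(t) + b \right),
\end{equation*}
where the feedforward term $N f(M\kappa + W^{\rm in} u + b)$ is a single-hidden-layer network in $(\kappa, u)$ that, by Theorem \ref{thm1}, can approximate any continuous map $\mathbb{R}^{K+N_{\rm in}} \to \mathbb{R}^K$ in the limit $N_{\rm rec} \to \infty$. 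The desired dynamics $\tau_\kappa \dot\kappa = -\kappa + g(\kappa,u)$ is equivalent to $\dot\kappa = \tilde g(\kappa, u)$ with $\tilde g := \tfrac{1}{\tau_\kappa}\left(-\kappa + g(\kappa,u)\right)$, which is itself a continuous flow map whenever $g$ is continuous, so in principle Theorem \ref{thm1} applies verbatim. To make the role of the timescale explicit, however, I would track how the fixed biophysical leak is rescaled by the trainable embedding.

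First I would choose the target for the embedding term so that the leak $-\kappa$ at rate $1/\tau$ is converted into a leak at rate $1/\tau_\kappa$. Concretely, setting
\begin{equation*}
    h(\kappa, u) := \left(1 - \frac{\tau}{\tau_\kappa}\right)\kappa + \frac{\tau}{\tau_\kappa}\, g(\kappa, u),
\end{equation*}
and choosing the network parameters so that $N f(M\kappa + W^{\rm in}u + b) \approx h(\kappa,u)$, substitution into the latent equation gives $\tau\dot\kappa = -\kappa + h(\kappa,u) = \tfrac{\tau}{\tau_\kappa}\left(-\kappa + g(\kappa,u)\right)$. Dividing by $\tau$ and multiplying by $\tau_\kappa$ yields precisely $\tau_\kappa \dot\kappa = -\kappa + g(\kappa,u)$. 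Since $h$ is an affine reparametrization of the continuous $g$, it is continuous and hence satisfies the regularity hypotheses of the universal approximation theorem, so a network realizing it to arbitrary accuracy exists as $N_{\rm rec}\to\infty$.

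The main point, rather than any hard estimate, is recognizing that the neuronal leak is structurally pinned to the biophysical timescale $\tau$, so a slower (or faster) latent timescale must be \emph{synthesized} by the embedding: the factor $\tau/\tau_\kappa$ enters $h$ explicitly, and the latent leak is re-weighted by having the network partially cancel or reinforce the $-\kappa$ term. The genuine caveat worth flagging is that this is an $N_{\rm rec}\to\infty$ existence result inherited from Theorem \ref{thm1}; for finite networks—especially when $\tau_\kappa \gg \tau$, so that $h$ must nearly cancel the fast leak across a wide range of $\kappa$—the number of units needed to realize $h$ to fixed accuracy may grow substantially. This is exactly the finite-size limitation probed empirically in the subsequent section, where second-long latent timescales require on the order of a million neurons.
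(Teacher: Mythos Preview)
Your proof is correct and follows essentially the same approach as the paper: both set the universal approximator term to $h(\kappa,u) = (1-\tau/\tau_\kappa)\kappa + (\tau/\tau_\kappa)g(\kappa,u)$ and substitute to convert the $\tau$-leak into a $\tau_\kappa$-leak. Your additional commentary on the partial cancellation of the fast leak and the finite-$N_{\rm rec}$ caveat matches the paper's discussion following the proof.
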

\begin{proof}
    The proof starts with the time dynamics of the LPU in Eq. \eqref{eq:latent_general}. Here, we can define the collection of functions $h(\kappa) = (h_1(\kappa,u), \ldots, h_K(\kappa,u))$ as:
    \begin{equation}
        h_r(\kappa(t),u(t)) = \frac{1}{N_{\rm rec}}\sum_{i=1}^{N_{\rm rec}} n^{(r)}_i \tilde f_i\left(u(t),\{ m^{(p)}_i \kappa_{p}(t)\}\right),
    \end{equation} 
    with which Eq. \eqref{eq:latent_general} transforms into:
    \begin{equation}
        \tau \kappa(t) = - \kappa(t) + h(\kappa(t),u(t)).
    \end{equation}
    Since $\kappa(t)$ constitute a LPU, $h(\kappa,u)$ is, by the definition of a LPU, a universal approximator for the continus functions $\mathbb R^{K+N_{\rm in}} \to \mathbb R^K$. Then, we can use it to approximate $h(\kappa,u) \approx (1-\frac{\tau}{\tau_\kappa})  \kappa + \frac{\tau}{\tau_\kappa} g(\kappa,u)$ for some arbitrary function $g(\kappa,u)$ that drives the computation in $\kappa$ on the time scale $\tau_\kappa$. Then, the latent equation turns into:
    \begin{equation}
        \tau_\kappa \dot \kappa(t) = - \kappa(t) + g(\kappa(t),u(t)). 
    \end{equation}
    This concludes the proof.
\end{proof}
It is worth noting that in the limit $\tau_\kappa/\tau \to \infty$, $h(\kappa)$ becomes an identity map. For finite $\tau_\kappa$, designing a LPU with longer timescales compared to individual neurons ($\tau_\kappa \gg \tau$) requires the ability to model a function that only slightly changes from the identity map $h(\kappa) = \kappa + \frac{\tau}{\tau_\kappa}\left[g(\kappa) - \kappa\right]$. Although the universal approximation is achieved in the limit $N_{\rm rec} \to \infty$, practically achieving such a separation (as opposed to setting, \textit{e.g.}, $\tau = \tau_\kappa$) likely requires large $N_{\rm rec}$. We discuss a related numerical experiment below. 

\subsubsection*{Defining tuning curves with respect to latent variables}
The separation of timescales between latent variables and neural activities suggests that the former can remain approximately invariant for short time intervals such that $\kappa(t') \approx h(\kappa(t'))$ for $t' \in [t,t+O(\tau)]$, but vary rapidly in time scales $\tau_\kappa\gg \tau$. Realistically, one can imagine $\tau \sim O(1-10ms)$ being the time scale of neuronal processes and $\tau_\kappa \sim O(100-1000ms)$ being the time scale of behavior. In this case, despite the dynamical nature of the embedding, it is possible to define approximate tuning curves as if the embedding function was static. To do so, we first define the concept of $\epsilon$-stability:
\begin{definition}[$\epsilon$-stability]
    We refer to a set of variables $\kappa(t) \in \mathbb R^K$ as $\epsilon$-stable for the time window $[t,t+T]$ if the following inequality is satisfied for some small $\epsilon$:
    \begin{equation}
    \forall i \; \forall t',t'' \in [t,t+T] \quad |\kappa_i(t') - \kappa_i(t'')| \leq \epsilon.
    \end{equation}
\end{definition}
Here, $\epsilon$ is a user-defined (small) parameter which can, for example, be set to the noise level. Then, if we assume that $\kappa(t)$ are $\epsilon$-stable for the time interval $t \in [t_1,t_1+T]$ with $T \gg \tau$, in the absence of any input, the embedding map becomes:
\begin{equation}
    \tau \dot r_i(t) = -r_i(t) + \tilde f_i\left(\{ m^{(p)}_i \kappa_{p}\}\right) + O(\epsilon),
\end{equation}
where $\kappa_p := \frac{1}{T} \int_{t_1}^{t_1+T}  \kappa(t) \diff t$ is fixed. Ignoring the terms in $O(\epsilon)$, the above equation can be solved to yield:
\begin{equation}
    r(t) \approx \left(r(t_1) - \tilde f_i\left(\{ m^{(p)}_i \kappa_{p}\}\right)\right) e^{-t/\tau} + \tilde f_i\left(\{ m^{(p)}_i \kappa_{p}\}\right),
\end{equation}
for $t \in [t_1,t_1+T]$. Therefore, after the transient response dies out, the neural activities (on average) follow the (internal) tuning curves:
\begin{equation} \label{eq:tuning}
    r_i(\kappa) = \tilde f_i\left(\{ m^{(p)}_i \kappa_{p}\}\right).
\end{equation}
Notably, tuning curves are often defined with respect to external experimentally relevant variables, not internal latent variables. We discuss below how we reconcile these two concepts through utilizing an inherent symmetry of the LPUs. Moreover, it should be noted that the noisy nature of neural activities, \textit{e.g.}, addition of random noise to the embedding, would mean that this relationship likely holds on average, \textit{e.g.}, when averaged across multiple trials, and after the latent variables were stable for a while $\sim O(\tau)$, which is how tuning curves are often computed in practice.

\subsubsection*{Defining external tuning curves through identifiable latent processing units}

For $\tilde \kappa(t)$ to be considered a LPU, its dynamics should not depend explicitly on $r(t)$. Although latent dynamical systems with linear encoding automatically satisfy this criterion (see above), there are infinitely many transformations that can lead to equivalent LPUs with different latent variables. Here, by equivalence, we mean two conditions:
\begin{itemize}
    \item \textbf{Dynamical invariance:} After the transformation, $\kappa = S(\tilde \kappa)$, the new variables should constitute a LPU while the neural dynamics ($\dot r = F(r,u)$) remain invariant.
    \item \textbf{Output invariance:} A behavioral output from the LPU achieved through a ``decoder" from neural activities remains invariant.
\end{itemize}
Here, as a side note, the term decoder explicitly refers to the maps from the neural activities or the linearly encoded latent variables to the behavioral outputs. Now, as an example, consider the class of transformations, $\kappa = S \tilde \kappa$, where $S \in \mathbb R^{K\times K}$ is an invertible matrix. Such transformations respect both conditions, in which $\kappa$ constitutes an equally valid LPU with its equivalent linear and/or nonlinear behavioral decoders. (To see why, multiply every $\kappa(t)$ from the left by $S^{-1} S$ in Eq. (\ref{eq:latent_general}), and likewise for any decoders defined on $\kappa(t)$). Since $S$ is any invertible matrix, the LPU is said to have the ``$GL_K(\mathbb R)$ symmetry,'' where $GL_K(\mathbb R)$ stands for general linear group of degree $K$ over real numbers $\mathbb R$. This symmetry makes the LPUs non-identifiable. Then, how can we define the tuning curves? Which basis of $\kappa(t)$ is the ``real'' basis that the brain needs to keep track of?

The latent non-identifiability problem has been well known in independent component analysis (ICA), with established conditions for linear identifiability \citep{hyvarinen2000independent}. In the case of a nonlinear mixing function (called a decoder or a static embedding in our terminology) \citep{hyvarinen1999nonlinear}, recent work has provided nonlinear identifiability conditions \citep[reviewed in][]{hyvarinen2024identifiability}. Some nonlinear ICA methods have also been proposed for neural data, e.g., based on VAEs \citep{zhou2020learning} or contrastive learning \citep{schneider2023learnable}. In this work, we address this non-identifiability issue in two distinct ways: 

\begin{enumerate}
    \item For external observers who can keep track of external inputs and behaviorally relevant outputs, the $GL_K(\mathbb R)$ symmetry may be broken to align the latent space with experimental variables. We show an example of this when we discuss the LPUs subserving K-bit flip flop tasks (see also Fig. \ref{fig1}). This direction has some interesting implications. Specifically, since the nonlinearity is unknown but fixed, \textit{e.g.}, by the biophysical neuronal processes, the tuning properties depend strictly on the embedding vectors, $m^{(p)}$. Notably, if some $\kappa$ are aligned to external, behaviorally relevant variables/cues, having only one non-zero $m_i^{(p)}$ would correspond to sparse coding. In contrast, having multiple non-zero $m_i^{(p)}$ would lead to mixed-selective neurons \cite{rigotti2013importance}, providing a theoretical framework for future studies of these coding mechanisms.
    
    \item When discussing the universal decoding theorem in the following section, we argue that explicit identification of LPUs is not necessary for the networks to implement them or extract necessary information from them. Thus, though identification of latent variables may be experimentally beneficial to make sense of their neural coding patterns, networks implementing them do not need to identify them, and thereby their expressivity does not suffer from this non-identifiability problem. In other words, the brain does not need to keep track of tuning curves as opposed to researchers who do so for experimental convenience.
\end{enumerate}

Finally, it is worth noting that if $m_i^{(p)} = 0$ for all $p=1,\ldots, K$, \textit{i.e.}, the particular neuron $i$ is not coding for any of the latent variables, then the $GL_K(\mathbb R)$ symmetry cannot turn neuron $i$ into a ``coding'' neuron, since $S 0 = 0$ for any matrix $S \in GL_K(\mathbb R)$. 

\subsection*{\color{subsectioncolor}Decoding universal outputs from the latent processing units}

So far, we have studied the universality of computations that can be performed by the latent dynamical systems that we call LPUs. However, the results of these computations should be accessible to downstream units, \textit{e.g.}, motor neurons. In this section, we lay out a theoretical foundation and prove that linearly encoded LPUs allow extracting time-dependent (universal) readouts using linear decoders from neural activities. First, with the condition of increased dimensionality, we prove that LPUs allow linear readout of any arbitrary function of their state variables. Next, we illustrate that these readouts can be performed directly on neural activities, mitigating the identifiability issues of latent variables. Finally, we bring both results together into a universal decoding theorem (Theorem \ref{thm2}), which suggests that LPUs can subserve arbitrarily complex network output and/or animal behavior.

\subsubsection*{Sufficiency of linear readouts from extended latent processing units}

For a $K$-dimensional LPU with $B$ outputs defined as differentiable function of its latent variables, it is possible to design an extended LPU with $K+B$ dimensions that can provide the same $B$ outputs through a linear readout, as we formalize with a lemma and its corollary:
\begin{lemma}[Sufficiency of linear readouts from LPUs] \label{lm1}
    Let $r(t) \in \mathbb R^{N_{\rm rec}}$ be high-dimensional neural activities that can support LPUs up to $K$ dimensions with linear encoding maps. Let $\bar \kappa(t) \in \mathbb R^{K-1}$ be a linearly encoded LPU (in $r(t)$) following dynamical system equation $\dot{\bar \kappa}(t) = \bar G(\bar \kappa(t),u(t))$. Let the output, $o(t) \in \mathbb R$, of this system be defined as $o(t) = \bar \psi(\bar \kappa(t))$ for some, potentially nonlinear, differentiable function $\bar \psi: \mathbb R^{K-1} \to \mathbb R$. Then, $r(t)$ can also support a LPU, $\kappa(t) \in \mathbb R^K$, which follows a dynamical system equation $\dot  \kappa(t) = G(\kappa(t),u(t))$ such that  $\kappa_i(t) = \bar \kappa_i(t)$ for $i=1,\ldots, K-1$ and the output is $o(t) = \kappa_K(t)$, \textit{i.e.}, a linear readout.
\end{lemma}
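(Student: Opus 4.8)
The plan is to realize the desired output $o(t)$ as an additional latent coordinate, so that what was a nonlinear readout $\bar\psi$ of $\bar\kappa$ becomes the trivial readout of a single coordinate of an enlarged $\kappa$. Concretely, I would set $\kappa_i(t) = \bar\kappa_i(t)$ for $i = 1,\dots,K-1$ and append $\kappa_K(t) := \bar\psi(\bar\kappa(t))$. Because the encoding is linear, reading off $\kappa_K = \frac{1}{N_{\rm rec}} n^{(K)T} r$ is by construction a linear function of the neural activities, so the claimed linear readout is automatic \emph{once} I show that this augmented $\kappa(t) \in \mathbb R^K$ is itself a bona fide linearly-encoded LPU supported by $r(t)$.

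The creative step is to exhibit a closed, $r$-independent flow map for the enlarged system. Differentiating the new coordinate with the chain rule gives
\[
\dot\kappa_K(t) = \nabla\bar\psi(\bar\kappa(t))\,\dot{\bar\kappa}(t) = \nabla\bar\psi(\bar\kappa(t))\,\bar G(\bar\kappa(t),u(t)),
\]
which depends on $r(t)$ only through $\bar\kappa$, hence only through $\kappa$ and $u$. I would therefore define the block-triangular flow map
\[
G(\kappa,u) = \begin{pmatrix} \bar G(\bar\kappa,u) \\[2pt] \nabla\bar\psi(\bar\kappa)\,\bar G(\bar\kappa,u) \end{pmatrix}, \qquad \bar\kappa := (\kappa_1,\dots,\kappa_{K-1}),
\]
and note that $G$ is a continuous map $\mathbb R^{K+N_{\rm in}} \to \mathbb R^K$ provided $\bar\psi \in C^1$ (so that $\nabla\bar\psi$ is continuous) and $\bar G$ is continuous, the latter holding because $\bar\kappa$ is an LPU. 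Crucially, the first $K-1$ components of $G$ do not depend on $\kappa_K$, so the sub-dynamics of $(\kappa_1,\dots,\kappa_{K-1})$ remain exactly $\bar G$ after augmentation.

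Next, since by hypothesis $r(t)$ can support $K$-dimensional linearly-encoded LPUs, I would invoke the universal approximation property of Definition \ref{def2} (equivalently Theorem \ref{thm1}): the network parameters can be tuned so that the linear projection $\kappa(t) = N r(t) \in \mathbb R^K$ obeys $\dot\kappa(t) = G(\kappa(t),u(t))$. It then remains to pin the new coordinate to the output. Setting $e(t) := \kappa_K(t) - \bar\psi(\bar\kappa(t))$ and using $\dot{\bar\kappa} = \bar G$, the definition of the last component of $G$ yields $\dot e(t) = 0$; choosing the initial condition $\kappa_K(t_0) = \bar\psi(\bar\kappa(t_0))$ forces $e(t) \equiv 0$, so $o(t) = \bar\psi(\bar\kappa(t)) = \kappa_K(t)$ for all $t$, as required.

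The main obstacle I anticipate is conceptual rather than computational: ensuring the enlargement does not perturb the already-established $\bar\kappa$ dynamics, and that the slaved coordinate $\kappa_K$ genuinely qualifies as a latent dimension. The block-triangular structure of $G$ resolves the first issue, since the $\bar\kappa$ block is autonomous. For the second, one must accept a $K$-dimensional LPU whose trajectory is confined to the $(K-1)$-dimensional graph of $\bar\psi$: the encoding matrix $N$ is still taken to be rank $K$ (the new row $n^{(K)}$ chosen linearly independent of the others), so $\kappa_K$ is a legitimate, linearly-accessible coordinate even though it carries no new dynamical degree of freedom. A minor caveat is that the realization of $G$ is exact only in the $N_{\rm rec}\to\infty$ limit; for finite $N_{\rm rec}$ the identity $o(t) = \kappa_K(t)$ holds up to the universal-approximation error, which would be the quantity tracked in a quantitative version of the statement.
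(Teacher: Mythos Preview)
Your proposal is correct and follows essentially the same approach as the paper: append $\kappa_K:=\bar\psi(\bar\kappa)$, use the chain rule to obtain $\dot\kappa_K=\nabla\bar\psi(\bar\kappa)\,\bar G(\bar\kappa,u)$, and observe that the resulting $K$-dimensional system is closed in $(\kappa,u)$ and hence realizable as an LPU by the universal-approximation hypothesis. Your version is in fact slightly more careful than the paper's, which does not make explicit the initial-condition argument ($\dot e=0$ with $e(t_0)=0$) needed to keep $\kappa_K$ on the graph of $\bar\psi$, nor the requirement that the new encoding row $n^{(K)}$ be chosen linearly independent of the existing ones.
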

\begin{proof}
    Let us consider the time derivative of the output in the original LPU:
    \begin{equation}
        \dot \kappa_K(t) = \dot o(t) = \nabla \bar \psi (\bar \kappa(t)) \dot{\bar \kappa}(t) =  \nabla \bar \psi (\bar \kappa(t)) \bar G(\bar \kappa(t),u(t)) := G_K(\kappa(t),u(t)),
    \end{equation}
    which is a function of $\kappa(t)$ and $u(t)$ only, respecting a self-contained dynamical system equation. Moreover, we can also define $\dot \kappa_i(t) = \bar G_i(\kappa_1(t),\ldots,\kappa_{K-1}(t),u(t)) = G_i(\kappa(t),u(t))$ for $i = 1,\ldots, K-1$, also self-contained. Thus, jointly we can define a new LPU with the following set of equations:
    \begin{equation}
        \dot \kappa(t) = G(\kappa(t),u(t)).
    \end{equation}
    In other words, there is a new LPU that can be supported by the neural activities, which stores not only the same latent variables as before in its first $K-1$ entries, but also sustains the output in its $K$th entry. 
\end{proof}

Then, one can extend the process described in Lemma \ref{lm1} to a scenario with $K$-dimensional LPU and $B$ outputs by induction. As a caveat, it is worth noting that Lemma \ref{lm1} does not guarantee that the neural tuning properties with respect to the original latent variables will remain unchanged. Similarly, the weight matrix that supports the computation before and after the extension does not necessarily remain invariant. Instead, Lemma \ref{lm1} suggests that the same group of neurons can be repurposed, perhaps with a different set of parameters, to support the extended LPU as long as $N_{\rm rec} \gg K$. Our focus in this work is on the invariance of latent dynamics, though future work may be interested in studying conditions that allow defining LPUs as truly mathematical constructs with no changes in any real observables, \textit{e.g.}, when also the neural activities remain invariant under these extensions/transformations. That being said, Lemma \ref{lm1} has a desirable corollary for the biological relevance of neural computation with LPUs:

\begin{corollary}[Linear decoding from neural activities] \label{cor1}
    An output defined on the linearly encoded LPUs as $o(t) = \kappa_K(t)$ can be linearly decoded from neural activities, $r(t)$.
\end{corollary}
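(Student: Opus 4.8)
The plan is to observe that the statement is an immediate consequence of the linear-encoding property that is carried through Lemma~\ref{lm1}. By construction, the extended LPU $\kappa(t) \in \mathbb{R}^K$ produced in that lemma is linearly encoded in the neural activities $r(t)$: there exists an encoding matrix $N \in \mathbb{R}^{K \times N_{\rm rec}}$ with $\kappa(t) = N r(t)$. Denoting by $n^{(p)T}$ the $p$th row of $N$ (equivalently, the encoding vector of $\kappa_p$), this encoding reads componentwise as $\kappa_p(t) = n^{(p)T} r(t)$.

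First I would specialize this relation to the output coordinate $p = K$. Since $o(t) = \kappa_K(t)$ by hypothesis, substituting the encoding map yields directly
\begin{equation}
    o(t) = \kappa_K(t) = n^{(K)T} r(t),
\end{equation}
which is a linear functional of $r(t)$ with the fixed decoding vector $n^{(K)}$. This exhibits the desired linear decoder explicitly and completes the argument.

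There is essentially no technical obstacle here: the entire content lies in recognizing that Lemma~\ref{lm1} does not merely assert the existence of an extended LPU, but that this LPU \emph{retains the linear-encoding structure}, since the originally nonlinear readout $\bar\psi(\bar\kappa)$ has been promoted to a genuine latent coordinate $\kappa_K$ rather than left as a nonlinear function of the lower-dimensional system. The one point worth stating carefully is that the decoding vector $n^{(K)}$ is time-independent and task-fixed, so that a single linear projection of $r(t)$ recovers $o(t)$ along the entire trajectory. This is precisely the biologically meaningful claim: a downstream circuit reading out a fixed linear combination of neural activities recovers the output without ever inverting the nonlinear embedding or identifying the latent variables themselves.
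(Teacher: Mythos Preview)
Your proposal is correct and essentially identical to the paper's own proof: both simply observe that since $\kappa_K(t)$ is linearly encoded via the $K$th row of $N$, the output $o(t) = \kappa_K(t)$ is recovered by the fixed linear projection $n^{(K)T} r(t)$ (the paper includes the $1/N_{\rm rec}$ normalization, but this is a notational convention). Your additional commentary on why Lemma~\ref{lm1} preserves the linear-encoding structure and on the time-independence of the decoding vector is accurate and adds useful context, though it goes beyond what the paper records.
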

\begin{proof}
    Since $\kappa_K(t)$ is linearly encoded, one can define the output as:
    \begin{equation}
        o(t) = \frac{1}{N_{\rm rec}} n^{(K)T}r(t),
    \end{equation}
    where $\frac{1}{N} n^{(K)T}$ is defined as the $K$th row vector of $N \in \mathbb R^{N_{\rm rec} \times K}$. This concludes the proof.
\end{proof}
The extension of the LPU to a slightly higher dimension may seem like a restriction. However, since we assume $K \ll N_{\rm rec}$, increases in the form of $B \sim O(K)$ do not change the scaling relationships, and may even be desirable for robust operation. It is worth highlighting that Lemma \ref{lm2} does not guarantee that an output from the LPU will be linearly accessible; it only means that they could be. A slightly larger (yet equivalent) dynamical system can be designed that contains the same latent variables as a subsystem, but also allows linear readout. By using slightly more resources, biological and/or artificial networks can, but not necessarily have to, allow linear readouts from their LPUs.

\subsubsection*{Linear readouts from LPUs without explicit identification} 

Corollary \ref{cor1} suggests that the output, if linearly encoded into a latent variable, can be accessed with a linear readout from neural activities. However, as we discussed in the previous section, latent variables have $GL_K(\mathbb R)$ symmetry, meaning that assigning the behavioral output to a particular variable may break the symmetry in a biologically irrelevant manner. In fact, networks may not keep track of their latent variables explicitly, but rather use the circuits implicitly through matrix multiplication as in Definition \ref{def3}. 

To keep generality, instead of enforcing $\kappa_K(t) = o(t)$, we define the output as a linear readout from LPUs as $\tilde \psi(\kappa(t)):=\tilde W_{\rm out} \kappa + \tilde b_{\rm out}$ for some parameters $W_{\rm out} \in \mathbb R^{B \times K}$ and $b_{\rm out}\in \mathbb R^B$. This definition respects the $GL_K(\mathbb R)$ symmetry, since the transformation on the latent variables would still lead to another linear readout. Fortunately, as long as $\tilde \psi(\kappa(t))$ can be computed directly from $r(t)$, the behavioral readout can be achieved without having to identify the state variables $\kappa(t)$, making this symmetry irrelevant for the output. We formalize this observation with the following Lemma:
\begin{lemma}[Equivalence of decoding from neural and latent activities] \label{lm2}
    Let $r(t) \in \mathbb{R}^{N_{\rm rec}}$ represent neural activities and $\kappa(t) \in \mathbb{R}^K$ be the corresponding latent variables obtained via a linear encoding, where $\kappa(t) = N r(t)$ for some rank-K matrix $N \in \mathbb{R}^{K \times N_{\rm rec}}$. Let $P_N$ be the projection operator onto the row space of $N$, and $\hat y \in \mathbb{R}^B$ denote the behavior readouts from the network, where $N_{\rm rec} \gg K \geq B$. We define a linear decoder on neural activities as any function $\psi$ such that: $\hat{y} := \psi(r) = W_{\rm out} r + b_{\rm out}$, where $W_{\rm out} \in \mathbb{R}^{B \times N_{\rm rec}}$ and $b_{\rm out} \in \mathbb{R}^B$ are trainable parameters. Then, for any latent decoder $\tilde{\psi}: \mathbb{R}^K \to \mathbb{R}^B$, there is at least one equivalent linear decoder $\psi(r)$ such that
\begin{equation}
    \hat{y} := \psi(r) = \tilde{\psi}(\phi(r)).
\end{equation}
     Conversely, for any linear decoder $\psi(r)$ satisfying the relationship $W_{\rm out}(1-P_N) = 0$ (referred to as ``linear bottleneck''), and only for those that do, there is at least one equivalent linear latent decoder $\tilde \psi(\kappa)$.
\end{lemma}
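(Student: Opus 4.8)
The plan is to exploit the single structural fact that the encoding is the \emph{linear} map $\phi(r) = N r$ with $N$ of full row rank $K$, so that $N N^T$ is invertible and the orthogonal projection onto the row space of $N$ is the explicit matrix $P_N = N^T (N N^T)^{-1} N$, satisfying $P_N^2 = P_N = P_N^T$ together with the key identity $N P_N = N$. The ambient space splits as $\mathbb{R}^{N_{\rm rec}} = \mathrm{row}(N) \oplus \ker(N)$, with $P_N$ and $1 - P_N$ the corresponding orthogonal projections; since $\kappa = N r = N P_N r$, the latent variables see only the $\mathrm{row}(N)$-component of $r$ and are completely blind to $\ker(N)$. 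Both directions of the lemma are then statements about when a readout on $r$ respects this blindness, and I would set up this decomposition first.

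For the forward direction I would simply compose. Given a latent decoder $\tilde\psi$, set $\psi := \tilde\psi \circ \phi$, so that $\psi(r) = \tilde\psi(N r) = \tilde\psi(\phi(r))$ holds by construction. When $\tilde\psi$ is the affine readout $\tilde\psi(\kappa) = \tilde W_{\rm out}\kappa + \tilde b_{\rm out}$ used throughout (consistent with Corollary \ref{cor1}), this composition is itself affine in $r$, namely $\psi(r) = (\tilde W_{\rm out} N)\,r + \tilde b_{\rm out}$, which exhibits the equivalent linear decoder with $W_{\rm out} = \tilde W_{\rm out} N$ and $b_{\rm out} = \tilde b_{\rm out}$. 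This establishes existence with no further work.

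The converse is the substantive half, and I would prove it as an equivalence characterizing exactly which linear decoders $\psi(r) = W_{\rm out} r + b_{\rm out}$ factor through $\kappa$, i.e.\ can be written as $\tilde\psi(N r)$ for some linear $\tilde\psi$. For sufficiency, assuming the linear bottleneck $W_{\rm out}(1 - P_N) = 0$, I would write $W_{\rm out} = W_{\rm out} P_N$ and define $\tilde W_{\rm out} := W_{\rm out} N^T (N N^T)^{-1}$ and $\tilde b_{\rm out} := b_{\rm out}$; the identity $N P_N = N$ then gives $\tilde W_{\rm out} N = W_{\rm out} P_N = W_{\rm out}$, so $\tilde\psi(N r) = W_{\rm out} r + b_{\rm out} = \psi(r)$. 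For necessity (the ``and only for those'' clause), I would argue contrapositively: if $W_{\rm out}(1 - P_N) \neq 0$, then since $1 - P_N$ projects onto $\ker(N)$ there exists $v \in \ker(N)$ with $W_{\rm out} v \neq 0$; comparing $r$ and $r + v$ gives $N(r+v) = N r$ yet $\psi(r+v) - \psi(r) = W_{\rm out} v \neq 0$, so $\psi$ takes distinct values at two inputs sharing the same $\kappa$ and thus cannot equal any function of $\kappa$ alone.

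I do not expect a genuine obstacle: the entire content is the $\mathrm{row}(N) \oplus \ker(N)$ decomposition and the projector identity $N P_N = N$. The two points requiring care are (i) keeping the left/right action of $P_N$ straight, since the bottleneck constrains the \emph{rows} of $W_{\rm out}$ to lie in $\mathrm{row}(N)$, which is precisely what $W_{\rm out}(1 - P_N) = 0$ encodes when $1 - P_N$ is symmetric; and (ii) a notational reconciliation in the forward direction, where the clause ``equivalent \emph{linear} decoder'' is faithful exactly when $\tilde\psi$ is taken affine, the general composition $\tilde\psi \circ \phi$ reproducing an arbitrary $\tilde\psi$ but not in general being linear. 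I would flag this caveat explicitly so that both directions sit cleanly within the linear-readout setting.
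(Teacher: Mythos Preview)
Your proposal is correct and follows essentially the same approach as the paper: both directions hinge on the explicit projector $P_N = N^T(NN^T)^{-1}N$, with the forward direction a direct composition yielding $W_{\rm out} = \tilde W_{\rm out} N$, the sufficiency of the bottleneck via $\tilde W_{\rm out} := W_{\rm out} N^T(NN^T)^{-1}$, and necessity by exhibiting two neural states with the same $\kappa$ but different readouts (the paper uses $r_\pm = r_\parallel \pm r_\perp$ where you use $r$ and $r+v$ with $v\in\ker N$, which is the same idea). Your explicit flag that the ``equivalent \emph{linear} decoder'' clause in the forward direction is faithful only when $\tilde\psi$ is itself affine is a useful clarification that the paper's proof handles by simply starting from a linear $\tilde\psi$.
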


\begin{proof} 
To start the proof, we construct a general linear latent decoder, $\tilde \psi(r):\mathbb R^K \to \mathbb R^B$, from the latent variables as:
\begin{equation}
  \hat y:= \tilde \psi(\kappa) =\tilde W_{\rm out} \kappa + \tilde b_{\rm out},
\end{equation}
where $\tilde W_{\rm out} \in \mathbb R^{B\times K}$ is a set of decoder weights and $\tilde b_{\rm out}\in \mathbb R^B$ are biases. As a first step, by simply replacing $\kappa = Nr$, we prove the first statement of the theorem:
\begin{equation}
    \psi(r) = \tilde{\psi}(\phi(r)) = \underbrace{\tilde W_{\rm out} N r}_{W_{\rm out}} + \underbrace{\tilde b_{\rm out}}_{b_{\rm out}}.
\end{equation}
Thus, for any latent decoder, there is an equivalent decoder from neural activities such that $W_{\rm out} = \tilde W_{\rm out} N$ and $b_{\rm out}=\tilde b_{\rm out}$. To prove the converse direction, we start by writing:
\begin{equation}
\begin{split}
    \hat y = \psi(r) &= W_{\rm out} r + b_{\rm out} = W_{\rm out} [P_Nr + (1-P_N)r] + b_{\rm out} = W_{\rm out} P_Nr + \underbrace{W_{\rm out}(1-P_N)}_{=0}r + b_{\rm out}, \\
    &= W_{\rm out} P_Nr + b_{\rm out} =  W_{\rm out} N^T (N N^T)^{-1} \underbrace{N r}_{\kappa} + b_{\rm out}, \\
    &= \underbrace{W_{\rm out} N^T (N N^T)^{-1}}_{\tilde W_{\rm out}} \kappa + \underbrace{b_{\rm out}}_{\tilde b_{\rm out}}.
\end{split}
\end{equation}
Here, we used the fact that $P_N = N^T (N N^T)^{-1} N$. Thus, as long as $W_{\rm out}(1-P_N) = 0$, one can always construct a linear latent decoder, making this a sufficient condition, which ensures that the decoders defined on the neural activities lie on the subspace defined by the encoding weights. 

Finally, we now prove that this condition is also necessary. Suppose, by contradiction, that the condition does not hold. Then the linear readout can be written as:
\begin{equation}
\hat y = \underbrace{W_{\rm out} N^T (N N^T)^{-1} \kappa + b_{\rm out}}{\tilde \psi(\kappa)} + \bar W_{\rm out} r,
\end{equation}
where $\bar W_{\rm out} = W_{\rm out}(1-P_N)$.
If $\hat y$ were truly a function of $\kappa$ alone, then each value of $\kappa$ should map to a unique value of $\hat y$. We can disprove this by construction. Consider two vectors $r_{\pm} = r_\parallel \pm r_\perp$, where: $r_\parallel = P_N r$ and $r_\perp = (1-P_N)r$. Both vectors map to the same $\kappa$:
\begin{equation}
\kappa = N r_{\pm} = N r_\parallel
\end{equation}
However, they produce different outputs:
\begin{equation}
\hat y_{\pm} = \tilde\psi(\kappa) + \bar W_{\rm out} r_\pm = \tilde\psi(\kappa) + \bar W_{\rm out} (r_\parallel \pm r_\perp)
\end{equation}
Since $\bar W_{\rm out} r_\parallel = W_{\rm out} (1-P_N) P_N r = 0$, this simplifies to:
\begin{equation}
\hat y_{\pm} = \tilde\psi(\kappa) \pm \bar W_{\rm out} r_\perp = \tilde\psi(\kappa) \pm \bar W_{\rm out} r_+,
\end{equation}
where $r_+ \in \mathbb R^{N_{\rm rec}}$ by definition above. Therefore, unless $\bar W_{\rm out} = 0$, we have $\hat y_{+} \neq \hat y_{-}$ despite both corresponding to the same $\kappa$. This contradicts our assumption that $\hat y$ is a function of $\kappa$ alone, proving the necessity of the condition and concluding the proof.
\end{proof}

One might wonder whether the linear bottleneck condition, \textit{i.e.}, $W_{\rm out}(1-P_N) = 0$, is too restricting. Fortunately, the readout too can be incorporated into the LPU dynamics. Specifically, if $W_{\rm out}(1-P_N)\neq 0$ and for simplicity assume $B=1$, we can define $n^{(K+1)} = W_{\rm out}(1-P_N)$ and subsequently $\kappa_{K+1}(t) = n^{(K+1)T} r(t)$. Then, by inspection, $\dot \kappa_{K+1}(t)$ depends only on $\{\kappa_1,\ldots,\kappa_K\}$ since $\dot r(t)$ followed a (nonlinear) embedding equation of the $K$-dimensional LPU by assumption. Thus, similar to our calculations in Lemma \ref{lm1}, it is possible to extend the LPU to a $K+1$-dimensional dynamical system that allows a linear readout.

\subsubsection*{Universal decoding theorem}

Now, we return to our discussion of identifiability of LPUs, this time focusing on the encoding weights. As discussed above, LPUs respect a $GL_K(\mathbb R)$ symmetry. Specifically, we can transform latent variables with an invertible matrix $S \in \mathbb R^{K \times K}$ ($\kappa \to S \kappa$) without changing the LPU operation. Under this transformation, the encoding weights would transform following $N \to S N$. Fortunately, this does not change the encoding subspace (defined by $P_N$), only the basis vectors we choose within it since:
\begin{equation}
    P_N = N^T (N N^T)^{-1} N \to  N^T S^{T} (S N N^T S^T)^{-1} S N = N^T S^T S^{-T} (NN^T)^{-1} S^{-1}S N = N^T (N N^T)^{-1} N = P_N,
\end{equation}
where we used the fact that both matrices $NN^T \in \mathbb R^{K\times K}$ and $S$ are full-rank. The latter is true by definition, whereas the former is true as long as $N$ is rank $K$ (as assumed throughout this work). Thus, the symmetry transformation does not change the encoding subspace. Yet, we defined the latent variables following $\kappa_p(t) = \frac{1}{N_{\rm rec}} n^{(p)T} r(t)$, where $\frac{1}{N_{\rm rec}} n^{(p)}$ is the $p$th column vector of $N$. Therefore, the symmetry transformation ($S$) does change the coordinate system that defines the variables $\kappa(t)$. Then, the question becomes: Should a network keep explicit track of a particular coordinate system for $\kappa(t)$, or just the encoding subspace itself? 

Lemma \ref{lm2} suggests that as long as the behavioral readouts are achieved with linear decoding from LPUs, there is no reason for the downstream units (\textit{e.g.}, a motor neuron) to explicitly keep track of latent variables in a particular coordinate system. On the other hand, this also suggests a trade-off. Specifically, the linear decoder, $\psi(r)$, is constrained to use only the neural activity dimensions living on the encoding subspace. However, an optimal decoder, \textit{i.e.}, one that most accurately predicts the desired outputs, could use redundant information in the full neural activity, potentially leading to higher accuracy. As we show in the next section, this bottleneck condition is desirable to achieve robustness to representational drift, \textit{i.e.}, changes in the tuning properties of individual neurons. For now, we conclude this section by bringing Lemmas \ref{lm1} and \ref{lm2} together to prove a theorem on the identifiability of LPUs and the universality of their outputs:
\begin{theorem}[Universal linear decoding without LPU identification]  \label{thm2}
     Let $r(t) \in \mathbb{R}^{N_{\rm rec}}$ represent neural activities that can linearly encode a $K + B$-dimensional LPU. In the limit $N_{\rm rec} \to \infty$, a particular set of differentiable $B$-dimensional outputs of a universally approximated $K$-dimensional dynamical systems can be linearly decoded from $r(t)$, without explicitly identifying a set of weights ($N$) that encode the latent variables.
\end{theorem}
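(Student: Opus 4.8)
The plan is to assemble this theorem from the three building blocks already established: the universal approximation result (Theorem~\ref{thm1}), the dimension-raising construction (Lemma~\ref{lm1} with Corollary~\ref{cor1}), and the neural--latent decoding equivalence (Lemma~\ref{lm2}). The logical skeleton is to universally approximate the desired $K$-dimensional dynamics, promote each of the $B$ differentiable outputs to a coordinate of an enlarged LPU so that it becomes linearly encoded, and then invoke the equivalence to read it out linearly from $r(t)$ alone.

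First I would invoke Theorem~\ref{thm1}: in the limit $N_{\rm rec}\to\infty$ the linearly encoded latent variables $\bar\kappa(t)\in\mathbb R^K$ can be tuned to follow any target flow map $\dot{\bar\kappa}=\bar G(\bar\kappa,u)$, so the $B$ desired outputs $o_j(t)=\bar\psi_j(\bar\kappa(t))$ are well-defined differentiable functions of a universally approximated $K$-dimensional system. Next I would apply Lemma~\ref{lm1} inductively $B$ times, appending one output coordinate at each step: at step $j$ the derivative $\dot o_j = \nabla\bar\psi_j(\bar\kappa)\,\bar G(\bar\kappa,u)$ depends only on the current latent state and the input, so it closes into a self-contained equation and yields a valid LPU of one higher dimension. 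After $B$ steps one obtains a $(K+B)$-dimensional LPU $\kappa(t)$ whose first $K$ coordinates reproduce $\bar\kappa$ and whose last $B$ coordinates are exactly the outputs $o_j$. The hypothesis that $r(t)$ can linearly encode a $(K+B)$-dimensional LPU is precisely what guarantees the neural activities have enough capacity to support this extension.

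With the outputs now sitting in the last $B$ linearly encoded coordinates, Corollary~\ref{cor1} gives that each $o_j(t)=\tfrac{1}{N_{\rm rec}} n^{(K+j)T} r(t)$ is a linear functional of $r(t)$; stacking these yields a linear decoder $\hat y = W_{\rm out} r(t) + b_{\rm out}$. The final and most delicate step is to argue that this readout requires no explicit knowledge of the encoding matrix $N$. Here I would appeal to Lemma~\ref{lm2}: its existence half guarantees that a linear decoder on $r(t)$ reproducing the latent readout always exists, while the converse, together with the $GL_K(\mathbb R)$ symmetry computation showing $P_N$ is invariant under $N\to SN$, shows that any linear decoder satisfying the bottleneck condition $W_{\rm out}(1-P_N)=0$ automatically corresponds to a latent readout. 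Consequently one may fit $W_{\rm out},b_{\rm out}$ directly against the target outputs using observed neural activities, never solving for $N$ nor fixing a coordinate basis for $\kappa$.

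I expect the main obstacle to be the bookkeeping in the induction step rather than any single hard estimate: one must check that each appended coordinate preserves both properties in Definition~\ref{def2}---self-sufficiency and the universal approximation property inherited from the capacity hypothesis on $r$---and that the dimensional accounting stays consistent so the final system remains within the $K+B$ capacity assumed of $r(t)$. A secondary subtlety is making the phrase ``without identifying $N$'' precise: the clean reading is that the decoder is trained on $r(t)$ against the targets, and the existence statement of Lemma~\ref{lm2} certifies that this fit attains the true output, so identifying $N$ (or any particular basis for $\kappa$) is sufficient but never necessary.
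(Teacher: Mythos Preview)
Your proposal is correct and follows essentially the same route as the paper's proof: invoke Theorem~\ref{thm1} for universal approximation of the $K$-dimensional dynamics, apply Lemma~\ref{lm1} inductively $B$ times to promote each output to a linearly encoded coordinate, and then use Lemma~\ref{lm2} (together with Corollary~\ref{cor1}) to obtain the linear readout from $r(t)$ without identifying $N$. Your discussion of the $GL_K(\mathbb R)$ invariance of $P_N$ and the bottleneck condition is more explicit than the paper's terse proof, but the underlying argument is identical.
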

\begin{proof}
    Theorem \ref{thm1} ensures that, in the limit $N_{\rm rec} \to \infty$, the first $K$ dimensions of the LPU can approximate the latent dynamical system of interest. Applying Lemma \ref{lm1} repetitively ($B$ many times), one can design an extended LPU in which a particular set of $B$ outputs from the target dynamical system can be readout linearly from the remaining $B$ latent variables. Hence, it is possible to design a $K+B$-dimensional LPU that models the $K$-dimensional target dynamical system and provides $B$ linear outputs from the latent variables. Finally, Lemma \ref{lm2} states that any linear readout from the LPU can be achieved directly from the neural activities, which does not require explicit identification of $N$. This concludes the proof.
\end{proof}

The true power of Theorem \ref{thm2} lies in its agnosticism to individual neural processes. As long as the overall nonlinearities are expressive enough, \textit{e.g.}, as in the case for leaky firing rate RNNs above, (extended) LPUs can store desired outputs of the computation in a linearly accessible manner. 

\subsubsection*{Connections to the existing universal approximation theorems}

The mathematical foundation of our work builds upon the seminal universal approximation theorem for feedforward neural networks \cite{hornik1989multilayer}. This theorem later inspired Schaefer and Zimmermann's pivotal proof \cite{schafer2006recurrent}, which demonstrated that discrete recurrent neural networks (RNNs) could universally approximate open dynamical systems. Their proof utilized two distinct neural populations, one implementing the latent dynamics and another handling the readout function. Subsequently, \cite{beiran2021shaping} advanced the field by extending these concepts to continuous leaky current RNN architectures, focusing primarily on modeling latent dynamics while introducing an alternative proof methodology. Although, as our analysis above demonstrates, the classical proof approach utilizing multilayer networks remains applicable to these modern RNN architectures.

Our Theorems \ref{thm1} and \ref{thm2} advance this framework in two significant ways. First, they establish the universal approximation properties for leaky firing rate RNNs, effectively extending the discrete-time results in \cite{schafer2006recurrent} to their continuous counterparts, with a careful consideration of increased latent dimensionality and identifiability issues. Second, most notably, by assuming that the (nonlinear) readout, $\bar \psi(\kappa)$, is differentiable, we have shown that an optimal linear readout can be designed using the same group of neurons as the ones supporting the latent dynamical system, extending the original proof in \cite{schafer2006recurrent}.

\subsection*{\color{subsectioncolor}Redundancy and neural activity manifolds subserved by latent processing units}
In this section, we study the embedding properties of the LPUs onto high-dimensional neural activities. We start by proving that nonlinear embedding function leads to curved neural manifolds, in which  $K$-dimensional LPUs can lead to unbounded/unsaturated scaling in the linear dimensionality of the neural activities with increased number of $N_{\rm rec}$ neurons. Then, we discuss how LPUs naturally give rise to redundant representations, and why this redundancy is necessary for universal computation. Finally, we prove that any noise added to the neural manifold in the direction that is orthogonal to the encoding subspace (defined via $P_N = N^T (NN^T)^{-1} N$) decays exponentially.

\subsubsection*{Curved neural activity manifolds with unbounded linear dimensionality}

We begin by defining linear dimensionality:
\begin{definition}[Linear dimensionality]
    The linear dimensionality of a set of neural activities ($\{r_i[s]\}$ for $s = 1,\ldots, T$ and $i = 1,\ldots, N_{\rm rec}$) is defined as smallest dimension $d$ such that there exists a subspace of dimension $d$ that contains all the points. 
\end{definition}
Since $r[s] \in \mathbb R^{N_{\rm rec}}$ and $\kappa[s] \in \mathbb R^K$, the maximum linear dimensionality is $N_{\rm rec}$. Throughout this section, we assume that all latent subspace is occupied during circuit operation, hence linear dimensionality of the LPU (also referred to as ``latent dimensionality'') is assumed to be $K$. Since noise and transient responses can artificially inflate the linear dimensionality, in this section, we restrict our analysis to latent variables that are noiseless,  $\epsilon$-invariant for time scales much larger than $\tau$, and have negligibly fast transient responses such that the embedding relationship becomes effectively static, as discussed above. Consequently, the results in this section allow us to understand the traditional framework of autoencoders.

Next, we define flat and curved manifolds. While more quantitative definitions are possible (see, \textit{e.g.}, \cite{acosta2023quantifying}), our focus here is on a binary distinction between the two classes. Therefore, we adopt a simplified definition:
\begin{definition}[Flat and Curved Manifolds] 
A neural activity manifold (defined on $r(t)$) is considered flat if the linear dimensionality and the latent dimensionality of the network are equal. If the linear dimensionality exceeds the latent dimensionality, the neural activities are said to lie on a curved manifold. 
\end{definition}

The curvature of neural manifolds plays a central role in our ability to estimate the dimensionality of neural activities. As we have seen above (see Eq. (\ref{eq:lcrnn})), the neural activities in regularly studied low-rank RNN structures reside in flat manifolds \cite{valente2022extracting,dubreuil2022role,beiran2021shaping}. In contrast, as we illustrate in Fig. \ref{fig2}, the networks we defined in Eq. (\ref{def3}) lead to curved manifolds and therefore possess distinct geometrical properties. 

Now, inspired by previous work on counting the piecewise-linear regions of deep neural networks \cite{montufar2014number,pascanu2013number}, we show that LPUs even in simple RNNs can lead to unbounded linear dimensionality:
\begin{proposition}[Unbounded linear dimensionality] \label{prop2}
   Let $r(t) \in \mathbb R^{N_{\rm rec}}$ be the neural activities that embed a low-dimensional LPU $\kappa(t) \in \mathbb R^K$. Let $\kappa$ be $\epsilon$-stable for the range of values $\kappa \in [0,1]$ for some time $T \gg \tau$, leading to approximately static embedding. For a nonlinear embedding function $f_i(.)$ satisfying the condition $f_i(x) = 0$ for some $x \leq 0$ and $f_i(x) > 0$ for $x > 0$, it is possible to design an embedding such that the linear dimensionality in $r(t)$ is equal to the maximum, \textit{i.e.}, $N_{\rm rec}$, for a fixed latent dimensionality $K$. 
\end{proposition}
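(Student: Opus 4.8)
The plan is to exhibit an explicit embedding for which the curve $\kappa \mapsto r(\kappa)$ passes through $N_{\rm rec}$ linearly independent points, using the rectifier property of $f_i$ to make the neurons switch on one at a time as the scalar $\kappa$ sweeps the interval $[0,1]$. Because $\kappa$ is $\epsilon$-stable on a window $T \gg \tau$, the transient has decayed and I may treat the activities through the static tuning relation $r_i(\kappa) = f_i(m_i \kappa + b_i)$ established in Eq.~\eqref{eq:tuning}, reinstating the per-neuron bias $b_i$, which is part of the network in Definition~\ref{def3}. It suffices to work with a single latent coordinate, so I take $K=1$; for larger $K$ one simply drives one coordinate and sets the remaining embedding weights to zero, which cannot decrease the attainable linear dimensionality. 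The guiding intuition is the piecewise-linear region-counting picture of \cite{montufar2014number,pascanu2013number}: distinct activation thresholds carve the latent interval into cells on which different subsets of neurons are active, and each newly activated neuron opens a fresh ambient dimension.

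Concretely, I would choose embedding weights $m_i > 0$ and biases $b_i$ so that the activation thresholds $\kappa_i^\ast := -b_i/m_i$ are distinct and ordered inside the unit interval, $0 < \kappa_1^\ast < \kappa_2^\ast < \cdots < \kappa_{N_{\rm rec}}^\ast < 1$. Reading the hypothesis on $f_i$ as the rectifier condition (vanishing for nonpositive argument, strictly positive for positive argument) then gives
\begin{equation}
  r_i(\kappa) = f_i(m_i \kappa + b_i), \qquad \text{so that } r_i(\kappa) = 0 \text{ for } \kappa \le \kappa_i^\ast \text{ and } r_i(\kappa) > 0 \text{ for } \kappa > \kappa_i^\ast .
\end{equation}
Next I would pick one sample point $\kappa^{(j)}$ from each cell $(\kappa_j^\ast, \kappa_{j+1}^\ast)$ for $j = 1,\ldots,N_{\rm rec}$, with the convention $\kappa_{N_{\rm rec}+1}^\ast := 1$. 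By the display above, $r_i(\kappa^{(j)})$ is strictly positive when $i \le j$ and exactly zero when $i > j$.

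The key step is then a triangularity argument. Collecting the sampled activity vectors as the columns of $R \in \mathbb R^{N_{\rm rec}\times N_{\rm rec}}$ with $R_{ij} = r_i(\kappa^{(j)})$, the support pattern just derived makes $R$ upper triangular with strictly positive diagonal $R_{jj} = r_j(\kappa^{(j)}) > 0$, so $\det R = \prod_j R_{jj} > 0$ and the points $r(\kappa^{(1)}),\ldots,r(\kappa^{(N_{\rm rec})})$ are linearly independent. Since these lie in the neural activity set $\{r(\kappa) : \kappa \in [0,1]\}$, no subspace of dimension below $N_{\rm rec}$ can contain them, and as $N_{\rm rec}$ is also the trivial upper bound the linear dimensionality equals $N_{\rm rec}$. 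Note that $r(0) = 0$ because every threshold is positive, so the linear and affine spans coincide and the argument is unaffected by any mean-centering convention in the definition of linear dimensionality.

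The main obstacle is not the linear algebra, which is routine once the thresholds are ordered, but justifying that the static tuning description is the correct object on which to measure linear dimensionality. I would make this precise via the $\epsilon$-stability and negligible-transient assumptions in the proposition: they guarantee that, up to an $O(\epsilon)$ error that can be absorbed without destroying the strict sign pattern (the diagonal entries $R_{jj}$ are bounded away from zero by construction, and invertibility is an open condition), the realized activities agree with $r_i(\kappa) = f_i(m_i\kappa + b_i)$. A secondary point is merely feasibility of the threshold assignment, i.e.\ that weights and biases with the required ordered $\kappa_i^\ast$ exist, which is immediate.
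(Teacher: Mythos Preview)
Your proposal is correct and follows essentially the same approach as the paper: both construct ordered activation thresholds so that neurons switch on one by one as $\kappa$ sweeps $[0,1]$, then invoke a triangularity argument to show the resulting activity matrix has full rank. Your version is slightly more general (arbitrary ordered thresholds rather than the paper's specific choice $m_i=1$, $b_i=-(i-1)/N_{\rm rec}$) and more careful about the $\epsilon$-stability perturbation and the affine versus linear span distinction, but the core idea is identical.
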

\begin{proof}
    For simplicity, we focus on the case with $K=1$. However, the proof generalizes to higher dimensional LPUs, \textit{e.g.}, by separately considering sets of data points sampled alongside a coordinate $\kappa_d$ such that $\kappa_i = 0$ for all $i\neq d$. From now on, we refer to a one-dimensional latent variable as $\kappa \in \mathbb R$. We prove this proposition by \emph{constructing} a simple positive example. Consider the tuning curve for a single latent variable (transforming Eq. (\ref{eq:tuning})):
    \begin{equation}
        r_i(\kappa) = f_i(m_i \kappa + b_i)
    \end{equation}
    where the $\epsilon$-stability of $\kappa$ allowed us to define the static embedding relationship. Then, by setting all embedding weights to one, $m_i = 1$, and defining a neuron-specific bias $b_i = \frac{i-1}{N_{\rm rec}}$ for $i = 1,\ldots,N_{\rm rec}$, we obtain:
    \begin{equation}
        r_i(\kappa) = f_i\left(\kappa-\frac{i-1}{N_{\rm rec}}\right), \quad \text{for} \quad \kappa \in [0,1].
    \end{equation}
     Now, consider the collection of points $\{r^{(j)}\}$ that result from the embedding $\kappa \in \{0,\frac{1}{N_{\rm rec}},\frac{2}{N_{\rm rec}}, \ldots, \frac{j-1}{N_{\rm rec}}, \ldots  1 \}$ for $j = 1,\ldots, N_{\rm rec} + 1$. Then, we can show that the subspace defined by the points $\{r^{(j)}\}$ is $N_{\rm rec}$ dimensional. For $j=1, ..., N_{\rm rec} +1$ and any given $i=1, ..., N_{\rm rec}$, the following is true:
    \begin{equation}
        r_i^{(j)} = 
        \begin{cases}
            0 \quad \text{if}\quad i \geq j, \\
            f_i\left( \frac{j-i}{N_{\rm rec}}\right) \quad \text{o.w.}
        \end{cases}
    \end{equation}
    Then, noting that $r^{(1)}=0$, we can define the vectors spanning the subspace using $\{r^{(j)}\}$ for $j\geq 2$. Arranging these vectors as rows yields a $N_{\rm rec} \times N_{\rm rec}$ matrix where each row $r^{(j)}$ contains non-zero elements only in its first $j-1$ positions, corresponding to the neuron indices for which the embedding function is non-zero. This structure guarantees that the matrix is lower triangular, with non-zero diagonal elements since $f(x) >0$ for $x>0$. Since a lower triangular matrix with non-zero diagonal entries has full rank, these $N_{\rm rec}$ vectors are linearly independent. Hence, they span an $N_{\rm rec}$ dimensional subspace. As there are $N_{\rm rec}$ neurons, $N_{\rm rec}$ is the maximum linear dimensionality that can be achieved, concluding the proof.
\end{proof}

Proposition \ref{prop2} demonstrates that even simple networks can exhibit unbounded linear dimensionality, despite having only a single latent variable:

\begin{corollary}[Unbounded Linear Dimensionality with a Single Latent Variable] \label{cor2}
Let $\kappa \in \mathbb R$ be statically embedded into neural activities $r \in \mathbb R^{N_{\rm rec}}$ through a combination of linear and nonlinear operations, using activation functions such as the rectified linear unit (ReLU) or step nonlinearity. With an appropriate choice of bias parameters, embedding $\kappa$ values confined to the interval $[0,1]$ can result in neural activities that achieve the full linear dimensionality of $N_{\rm rec}$. 
\end{corollary}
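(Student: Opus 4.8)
The plan is to recognize that this corollary is essentially the $K=1$ instance of Proposition \ref{prop2}, so the entire argument reduces to checking that the two named nonlinearities meet the hypothesis of that proposition and then invoking its construction verbatim. First I would record that the ReLU, $f(x) = \max(0,x)$, satisfies $f(x) = 0$ for every $x \leq 0$ (in particular for some $x \leq 0$) and $f(x) = x > 0$ for $x > 0$; likewise the step nonlinearity $f(x) = \mathbbm 1[x>0]$ satisfies $f(x) = 0$ for $x \leq 0$ and $f(x) = 1 > 0$ for $x > 0$. Both therefore lie in the class of embedding functions admitted by Proposition \ref{prop2}.

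Second, I would instantiate the construction of Proposition \ref{prop2} with a single latent variable $\kappa \in \mathbb R$: set every embedding weight to $m_i = 1$ and choose the neuron-specific biases $b_i = -(i-1)/N_{\rm rec}$, so that $r_i(\kappa) = f\!\left(\kappa - \tfrac{i-1}{N_{\rm rec}}\right)$ for $\kappa \in [0,1]$. Sampling $\kappa$ at the $N_{\rm rec}+1$ equally spaced points $\{0, \tfrac{1}{N_{\rm rec}}, \ldots, 1\}$ and discarding $\kappa = 0$ (which maps to the origin) produces $N_{\rm rec}$ activity vectors. Stacking these as rows gives a matrix whose entries are $f\!\left(\tfrac{j-i}{N_{\rm rec}}\right)$, vanishing whenever $i \geq j$ and strictly positive whenever $i < j$; after relabeling rows this is lower triangular with all diagonal entries equal to $f(1/N_{\rm rec}) > 0$, hence of full rank $N_{\rm rec}$. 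Since the sampled activities already span an $N_{\rm rec}$-dimensional subspace and $N_{\rm rec}$ is the ambient maximum, the linear dimensionality is exactly $N_{\rm rec}$, as claimed.

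I do not expect a genuine obstacle, since this is a direct corollary; the only point requiring care is the value of the nonlinearity exactly at its kink $x = 0$. For the ReLU this is harmless because $f(0)=0$ automatically places the $i=j$ entries on the correct (zero) side of the triangular pattern. For the step function I would adopt the convention $f(0) = 0$; should one instead insist on $\mathbbm 1[x \geq 0]$, the $i = j$ entries no longer vanish and two sampled rows coincide, so I would restore strict triangularity by perturbing the biases to $b_i = -(i-1)/N_{\rm rec} - \delta$ for an arbitrarily small $\delta \in (0, 1/N_{\rm rec})$, which pushes every $i = j$ argument strictly negative while keeping all $i < j$ arguments positive. Either resolution preserves the triangular, full-rank structure and completes the proof.
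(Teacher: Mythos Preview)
Your proposal is correct and matches the paper's approach: the corollary is stated immediately after Proposition~\ref{prop2} with no separate proof, the intent being exactly what you describe---observe that ReLU and the step function satisfy the hypothesis $f(x)=0$ for $x\leq 0$ and $f(x)>0$ for $x>0$, then apply the $K=1$ construction verbatim. Your handling of the kink at $x=0$ for the step function is more careful than anything the paper spells out.
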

In this section, we have provided all our proofs in static embedding limit. However, this is only a limit under restrictive assumptions, in which $\kappa$ are assumed to evolve slowly. Since the dynamical embedding framework also captures this scenario, one can consider these results as possible lower bounds for what can be achieved in dynamical networks. Therefore, an unbounded linear dimensionality observed in real neural recordings (see \cite{stringer2019high,manley2024simultaneous}) does not necessarily provide information regarding the latent dimensionality, \textit{i.e.}, the true dimensionality of the neural code.

\subsubsection*{Theoretical analysis of representational redundancy and ineffective perturbation dimensions}

For this section, we start from the time-evolution of neural activities, described in Eqs. (\ref{eq:neural_general}) and (\ref{eq:latent_general}), by dividing the neural activities into two:
\begin{equation}
    r(t) = P_Nr + (1-P_N)r(t) = r_\parallel (t) + r_\perp(t),
\end{equation}
where we define $P_N$ as the projection to the encoding subspace spanned by $N \in \mathbb R^{K \times N_{\rm rec}}$ (assumed to be full rank, \textit{i.e.}, rank $K$), whereas $r_\parallel(t)$ and $r_\perp (t)$ stand for parallel and orthogonal neural activity components, respectively. We first note that the latent activities depend only on $r_\parallel(t)$ since
\begin{equation}
    \kappa(t) = N r(t) = (NN^T) (N N^T)^{-1} Nr(t)  = N \underbrace{[N^T (N N^T)^{-1} N]}_{P_N} r(t) = N P_N r(t) = N r_\parallel(t).
\end{equation}
Thus, for the purpose of this section, we refer to the embedding map as $\varphi(r_\parallel(t),u(t))$, where $r_\parallel(t)$ contains the dependence on $\kappa(t)$. Then, without loss of generality, the RNN equations in Eq. (\ref{eq:neural_general}) can be written as:
\begin{equation} \label{eq:neural_simplified}
    \tau \dot r(t) = -r(t) + \varphi(r_\parallel(t),u(t)).
\end{equation}
We use this form in our proofs for the rest of this section. We start with a lemma:
\begin{lemma} \label{lm3}
     Let $r^{(1/2)}(t) \in \mathbb{R}^{N_{\rm rec}}$ represent the neural activities that support a LPU $\kappa \in \mathbb{R}^K$, linearly encoded by $N \in \mathbb{R}^{K \times N_{\rm rec}}$. Let $P_N(r^{(1)}-r^{(2)}) = 0$, where $P_N$ is the projection operator to the row subspace of $N$. Then, the difference, $\Delta r(t) = r^{(1)}(t)-r^{(2)}(t)$, decays exponentially over time and does not affect the LPU operation.
\end{lemma}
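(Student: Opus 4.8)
The plan is to exploit the decoupling of the neural dynamics along the encoding subspace and its orthogonal complement, which is already built into the simplified equation $\tau \dot r(t) = -r(t) + \varphi(r_\parallel(t),u(t))$, where $r_\parallel = P_N r$ carries all the $\kappa$-dependence. First I would apply the projectors $P_N$ and $(1-P_N)$ to this equation for a single trajectory. Because $P_N$ is a constant (time-independent) linear operator and therefore commutes with $d/dt$, projecting onto the row space of $N$ yields a \emph{closed} equation $\tau \dot r_\parallel(t) = -r_\parallel(t) + P_N \varphi(r_\parallel(t),u(t))$ whose right-hand side depends only on $r_\parallel$, while projecting onto the complement yields $\tau \dot r_\perp(t) = -r_\perp(t) + (1-P_N)\varphi(r_\parallel(t),u(t))$, whose homogeneous part is the pure leak $-r_\perp$.

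Second, I would argue that the two parallel components coincide for all time. The hypothesis $P_N(r^{(1)}-r^{(2)})=0$ says $r^{(1)}_\parallel = r^{(2)}_\parallel$ at the initial instant; since both obey the \emph{same} closed ODE for $r_\parallel$, uniqueness of solutions (guaranteed whenever $\varphi$ is Lipschitz, as it is for the $\tanh$ embedding of the leaky firing-rate RNN) forces $r^{(1)}_\parallel(t)=r^{(2)}_\parallel(t)$ for all $t$. This is the step I expect to require the most care, because it is where the hypothesis actually enters and where one must invoke a regularity assumption on $\varphi$ to rule out trajectory splitting; without uniqueness the embedding terms below need not cancel and the conclusion can fail.

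Third, with the parallel components identified, I would subtract the two perpendicular equations. The embedding terms $(1-P_N)\varphi(r^{(1)}_\parallel,u)$ and $(1-P_N)\varphi(r^{(2)}_\parallel,u)$ are now equal and cancel, leaving the homogeneous relaxation $\tau \dot{\Delta r}(t) = -\Delta r(t)$ for the difference $\Delta r = r^{(1)}-r^{(2)}$ (which is purely orthogonal, since $P_N \Delta r = 0$). Integrating gives $\Delta r(t) = \Delta r(t_0)\, e^{-(t-t_0)/\tau}$, i.e.\ exponential decay at the neuronal timescale. Finally, because $\kappa = N r = N P_N r$ and $N \Delta r = N(1-P_N)\Delta r = 0$ throughout, the latent variables of the two trajectories are identical at every instant, so the orthogonal difference is invisible to the LPU. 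This establishes both conclusions: the exponential decay of $\Delta r$ and the invariance of the LPU operation.
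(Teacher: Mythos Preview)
Your proposal is correct and follows essentially the same route as the paper: decompose $r$ into $r_\parallel$ and $r_\perp$, observe that $r_\parallel$ obeys a closed equation so the two parallel components coincide, then subtract the perpendicular equations to obtain $\tau\dot{\Delta r}=-\Delta r$ and hence exponential decay, with the LPU unaffected because $\kappa$ depends only on $r_\parallel$. Your treatment is in fact slightly more careful than the paper's, which takes $r^{(1)}_\parallel(t)=r^{(2)}_\parallel(t)$ for all $t$ as given ``by definition'' rather than deriving it from an initial-time hypothesis plus uniqueness of solutions; your explicit invocation of a Lipschitz condition on $\varphi$ fills a small gap the paper leaves implicit.
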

\begin{proof}
     By definition, the two trajectories only differ in their components orthogonal to the encoding subspace, \textit{i.e.}, we can define $r_\parallel = P_N r^{(1)} = P_N r^{(2)}$. Then, following Eq. (\ref{eq:neural_simplified}), we can write the time evolution of $r_\perp^{(1/2)} = (1-P_N)r^{(1/2)}$ as:
    \begin{align}
        \tau \dot r_\perp^{(1/2)} = -r_\perp^{(1/2)} + [I-P_N] \varphi(r_\parallel(t),u(t)),
    \end{align}
    where $r_\parallel(t)$ follows a set of equations ($\tau \dot r_\parallel = -r_\parallel + P_N \varphi(r_\parallel(t),u(t))$) independent of $r_\perp$. Then, for simplicity and since it is $r_\perp$-independent, we refer to $[I-P_N] \varphi(r_\parallel(t),u(t))$ as $h(t)$. Let us now study the deviations ($\Delta r_\perp$) between two trajectories $r_\perp^{(1)}$ and $r_\perp^{(2)}$:
    \begin{subequations}
        \begin{align}
            \tau \dot r_\perp^{(1)} &= -r_\perp^{(1)}  + h(t), \\
            \tau \dot r_\perp^{(2)} &= -r_\perp^{(2)}  + h(t), 
        \end{align}
    \end{subequations}
    where we note that $\Delta r(t) = r^{(1)}(t)-r^{(2)}(t) = r_\perp^{(1)} - r_\perp^{(2)}$. Thus, subtracting both equations leads to:
    \begin{equation}
        \tau \dot \Delta r (t) = - \Delta r(t) \implies \Delta r (t) = \Delta r (0) \exp(-t/\tau).
    \end{equation}
    Since the LPU equations above depend on $r_\parallel(t)$, which is the same for both $r^{(1/2)}$, $\Delta r(t)$ does not affect the LPU operation, concluding the proof. 
\end{proof}
Lemma \ref{lm3} suggests an interesting observation. Although neural activity manifolds are curved, deviations tangential to the $N_{\rm rec}-K$ dimensional subspace defined by $1-P_N$ decay exponentially back to the curved neural activity manifold. A similar observation has been made for the existing low-rank RNN architectures, whose neural activities are constrained to $K$-dimensional flat subspaces \cite{valente2022extracting}. However, unlike the prior architectures, the curvature of the neural manifolds endows the general class of linearly-encoded latent networks from definition \ref{def3} with the ability to also support redundant information:
\begin{lemma}[Neural redundancy]\label{lm4}
    Let $r(t) \in \mathbb R^{N_{\rm rec}}$ be neural activities supporting a LPU $\kappa \in \mathbb R^K$ linearly encoded with a rank-K matrix $N \in \mathbb R^{K \times N_{\rm rec}}$. Let $\kappa$ be $\epsilon$-stable for some time $T \gg \tau$, leading to approximately static embedding. Then, neural activities projected onto a subspace orthogonal to the encoding subspace are still tuned to the LPU, \textit{i.e.}, internal coding variables.
\end{lemma}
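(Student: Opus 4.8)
The plan is to combine the static-embedding tuning-curve formula derived just above (Eq.~\eqref{eq:tuning}) with the orthogonal decomposition $r = P_N r + (1-P_N)r$ already exploited in Lemma~\ref{lm3}. Under the $\epsilon$-stability hypothesis with $T \gg \tau$ and no input, the transient term relaxes and each neuron's steady-state activity becomes a function of the frozen latent state, $r_i(\kappa) = \tilde f_i(\{m^{(p)}_i \kappa_p\})$. Since the full vector $r(\kappa)$ is then a deterministic map of $\kappa$, I would first record the elementary fact that any fixed linear projection of it is likewise a function of $\kappa$. In particular, writing $r_\perp(\kappa) := (1-P_N)\,r(\kappa) = (1-P_N)\,\tilde f(\{m^{(p)}_i \kappa_p\})$, the orthogonal component is itself a function of the LPU state.

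The decisive observation to highlight is that $N r_\perp(\kappa) = N(1-P_N)r = 0$, so the orthogonal component contributes \emph{nothing} to the encoded value $\kappa = N r = N r_\parallel$, yet the map $\kappa \mapsto r_\perp(\kappa)$ is generically non-constant. This is exactly the decoupling of tuning from encoding: tuning is dictated by the embedding weights $m^{(p)}$ acting through $\tilde f$, whereas encoding is dictated by the rows of $N$, and these two subspaces need not align. A projected population direction can therefore report the latent state faithfully while carrying zero causal weight in the LPU.

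The step I expect to be the crux is establishing that $r_\perp(\kappa)$ depends \emph{non-trivially} on $\kappa$ rather than collapsing to a constant. For the representative leaky firing-rate embedding this is settled by differentiation, $\partial r_\perp/\partial \kappa_p = (1-P_N)\bigl[\tilde f_i'\,m^{(p)}_i\bigr]_i$, so $r_\perp$ is tuned to $\kappa_p$ precisely when the embedding tangent vector $[\tilde f_i'\,m^{(p)}_i]_i$ has a nonzero component outside the encoding subspace. Were this component to vanish for every $p$, the entire $\kappa$-dependence of $r$ would lie inside the $K$-dimensional row space of $N$, forcing a flat manifold of linear dimensionality $K$; this contradicts the curved-manifold regime produced by the nonlinear embedding (cf.\ Proposition~\ref{prop2} and Fig.~\ref{fig2}). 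Hence on any curved neural activity manifold the orthogonal projection necessarily retains genuine tuning to the LPU.

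Finally I would tie this back to Lemma~\ref{lm3} to forestall an apparent tension: Lemma~\ref{lm3} shows that \emph{deviations} of $r_\perp$ away from its $\kappa$-determined steady state decay exponentially, whereas the present lemma concerns the steady state $r_\perp(\kappa)$ to which the dynamics relax, which is itself $\kappa$-dependent. Together they say that perturbations orthogonal to the encoding subspace are washed out, yet the equilibrium orthogonal activity still encodes $\kappa$ in a read-out sense but not in a causal sense --- the precise meaning of a redundant-but-informative representation.
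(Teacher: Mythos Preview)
Your proposal is correct and follows essentially the same approach as the paper: decompose $r$ into $P_N r$ and $(1-P_N)r$, apply the $\epsilon$-stability/steady-state argument to obtain $r_\perp(\kappa) = (1-P_N)\varphi(\kappa)$, and conclude this is generically a non-constant function of $\kappa$. The paper handles the non-triviality step with a one-line ``in general, this is a non-zero function of $\kappa$,'' whereas you supply the differentiation argument and the curved-manifold contrapositive; this extra rigor and the closing reconciliation with Lemma~\ref{lm3} go beyond what the paper's own proof provides, but the core line of reasoning is the same.
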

\begin{proof}
    We can, once again, rewrite Eq. (\ref{eq:neural_simplified}) by dividing $r(t)$ into two components:
    \begin{subequations}
    \begin{align}
        \tau \dot r_\parallel = -r_\parallel + P_N \varphi(r_\parallel(t),u(t)),\\
        \tau \dot r_\perp = -r_\perp + [I-P_N] \varphi(r_\parallel(t),u(t)).
    \end{align}
    \end{subequations}
    The parallel component follows a self-sufficient dynamical system, whereas the orthogonal component is coupled. Notably, the LPU is fully accessible via the parallel component. Specifically, by multiplying both sides of the first equation with $N$ from the left, we end up with Eq. (\ref{eq:latent_general}). Thus, the second equation is redundant as far as the LPU is concerned. On the other hand, writing the latent dependence explicitly, we arrive at the following equation:
    \begin{equation}
        \tau \dot r_\perp = -r_\perp + [I-P_N] \varphi(\kappa(t),u(t)).
    \end{equation}
   Assuming $\epsilon$-stability of $\kappa(t)$ and no input similar to above, this equation leads to tuning curves with respect to the latent variables:
    \begin{equation}
        r_{\perp}(\kappa) = [I-P_N] \varphi(\kappa(t)).
    \end{equation}
    In general, this is a non-zero function of $\kappa$, concluding the proof.
\end{proof}
Lemma \ref{lm4} suggests that even though the neural activity components orthogonal to the encoding subspace do not effect the LPU operation, they contain redundant information on the latent variables due to nonlinear embedding maps. Therefore, if we enforce linear encoding (due to several interpretability, accessibility, and bio-plausibility reasons as described above), the embedding has to be nonlinear to achieve universal approximation (Table \ref{tabs2}) and thereby lead to unavoidable redundancy.

We can illustrate the concept of unavoidable redundancy with a simple example of a single latent variable with a (nonlinear) tuning curve $r_i(\kappa) = f(m_i\kappa)$, with the encoding and embedding weights $n\in\mathbb R^{N_{\rm rec}}$ and $m \in \mathbb R^{N_{\rm rec}}$, respectively. Without loss of generality (\textit{i.e.}, by rescaling $m$ appropriately such that $W^{\rm rec} = mn^T$ remains invariant), assume that $||n||_2^2 = 1$ such that the one-dimensional encoding subspace is defined by $P_N = nn^T$. Now, assume that for a particular neuron, $n_i = 0$ but $m_i \neq 0$. In this case, the neuron does not contribute to the LPU, yet the LPU inform the activity of that neuron. Then, we arrive at 
\begin{equation}
    (r_{\perp})_i(\kappa) = f(m_i\kappa).
\end{equation}
This particular neuron, despite not contributing to the LPU, has a tuning curve, similar to a neuron that does. Such a case could be particularly useful, for example, when different brain regions communicate their computational results with each other. In such cases, the neural activities in a downstream brain region can represent computation performed by a latent variable native to a different region. However, as we formalize in the next theorem, perturbation in these dimensions do not affect the LPU:
\begin{theorem}[Representational redundancy] \label{thm4}
    Let $r(t) \in \mathbb R^{N_{\rm rec}}$ represent neural activities supporting a LPU $\kappa \in \mathbb R^K$, linearly encoded by a rank-$K$ matrix $N \in \mathbb R^{K \times N_{\rm rec}}$. Perturbations leading to changes in the neural activities orthogonal to the encoding subspace, even through neurons that are tuned to the latent variables, do not affect the operation of the LPU and vanish exponentially after the perturbation offset.
\end{theorem}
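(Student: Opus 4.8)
The plan is to derive this theorem as essentially a consequence of Lemmas \ref{lm3} and \ref{lm4}, with the only genuinely new content being a careful formalization of what a ``perturbation orthogonal to the encoding subspace'' means and why it leaves the latent trajectory untouched even when it acts on tuned neurons. First I would recall the split of the dynamics in Eq. \eqref{eq:neural_simplified} into parallel and orthogonal parts, $r = r_\parallel + r_\perp$ with $r_\parallel = P_N r$, which reads
\begin{equation}
\tau \dot r_\parallel = -r_\parallel + P_N \varphi(r_\parallel,u), \qquad \tau \dot r_\perp = -r_\perp + (I-P_N)\varphi(r_\parallel,u).
\end{equation}
The key structural fact, inherited from the derivation preceding Lemma \ref{lm3}, is that $\varphi$ reads only $r_\parallel$ (equivalently $\kappa = N r = N r_\parallel$), so the first equation is autonomous in $r_\parallel$ and carries the entire LPU, while the second is merely slaved to it.

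Next I would model the perturbation as a comparison between the unperturbed trajectory $r^{(1)}$ and a perturbed trajectory $r^{(2)}$ that, at the perturbation offset time $t_{\rm off}$, differ by $\Delta r(t_{\rm off}) = r^{(2)}(t_{\rm off}) - r^{(1)}(t_{\rm off})$ lying entirely in the orthogonal subspace, $P_N \Delta r(t_{\rm off}) = 0$. Because $(I-P_N)$ projects onto $\ker N$, this instantly gives $N \Delta r(t_{\rm off}) = 0$, i.e. $\kappa^{(1)}(t_{\rm off}) = \kappa^{(2)}(t_{\rm off})$ and $r_\parallel^{(1)}(t_{\rm off}) = r_\parallel^{(2)}(t_{\rm off})$. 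Crucially, this holds regardless of whether the perturbed neurons are tuned: by Lemma \ref{lm4}, tuning is governed by the embedding weights $m^{(p)}$, whereas causal access to $\kappa$ is governed by the encoding rows of $N$, so a perturbation in $\ker N$ is annihilated by the encoding map even when the affected neurons carry nonzero embedding weights.

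For the two conclusions I would then argue separately. For invariance of the LPU, I would invoke uniqueness of solutions of the autonomous parallel ODE: since $r_\parallel^{(1)}$ and $r_\parallel^{(2)}$ agree at $t_{\rm off}$ and obey the same self-sufficient equation for $t \ge t_{\rm off}$, they coincide for all later times, hence $\kappa^{(1)}(t) = \kappa^{(2)}(t)$ throughout. For the decay, I would apply Lemma \ref{lm3} verbatim: with $P_N(r^{(1)}-r^{(2)}) = 0$, the difference satisfies $\tau \dot{\Delta r} = -\Delta r$, giving $\Delta r(t) = \Delta r(t_{\rm off}) \exp(-(t-t_{\rm off})/\tau)$, which vanishes exponentially after offset.

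The main obstacle is conceptual rather than technical: making rigorous the counterintuitive clause ``even through neurons that are tuned to the latent variables.'' The resolution I would emphasize is exactly the tuning/coding dissociation of Lemma \ref{lm4}---a neuron whose activity tracks $\kappa$ (nonzero $m_i$) need have no causal influence on $\kappa$ (zero contribution along the rows of $N$), so pushing such a neuron along $\ker N$ cannot move the latent state. A secondary care point is to treat the perturbation as setting a boundary condition at $t_{\rm off}$ rather than as an ongoing forcing term: whatever happens during the perturbation enters only through the value of $r_\perp(t_{\rm off})$, and the parallel component is never influenced because $\varphi$ depends only on $r_\parallel$, so the autonomous form of the equations (and hence Lemma \ref{lm3}) applies cleanly for $t > t_{\rm off}$.
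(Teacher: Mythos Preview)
Your proposal is correct and follows essentially the same approach as the paper, which likewise reduces the theorem to a combination of Lemmas \ref{lm3} and \ref{lm4}. Your version is in fact more detailed than the paper's brief argument, making explicit the boundary condition at $t_{\rm off}$, the uniqueness of solutions for the autonomous $r_\parallel$ equation, and the tuning/coding dissociation that justifies the ``even through tuned neurons'' clause.
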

\begin{proof}
    This theorem  combines Lemmas \ref{lm3} and \ref{lm4} into a unified statement. Specifically, as shown in Lemma \ref{lm4}, neural activities in directions perpendicular to the encoding subspace can be tuned to latent variables. However, by Lemma \ref{lm3}, the LPU $\kappa(t)$ depends only on the parallel component $r_\parallel(t)$, \textit{i.e.}, the projections to the encoding subspace defined by $P_N$. Since deviations orthogonal to the encoding subspace, $r_\perp(t)$, do not affect $r_\parallel(t)$ by Lemma \ref{lm3}, these perturbations have no impact on the LPU. Finally, Lemma \ref{lm3} proves that these deviations decay exponentially, concluding the proof.
\end{proof}
Essentially, Theorem \ref{thm4} formalizes the notion that, for the class of multi-synapse latent networks, neurons contribute to the latent code through their encoding weights, $N$, whereas their tuning properties are fundamentally determined by both the embedding weights, $M$, and the nonlinearity. In any case, the nonlinear embedding, which is essential for universal computation under linear encoding maps, inevitably gives rise to redundant tuning curves. Therefore, designing perturbation experiments (as illustrated in Fig. \ref{fig1}\textbf{A}) based on neural tuning properties would be a suboptimal experimental strategy, as these properties may not reliably indicate the neuron's true computational role in information encoding.

It is worth noting, as a final caveat, that interventions should likely be considered as being performed on the full LPU (the subspace spanned by $P_N$), rather than on individual latent variables (a specific $n^{(r)}$). Specifically, due to the network effects, it is often plausible that changing the activity of a particular latent variable $\kappa_i(t)$ may be best achieved by interfering with the encoding direction of another set of (or combination thereof) latent variables $\{\kappa_j(t)\}$ with $i \not \in j$. 

\subsection*{\color{subsectioncolor}Theoretical analysis of representational drift in RNNs} \label{app:representational_drift}

In this section, we study the robustness of LPUs to representational drift. We start by defining tuning curves with respect to the LPUs, which may appear concrete and interpretable, but suffer from a non-identifiability issue that is well known in other contexts \cite{hyvarinen2024identifiability}. Thus, to define the tuning curves, we first align the latent variables with the network outputs, breaking the internal symmetry of LPUs in a behaviorally informed manner. Then, after introducing few mild assumptions, we state our main theoretical results on representational drift and provide their proofs. Finally, we provide the experimental details behind our tests with simulated RNNs.

Representational drift refers to constant changes in the tuning curves of neural responses to specific stimuli and/or behaviorally relevant outside variables \cite{driscoll2022representational}. The tuning curves defined in Eq. (\ref{eq:tuning}) suggest a method for studying these changes in RNNs. Once the $GL_K(\mathbb R)$ symmetry of the latent variables is used to align the distinct latent variables to the behaviorally relevant quantities (see below for how we achieve this for K-bit flip flop tasks), we can vary the tuning curves (simulating representational drift) by changing in an embedding weight, $m^{(d)}$:
\begin{equation}
    m^{(d)} \to m^{(d)} + \Delta m,
\end{equation}
where $d$ refers to the specific embedding weight receiving the update. For a network that is robust to the drift, the following is expected: If the latent variables, $\kappa(t)$, follow a dynamical system that is invariant to these updates, $\Delta m$, then even though the tuning curves of individual neurons (as in Eq. (\ref{eq:latent_general})) undergo changes, the LPUs and the behavioral readouts from these LPUs remain robust to these changes. 

We start our analysis by defining the time-dependent joint Jacobian in Eq. (\ref{eq:tuning}) for the set of nonlinear transfer functions, $f_i(\cdot):\mathbb R^{K} \to \mathbb R^{1} $ with $i = 1,\ldots, N_{\rm rec}$, with respect to their arguments ($m^{(p)}_i \kappa_{p}(t)$ for the $p$th argument with $p=1,\ldots, K$) as:
\begin{equation} \label{eq:jacobian}
    J_{is}(t) = \frac{\partial}{\partial [m_i^{(s)} \kappa_s(t)]} \tilde f_i\left(\{ m^{(p)}_i \kappa_{p}(t)\}\right) = \nabla \tilde f_{is}\left(\{ m^{(p)}_i \kappa_{p}(t)\}\right),
\end{equation}
where we recall $[m_i^{(s)} \kappa_s] \in \mathbb R$ as the current contributed by the latent variable $\kappa_s(t)$ to the $i$th neuron. This Jacobian term can intuitively thought as quantifying the saturation level of individual neurons (when $\kappa_s(t)$ changes) and the local sensitivity of the tuning curves to changes in latent variables (when $m_i^{(s)}$ changes). For an example of the former, consider a simple neuron with the relationship $r(\kappa) = \tanh(\kappa)$. Here, the latent variable can be considered as current into the neuron, which excites the neuron. For inactive neurons, we have $J(\kappa) = 1-\tanh^2(\kappa)\xrightarrow{\kappa\to0} 1$, suggesting that small changes in $\kappa$ linearly effect neural activities. On the other hand, active neurons are almost insensitive to changes in the latent variable since $J(\kappa \to \pm \infty) \to 0$. With this definition, we now make two key assumptions:
\begin{enumerate}
    \item Entries of encoding and embedding weights are sampled following some unknown probability distribution $P$, such that there are finite and well-defined expectations for $E_P[n_i^{(p)} m_i^{(p')}]$ for any $p,p'$ combination.
    \item We either assume that the drift in the embedding vector $m^{(d)}$ is purely random, or for non-random drift, we require a conditional orthogonality between drift and encoding dimensions, \textit{i.e.}, $E[n_i^{(p)} \Delta m_i | J_{id}] = 0$ for $p = 1,\ldots,K$. Here, $J_{id}$ refers to the Jacobian defined in Eq. \eqref{eq:jacobian}.
\end{enumerate}

The first assumption is common in the field \cite{beiran2021shaping}. It suggests that the distributional characteristics of synaptic connections, rather than the precise values of individual synapses, are the primary factors driving computation. This assumption can also be considered a rigorous formulation of the neural manifold hypothesis, in which neural manifolds are supported by statistical properties of synapses. The second assumption is more restrictive, suggesting that either the system is subject to random drift, or, for systems with structured drift, changes in the synaptic weights are assumed to be orthogonal to the $K$-dimensional encoding subspace and by factors independent of the specific local linearization of the network's dynamics, \textit{i.e.}, the conditional orthogonality with respect to $J_{id}$.  Finally, we consider the discretized neural activities:
\begin{equation} \label{eq:discretized_general}
    r[s+1] = (1-\alpha)r[s] + \alpha \tilde f \left(\{ m^{(p)} \kappa_{p}[s] \}_{p=1,..., K}\right), \quad \text{for} \; s=0,\ldots,T,
\end{equation}
where $\alpha = \Delta t/\tau$ is the discretization parameter. As before, we refer to variables of the discretized dynamical system as $\cdot[s]$, where $s$ denotes the discretized time. We are now ready to state our main result:
\begin{theorem}[Robustness of LPUs to representational drift] \label{thm5} 
    Let a recurrent neural network follow a trajectory $\mathcal O:= \{r[0], r[1], \ldots, r[T] \} $, dynamically embedding the LPU $\kappa[s] \in \mathbb R^K$ following the equation \eqref{eq:discretized_general} with a set of embedding weights $m^{(p)}$ for $p=1,\ldots,K$. Let the LPU be linearly encoded following $\kappa_p[s] = \frac{1}{N}n^{(p)T}r[s]$, where we refer to $\frac{1}{N}n^{(p)T}$ as the encoding vector for the $p$th latent variable. Let a particular embedding weight, $m^{(d)}$, undergo a representational drift, $\Delta m[s]$, with which the particular trajectory, $\mathcal{O} $, starting from the same neural activity $r[0]$, is now modified to to $\mathcal{O}_{\rm new} := \{r[0],r_{\rm new}[1], \ldots, r_{\rm new}[T] \}$. In the limit $N_{\rm rec}\to \infty$, the dynamics of the LPU remains invariant despite the drift, up to a correction in $O(\Delta m^2)$, as long as $E[n_i^{(p)} \Delta m_i | J_{id}] = 0$, where $J_{id}$ (see Eq. \eqref{eq:jacobian}) is the Jacobian for the set of nonlinear functions in Eq. \eqref{eq:discretized_general}. 
\end{theorem}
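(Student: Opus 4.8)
The plan is to project the drifted neural recursion onto the \emph{fixed} encoding vectors and to track the induced deviation in the latent trajectory order by order in $\Delta m$. First I would left-multiply the discretized dynamics by $\tfrac{1}{N}n^{(p)T}$ to obtain the closed latent recursion
\begin{equation}
\kappa_p[s+1] = (1-\alpha)\kappa_p[s] + \frac{\alpha}{N}\sum_{i=1}^{N_{\rm rec}} n^{(p)}_i\, \tilde f_i\!\left(\{m^{(q)}_i \kappa_q[s]\}\right),
\end{equation}
which holds because the encoding is linear and $n^{(p)}$ is held fixed under the drift. Writing the same recursion for the drifted system—in which only the $d$th argument of each $\tilde f_i$ is shifted by $\Delta m_i\,\kappa_d^{\rm new}[s]$—and subtracting yields an exact update for $\Delta\kappa_p[s] := \kappa_p^{\rm new}[s]-\kappa_p[s]$, with base case $\Delta\kappa_p[0]=0$.

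Next I would run an induction on $s$ with the hypothesis $\Delta\kappa_p[s]=O(\Delta m^2)$ uniformly in $p$. Taylor-expanding the difference $\tilde f_i^{\rm new}-\tilde f_i$ about the original arguments and collecting terms, every contribution driven by the latent deviations $\{\Delta\kappa_q[s]\}$ is $O(\Delta m^2)$ by the inductive hypothesis, and the $(1-\alpha)\Delta\kappa_p[s]$ term is likewise $O(\Delta m^2)$. The \emph{only} genuinely first-order-in-$\Delta m$ contribution is the drift term
\begin{equation}
\frac{\alpha}{N}\sum_{i=1}^{N_{\rm rec}} n^{(p)}_i\, J_{id}\, \Delta m_i\, \kappa_d[s],
\end{equation}
where $J_{id}$ is the Jacobian of Eq. \eqref{eq:jacobian} evaluated on the original trajectory (replacing the evaluation point by the drifted one costs only $O(\Delta m^2)$).

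The crux is to show that this term vanishes as $N_{\rm rec}\to\infty$. Since $\kappa_d[s]$ is a common scalar it factors out, and Assumption 1 (finite, well-defined moments of the sampled weights) lets me invoke the law of large numbers to replace the empirical average $\tfrac{1}{N}\sum_i n^{(p)}_i J_{id}\Delta m_i$ by its expectation $E[n^{(p)}_i J_{id}\Delta m_i]$. Applying the tower property and conditioning on $J_{id}$ gives $E[n^{(p)}_i J_{id}\Delta m_i]=E\!\left[J_{id}\,E[n^{(p)}_i\Delta m_i\mid J_{id}]\right]$, which is exactly zero under Assumption 2 ($E[n^{(p)}_i\Delta m_i\mid J_{id}]=0$); the purely random and the encoding-orthogonal drift cases are the two concrete regimes realizing this condition. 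Hence the first-order correction to $\Delta\kappa_p[s+1]$ disappears, only $O(\Delta m^2)$ survives, and the induction closes, proving invariance of the LPU dynamics to leading order.

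I expect the main obstacle to be the self-averaging (mean-field) step: $J_{id}$ depends on the full latent state $\kappa[s]$, which is itself a network-wide readout, so justifying that $\tfrac{1}{N}\sum_i n^{(p)}_i J_{id}\Delta m_i$ concentrates on its conditional expectation requires the $N_{\rm rec}\to\infty$ limit to suppress the $O(N_{\rm rec}^{-1/2})$ fluctuations (consistent with the scaling discussed in the Supplementary Note) together with a mild exchangeability structure across neurons. A secondary care point is bookkeeping: the implicit $O(\Delta m^2)$ constant may grow with the horizon $T$, so the invariance statement should be read for fixed $T$, with the constant depending on $T$ and on uniform bounds on $\tilde f_i$ and its first two derivatives.
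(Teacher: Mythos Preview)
Your proposal is correct and follows essentially the same approach as the paper: project onto the fixed encoding vectors, Taylor-expand the drifted nonlinearity about the original trajectory, isolate the single first-order source term $\tfrac{\alpha}{N}\sum_i n_i^{(p)} J_{id}\,\Delta m_i\,\kappa_d[s]$, and kill it in the $N_{\rm rec}\to\infty$ limit via the law of large numbers together with the tower property and the conditional-orthogonality hypothesis. The paper organizes the remaining $\Delta\kappa$-dependent pieces into a linear ``decay matrix plus source'' form (and even passes informally to continuous time) rather than running your explicit induction on $s$, but this is a cosmetic difference; your induction and the paper's zero-IC linear-perturbation argument encode the same bookkeeping, and your caveats about the $T$-dependence of the hidden constant and the self-averaging step are, if anything, more careful than the paper's treatment.
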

\begin{proof}
    We define the latent variables before and after drift as:
    \begin{equation}
        \kappa_p[s] = \frac{1}{N_{\rm rec}} n^{(p)T}r[s], \quad \tilde \kappa_p[s] = \frac{1}{N_{\rm rec}} n^{(p)T}r_{\rm new}[s],
    \end{equation}
    where we assume the trajectories start form the same initial condition $\kappa[0] = \tilde \kappa[0]$. For the latent dynamics to remain unperturbed, we require that $\kappa[s] := \tilde \kappa[s]$ for $s = 1,\ldots, T$. Defining $\Delta \kappa = \tilde \kappa - \kappa$, we get:
\begin{equation} \label{eq:deltakappa_origin}
\begin{split}
   \tilde \kappa_p[s]:=& \frac{1}{N_{\rm rec}} n^{(p)T} r_{\rm new}[s] =  (1-\alpha)\tilde \kappa_p[s-1] \\
   &+ \frac{\alpha }{N_{\rm rec}} \sum_{i=1}^{N_{\rm rec}} n^{(p)}_i\tilde f_i\left( m^{(1)}_i \tilde \kappa_{1}[s-1], \ldots,m^{(d)}_i \tilde  \kappa_{d}[s-1] + (\Delta m[s-1])_i \tilde  \kappa_d[s-1], \ldots \right), \\
     =& (1-\alpha) \kappa_p[s-1] + (1-\alpha) \Delta \kappa_p[s-1] + \frac{\alpha }{N_{\rm rec}} \sum_{i=1}^{N_{\rm rec}}  n^{(p)}_i f_i\left(\{ m^{(p')}_i \kappa_{p'}[s-1]\}\right) + O(\Delta m[s-1]^2) \\
     &+ \frac{\alpha }{N_{\rm rec}} \sum_{i=1}^{N_{\rm rec}} n^{(p)}_i \sum_{p'=1}^K  J_{ip'} \left[ \Delta \kappa_{p'}[s-1] m_i^{(p')} +  \kappa_d[s-1] (\Delta m[s-1])_i \delta_{p'd} +  \Delta \kappa_d[s-1] (\Delta m[s-1])_i \delta_{p'd}\right],
\end{split}
\end{equation}
where we have performed a Taylor approximation around the inputs to the nonlinear transfer function before the drift and kept the terms in first order of $\Delta m[s-1]$. Then, defining $\delta_{ij}$ as the Kronecker delta function that is one if $i=j$ and zero otherwise, we arrive at:
\begin{equation}
    \begin{split}
          \Delta \kappa_p [s] = (1-\alpha) \Delta \kappa_p[s-1] + \frac{\alpha }{N_{\rm rec}} \sum_{i=1}^{N_{\rm rec}} n^{(p)}_i \sum_{p'=1}^K  J_{ip'} \left[ \Delta \kappa_{p'}[s-1] [m_i^{(p')}  + \Delta m_i[s-1] \delta_{p'd}]+  \kappa_d[s-1] \Delta m_i[s-1] \delta_{p'd} \right].
    \end{split}
\end{equation}
This condition leads to a general perturbation equation, which can be expressed in continuous time (similar to the setup in Eq. (\ref{eq:discretized_general})) as:
\begin{equation}
    \tau \frac{\diff \Delta \kappa_p(t)}{\diff t} = - \Delta \kappa_p(t) + \frac{1}{N} \sum_{i=1}^N n^{(p)}_i \sum_{p'=1}^K  J_{ip'}(t) \left[ \Delta \kappa_{p'}(t) m_i^{(p')} +  \kappa_d(t) (\Delta m(t))_i \delta_{p'd} +  \Delta \kappa_d(t) (\Delta m(t))_i \delta_{p'd}\right].
\end{equation}
We can reorganize this equation in terms of decay and source terms such that:
\begin{equation}
\begin{split}
        \tau \frac{\diff \Delta \kappa_p(t)}{\diff t} =&  \sum_{p'=1}^K \Delta\kappa_{p'}(t) \underbrace{ \left[ -\delta_{pp'} + \frac{1}{N} \sum_{i=1}^N n^{(p)}_i  J_{ip'}(t)  m_i^{(p')}\right]}_{\Gamma_{pp'}(t)} + \left( \kappa_d(t) +  \Delta \kappa_d(t) \right) \underbrace{\frac{1}{N} \sum_{i=1}^N  n^{(p)}_i J_{id}(t)  \Delta m_i(t)}_{S_{dp}(t)}.
\end{split}
\end{equation}
Here, we recognize $\Gamma_{pp'}(t)$ as the time dependent decay matrix of the original LPU before the drift. Thus, this component explains the behavior of the LPU if the initial condition was perturbed in the absence of any drift, $\Delta m$. Therefore, since we assumed the two trajectories start from the same initial conditions, as long as the other terms are zero, the contribution of this term will be zero since $\Delta \kappa(0) = 0$ by assumption. In that regard, the second term constitutes an additional source for the deviations resulting from the drift. Bringing both observations together, we see that as long as $S_{dp}(t) \approx 0$, the LPU dynamics will remain invariable to the drift, \emph{even though the neural trajectories have changed from $\mathcal O \to \mathcal O_{\rm new}$.} For a representational drift satisfying the condition $E[n_i^{(p)} \Delta m_i | J_{id}] = 0$ and in the limit of $N_{\rm rec} \to \infty$, we have:
\begin{equation}
    S_{dp}(t) = \frac{1}{N} \sum_{i=1}^N  n^{(p)}_i J_{id}(t)  \Delta m_i(t) \approx E\left[ n^{(p)}_i J_{id}(t)  \Delta m_i(t) \right] = E\left[ J_{id}(t) \underbrace{E\left[ n^{(p)}_i  \Delta m_i(t) |J_{id} \right]}_{=0} \right]=0,
\end{equation}
concluding the proof.
\end{proof}

Theorem \ref{thm5} suggests that as long as the drift is in an orthogonal direction to the encoding weights, conditioned on $J_{id}(t)$, the latent dynamics will not be affected under small changes in $\Delta m$. Yet, how do we ensure this conditional orthogonality? For a fully random drift, the condition is trivially satisfied if the mean of the drift is zero. For a structured drift scenario, however, this implies that changes in the synaptic weights are driven by factors independent of the specific local linearization of the network's dynamics. Specifically, as illustrated above, the Jacobian $J_{id}(t)$ is related to the operation regime of the neuron dynamics around the nonlinearity (\textit{e.g.}, saturation or linear for $\tanh(\cdot)$ nonlinearity). Then, assuming that the LPU remains unperturbed, Eq. (\ref{eq:jacobian}) suggests that $J_{id}(t)$ are mainly functions of $m^{(p)}$. Thus, as long as $\Delta m$ is small enough, $J_{id}(t)$ following Eq. \eqref{eq:jacobian} is mainly determined by the distribution of $m^{(p)}$ across neurons.  In this case, the condition would be satisfied as long as $E_{i \in \mathcal I}[ n^{(p)}_i  \Delta m_i(t)]= 0$ for the group of neurons, $\mathcal I$, for which $J_{id}(t)$ is approximately equal. Finally, in the event of substantial changes in $\Delta m$, the nonlinearity $f(\cdot)$ often mixes and matches the linear subspaces (mathematically arising as increasing numbers of $n$th order conditions for the higher order terms in the Taylor approximation above), preventing the existence of a perfectly drift-invariant LPU in general, as demonstrated in Fig. \ref{fig4}.

Combining Theorem \ref{thm5} with Lemma \ref{lm3} allows defining a class of linear decoders that can remain robust to structured representational drift:
\begin{corollary}[Class of linear decoders robust to representational drift]
    Let $r(t) \in \mathbb{R}^{N_{\rm rec}}$ represent neural activities and $\kappa(t) \in \mathbb{R}^K$ be the corresponding latent variables obtained via a linear encoding, where $\kappa(t) = N r(t)$ for some rank-K matrix $N \in \mathbb{R}^{K \times N_{\rm rec}}$. Let $P_N$ be the projection operator onto the row space of $N$, and $y \in \mathbb{R}^B$ denote the behavior variables, where $N_{\rm rec} \gg K \geq B$. We define a linear decoder on neural activities as any function $\psi$ such that $\hat{y} := \psi(r) = W_{\rm out} r + b_{\rm out}$, where $W_{\rm out} \in \mathbb{R}^{B \times N_{\rm rec}}$ and $b_{\rm out} \in \mathbb{R}^B$ are trainable parameters. Then, a linear decoder, $\psi(r)$, satisfying the relationship $W_{\rm out}(1-P_N) = 0$ remains robust to the structured drift orthogonal to the encoding subspace in the sense that its output remains unaffected to the first order changes $\Delta m$. 
\end{corollary}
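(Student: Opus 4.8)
The plan is to combine the linear-bottleneck characterization from Lemma \ref{lm2} with the first-order latent invariance established in Theorem \ref{thm5}, exploiting the fact that representational drift acts only on the embedding weights and therefore leaves the encoding geometry untouched. First I would note that the decoder weights $W_{\rm out}$ are fixed before the drift and that the drift updates only an embedding weight, $m^{(d)} \to m^{(d)} + \Delta m$, while the encoding matrix $N$ (and hence the projector $P_N$) is held fixed by the drift construction. Consequently the bottleneck condition $W_{\rm out}(1-P_N)=0$ persists after the drift, and by the converse direction of Lemma \ref{lm2} the output can be written purely in terms of the latent variables, with the same effective latent weights:
\begin{equation}
\hat y = W_{\rm out} r + b_{\rm out} = \tilde W_{\rm out}\kappa + b_{\rm out}, \qquad \tilde W_{\rm out} = W_{\rm out} N^T (NN^T)^{-1},
\end{equation}
where the orthogonal (redundant) component $r_\perp$ is annihilated by $W_{\rm out}$, exactly the component that Lemma \ref{lm3} shows to be dynamically decoupled from the LPU.

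Next I would apply Theorem \ref{thm5}. The hypothesis that the structured drift is orthogonal to the encoding subspace is read in the conditional sense $E[n_i^{(p)}\Delta m_i \mid J_{id}] = 0$ required there, so that the drifted latent trajectory $\tilde\kappa[s]$ obeys $\tilde\kappa[s] = \kappa[s] + O(\Delta m^2)$ for every $s$, starting from the shared initial condition. Substituting into the bottleneck decoder and using that $\tilde W_{\rm out}$ is unchanged gives
\begin{equation}
\hat y_{\rm new} = \tilde W_{\rm out}\tilde\kappa + b_{\rm out} = \tilde W_{\rm out}\big(\kappa + O(\Delta m^2)\big) + b_{\rm out} = \hat y + O(\Delta m^2),
\end{equation}
so the decoder output is invariant to first order in $\Delta m$, which is precisely the claimed robustness.

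The main obstacle — the point requiring care rather than computation — is verifying that the equivalence machinery of Lemma \ref{lm2} survives the perturbation. This hinges on the fact that drift modifies the embedding map while the linear encoding $N$, and therefore both $P_N$ and the induced latent readout $\tilde W_{\rm out}$, remain invariant; were the drift instead to alter the encoding directions, the bottleneck condition itself could break and the clean reduction to a $\kappa$-only readout would fail. A secondary subtlety is interpreting ``orthogonal to the encoding subspace'' as the conditional-orthogonality hypothesis of Theorem \ref{thm5}, which is exactly the regime in which the source term $S_{dp}(t)$ cancels to first order; beyond small $\Delta m$ the higher-order mixing discussed after Theorem \ref{thm5} would reintroduce dependence on the drift and the robustness would degrade.
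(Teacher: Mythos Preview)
Your proposal is correct and follows essentially the same route as the paper: reduce the decoder to a function of $\kappa$ via the linear-bottleneck equivalence (Lemma~\ref{lm2}) and then invoke Theorem~\ref{thm5} so that $\kappa$ is unchanged to first order under the drift, whence the output is as well. Your additional explicit check that the drift leaves $N$, $P_N$, and hence $\tilde W_{\rm out}$ fixed is a useful clarification that the paper's terse proof omits.
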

\begin{proof}
    Lemma \ref{lm3} allows defining an equivalent latent decoder $\tilde \psi (\kappa)$ and Theorem \ref{thm5} ensures that $\kappa$ follows an invariant set of dynamical systems. Thus, the output of the linear readout is also invariant. 
\end{proof}
It is worth recalling that the dimensional bottleneck (\textit{i.e.}, $W_{\rm out}(1-P_N)=0$) enforced on this class of linear decoders may prevent them from using redundant information in the full neural activity, leading to suboptimal decoding performances \emph{for a given time point.} Similar trade-offs between robustness and absolute optimality are common and well known in the statistics literature \cite{huber1992robust}.

\section*{Details of the simulation experiments}

\subsection*{\color{subsectioncolor} Latent processing units in RNNs performing K-bit flip flop tasks}

In the main text, we analyzed the LPUs of low-rank leaky firing rate RNNs, which were trained to solve K-bit flip flop tasks. 

\subsubsection*{K-bit flip flop task} 
In these tasks (Fig. \ref{fig1}\textbf{C}), the RNNs receive occasional pulses from K independent input channels and must produce corresponding outputs. Each input-output channel operates independently. In any given time bin, each input channel has a 0.05 probability of emitting a pulse with a value of either +1 or -1; otherwise, the input is 0. The network’s task is to remember the sign of the most recent pulse for each channel and output this sign accordingly. With K channels, there are $2^K$ possible output states. Each trial starts with a random initialization of the neural activities following a normal distribution with mean $0$ and s.d. $1$, and continues for $100$ discrete time points as input pulses are sampled randomly and independently for each of the $K$ channels.

\subsubsection*{Simulating RNNs} 

To simulate the leaky firing rate RNNs, we discretized the time evolution equations as:
\begin{equation}
    r[s+1] = (1-\alpha)r[s] + \alpha \tanh\left( W^{\rm rec} r[s] + W^{\rm in}u[s]\right),
\end{equation}
where $\alpha =0.5$ is the discretization ratio ($\alpha = \Delta t/\tau$ for a discretization time $\Delta t$) and we assumed no biases for simplicity. 

\subsubsection*{Two-step training paradigm} 

To train K-rank RNNs performing the K-bit flip flop tasks, we utilized the two-step a paradigm \cite{valente2022extracting}. First, we trained full-rank RNNs to perform the K-bit flop flop tasks. Second, we constrained low-rank RNNs to reproduce the neural activities of these full-rank RNNs. To train the full-rank RNNs on the K-bit Flip Flop Task, we re-used the publicly available code from \cite{dinc2023cornn}. We employed Adam optimization with a learning rate of $1/N_{hidden}$, and trained for 5000 epochs with a batch size of 50 trials. We used the mean squared error of the output as the loss function. We enforced $W^{\rm rec}_{ii}=0$ for the training of full-rank RNNs by projecting the gradients at each epoch. We also used a learning rate scheduler, which reduced the learning rate if it plateaued for 50 epochs. All weight matrices were initialized using Xavier initialization. To train the low-rank RNNs on the activities of the full-rank RNNs, we used a Pytorch model with a low-rank reconstruction for $W^{\rm rec}=MN$ with $M \in \mathbb R^{N\times K}$ and $N \in \mathbb R^{K\times M}$ (as in \cite{valente2022extracting}), and used the logistic loss as in \cite{dinc2023cornn}. We used Adam optimization with a learning rate of $10^{-3}$ and weight decay of $10^{-7}$, and trained for 10000 epochs with a batch of 5000 trials per network. The resulting networks had weight matrices which were by design rank $K$, the same as the number of inputs. 

\subsubsection*{Aligning latent variables with the output} 

As we discussed above, even when $W^{\rm rec}$ is low-rank, the reconstruction $W^{\rm rec} = MN$ is not unique, leading to equally valid, infinitely many, definitions of latent variables. Fortunately, in this specific experiment, a simple and useful way to break this symmetry is to design the latent variables to align with the outputs of the K-bit flip flop task such that $o_i(t) = \kappa_i(t)$ for $i=1,\ldots, K$.  To achieve this, once the RNNs were fully trained, we first computed the singular value decomposition of the weight matrices and initialized $M$ with the left eigenvectors and $N$ with the singular values times the right eigenvectors. Next, we sampled $4000$ data points as the RNNs were performing the K-bit flip flop task and performed a regression between the network outputs and the linearly encoded $\tilde \kappa(t) = N r(t)$ values. Using the learned regression matrix $S$, we transformed $N \to SN$ and $M \to MS^{-1}$ such that $\kappa(t) = S\tilde \kappa(t) \approx o(t)$. The resulting encoding and embedding weights, which satisfy $W^{\rm rec} = MN$, were used in conjunction with Eq. (\ref{eq:lfrnn_latent_circuit}) to draw and analyze the LPUs.

\subsubsection*{Designing latent processing units} 

We designed one-dimensional LPUs by sampling the encoding and embedding weights for each neuron, $\{n,m\}$, from a zero-mean two-dimensional Gaussian distribution with a positive correlation.  Using a mean-field approach, in the limit $N_{\rm rec} \to \infty$, the LPUs can be reshaped as:
\begin{equation}
    \tau \dot \kappa(t) = -\kappa(t) + \frac{1}{N_{\rm rec}} \sum_{i=1}^{N_{\rm rec}} n_i \tanh(m_i \kappa(t)) \approx - \kappa(t) + \int  n \tanh(m\kappa(t)) \diff p(n,m),
\end{equation}
where $p(n,m)$ is the probability density function for the pair $\{n,m\}$, which follow the probability distribution $P$. 

\subsubsection*{Tuning curve analysis} 

To compute the tuning curves, $r(\kappa)$, with respect to fixed $\kappa$, we consider the steady-state response ($\dot r(t) = 0$) after fixing $\kappa(t) := \kappa$ (\textit{i.e.}, $\epsilon$-stable) such that:
\begin{equation} \label{eq:tuning_curves}
    r(\kappa) = \tanh\left(\sum_{i=1}^K m^{(i)} \kappa_i\right).
\end{equation}
As long as the behavioral timescales that $\kappa(t)$ is responsible for are significantly larger than $\tau$ (\textit{e.g.}, fixed point attractors lead to fixed $\kappa$ values in the K-bit flip flop task), this function defines a tuning curve for the neural activities, $r$, with respect to the latent variables, $\kappa$. For Fig. \ref{fig4}\textbf{C}, each neuron was assigned to one of the $27$ coding groups. We defined coding for a particular flip flop bit by thresholding the corresponding embedding weights, $m^{(i)}$, by $\pm 0.5$. For instance, if $\{m^{(1)},m^{(2)},m^{(3)}\} = \{0.55, 0.1, -0.7\}$, this neuron was assigned to the group $[1,0,-1]$.

\subsubsection*{Additional details on Figs. \ref{fig1} and \ref{figs1}} 

For illustrations in Fig. \ref{fig1}\textbf{C-F}, we trained a representative rank-2 RNN solving the 2-bit flip flop task using the two-step training paradigm described above. The RNN had $100$ neurons, a time scale $\tau = 10ms$, and was discretized with $\Delta t = 5ms$. The neural activities shown in \textbf{D} are collected for $1000$ time points, whereas for \textbf{E-F}, they were collected for $10000$ time points. For Figs. \ref{fig1}\textbf{H-I} and \ref{figa1}, we sampled the embedding and encoding weights from a zero-mean Gaussian distribution that had the covariances $10^{-2}$ and $\approx 209$, respectively. For Fig. \ref{figs1}\textbf{D}, we designed the LPU with a zero-mean Gaussian distribution with the covariances $10^{-1}$ and $\approx 25$, respectively. For Fig. \ref{figs1}\textbf{E}, we designed the LPU with a zero-mean Gaussian distribution with the covariances $1.2$ and $\approx 6.24$, respectively. For all cases, the correlation of $0.7$ was chosen between two variables.

\subsection*{\color{subsectioncolor} Analysis of the neural recordings from the neocortex}
In this work, we re-used our cortex-wide neural imaging datasets from \cite{ebrahimi2022emergent}. Thus, the procedures for obtaining the Ca$^{2+}$ traces, including the subsections below from mouse preparation to cell extractions, are the same. For completeness, we summarize the mouse preparation below.

\subsubsection*{Mouse preparation} 

All animal procedures were approved by the Stanford University Administrative Panel on Laboratory Animal Care. For the imaging studies of layer 2/3 neocortical pyramidal neurons, 4 male and 2 female triple transgenic GCaMP6f-tTA-dCre mice were used. Surgeries were performed on mice aged 10–16 weeks under isoflurane anesthesia in a stereotaxic frame. To minimize inflammation and pain, preoperative carprofen (5 mg/kg) was administered subcutaneously and continued for 3 days post-surgery. We created a cranial window (5-mm diameter) over the right cortical area V1 and the surrounding cortical tissue. We covered the exposed cortical surface with a glass coverslip (\#1 thickness, 64-0700, CS-5R, Warner Instruments) attached to a steel annulus (1 mm thick, 5 mm outer diameter, 4.5 mm inner diameter, 50415K22, McMaster) and secured it with ultraviolet-light curable cyanoacrylate glue (Loctite 4305). For head-fixation during imaging, we cemented a metal head plate to the skull using dental acrylic. In vivo imaging sessions commenced at least 7 days post-surgery.

\subsubsection*{Image acquisition} 

We recorded Ca$^{2+}$ neural activity videos (20 fps; 2048 × 2048 pixels) using a fluorescence microscope with 40–160 $\mu$W/mm–2 illumination. Custom Matlab software controlled visual stimulus presentation, operated the behavioral apparatus via a NI-USB 6008 card, and initiated video capture on the microscope. Post-acquisition, we downsampled the videos to $1024 \times 1024$ pixels and 10 fps. Lateral brain movements were corrected using Turboreg software for image alignment. Using gaussian spatial high-pass filtering ($\sigma = 80 \mu m$), we removed scattered fluorescence and background activity. we then calculated the relative fluorescence changes by computing $\Delta F(t)/F_0$, where $F_0$ is the mean activity of each pixel over the entire session and $\Delta F(t)$ is the mean subtracted activity of each pixel at time $t$. Maximum projection images of each session's  $\Delta F(t)/F_0$ movie were analyzed to quantify lateral spatial displacements. We used \texttt{imregtform} in Matlab to determine optimal transformations between the first session's maximum projection image and subsequent sessions, aligning all videos accordingly. Aligned  $\Delta F(t)/F_0$ videos were concatenated for individual cell extraction and Ca$^{2+}$ activity trace analysis.

\subsubsection*{Cell extraction}
We successively applied principal and independent component analyses (PCA/ICA, \cite{mukamel2009automated}) to extract individual neuron activity from concatenated  $\Delta F(t)/F_0$ movies. Each mouse's preprocessed Ca$^{2+}$ video, approximately 1 TB in size, was divided into 16 tiles, each covering about 1 mm $\times$ 1 mm. We ran PCA/ICA in parallel on 16 computing nodes (1 tile per node) and identified neuron Ca$^{2+}$ activity traces and spatial filters. We then isolated cell somas by thresholding each spatial filter at 4 standard deviations of noise fluctuations and replacing filter weights below this threshold with zeros. The truncated spatial filters were reapplied to the  $\Delta F(t)/F_0$ movie to obtain final Ca$^{2+}$ activity traces.

For 3 of the 16 image tiles per mouse, individual neurons were manually identified based on their morphology and Ca$^{2+}$ transient waveforms. For the remaining 13 tiles, we trained three binary classifiers (Support Vector Machine, Linear Generalized Model, and Neural Network) using manually identified cells as training data. To train the classifiers we used a set of 12 pre-defined features to characterize a neuron’s morphology (eccentricity; diameter; area; orientation; perimeter; and solidity) and Ca$^{2+}$ activity waveform (mean peak amplitude of Ca$^{2+}$ transients; signal-to-noise ratio between Ca$^{2+}$ transients and baseline fluctuations; number of Ca$^{2+}$ transients peaks above 3 s.d. of the baseline; number of Ca$^{2+}$ transients peaks above 1 s.d. of the baseline; the difference of the mean decay and mean rise times of the Ca$^{2+}$ transients, normalized by the sum of these two values; and the FWHM of the average Ca$^{2+}$ transient). Classifiers were then used to identify cells in the 13 remaining tiles through a majority vote approach. Manual inspection verified that all cells identified met the criteria for neuron classification.

\subsubsection*{Descriptive details of the analysis data} 

As described in the main text, we used data from six mice performing a visual discrimination task. In this task (Fig. \ref{fig2}\textbf{B}), the animal was shown horizontal or vertical moving gratings as stimuli for $2$s. After a short delay of $0.5$s, the mice reported the type of the trial by either licking a spout or abstaining (Go-NoGo). For this work, when training our shared decoders, we concatenated the correctly performed trials across multiple imaging days for each animal. Since Ca$^{2+}$ traces collected from the one-photon mesoscope across days can have arbitrary units, we normalized the Ca$^{2+}$ activity traces to have unit power on each day. In total, we designed a massive dataset containing six mice across a total of $30$ imaging days (5, 5, 5, 5, 3, 7) and $7778$ correctly performed trials (1211, 998, 895, 1812, 1041, 1821). The dataset contained a total of $21570$ layer 2/3 pyramidal neurons (5292, 2761, 2236, 4193, 3334, 3754), collected from eight neocortical regions: 5249 neurons from primary visual cortex (V1; 970, 431, 803, 1085, 850, 1110), 1130 neurons from lateral visual cortex (LV; 144, 166, 14, 223, 120, 463), 1797 neurons from medial visual cortex (MV; 409, 123, 250, 335, 445, 235), 3240 neurons from posterior parietal cortex (PPC; 677, 427, 370, 573, 642, 551), 961 neurons from the auditory cortex (A; 106, 347, 0, 157, 18, 333), 7161 neurons from the somatosensory cortex (S; 2297, 1185, 223, 1717, 736, 1003), 324 neurons from the motor cortex (M;  272, 51, 0, 0, 1, 0), and 1708 neurons from the retrosplenial cortex (RSC; 417, 31, 576, 103, 522, 59). For the experiments comparing linear decoders with nonlinear decoders, we enforced that a brain region should at least have $50$ neurons. For the experiments computing the Fisher information of linear decoders, we enforced that a brain region should at least have $10$ neurons.

\subsubsection*{Analysis with the linear and nonlinear decoders} 

For the decoding analysis, we used the partial-least squares regression, principle component analysis, linear/quadratic discriminant analysis functions, random forest classifiers, and the (cross-validated) logistic regression implemented in Python's \texttt{scikit-learn} package \cite{scikit-learn}. We computed the Fisher information $(d')^2$ for discriminating HIT and CR trials by first dividing the trials randomly into three equal sets, following the procedure in \cite{ebrahimi2022emergent}. For the decoding analysis, we used four distinct windows: stimulus window included the mean of the neural activities inside $[0.5,2]$s time bins; the delay window $[2,2.5]$s, the response window $[2.5,5.5]$s; and the trial end window $[5.5,6.5]$s. To regularize these methods, we first trained PLS (with respect to trial identity) using the data from the first set and picked the first $K$ dimensions for dimensional reduction. Then, to compute the Fisher information, we trained a linear discriminant to compute the optimal decision direction, $w$, in the dimensionally reduced space. Then, the Fisher information is computed by projecting the neural activities in the third set via $w$ and computing the squared discriminability index, $(d')^2 = \Delta \mu^2/\sigma^2$. Here, $\Delta \mu$ is the difference in projected mean neural activities between HIT and CR trials and $\sigma^2$ is the pooled standard deviation within classes. Similarly, we computed the test accuracies for linear and nonlinear decoders using the three set division methodology, training the decoders on the second dimensionally reduced set and testing them on the third one.

\subsection*{\color{subsectioncolor} Evaluations of the neural code dimensionality}

In all experiments involving the dimensionality estimation, we defined the linear dimensionality as the number of principal components (PCs) required to explain 99\% of the variance in the neural activities.

\subsubsection*{Low-rank randomly initialized RNNs}

We considered full-rank and low-rank RNNs with sustained neural activities. To obtain chaotic full-rank RNNs, we randomly initialized RNNs with $N$ recurrent units and no external inputs. The entries of the recurrent weight matrix were sampled from a Gaussian distribution, after which the diagonal elements were set to zero:
\begin{equation}\label{eq:chaotic_rnn}
    W^{\rm rec} \sim \mathcal{N}(0,g^2/N), \quad W^{\rm rec}_{ii} = 0 \quad \forall i.
\end{equation}
The dynamics of these full-rank RNNs exhibit a transition to chaos at a critical value of $g = g^{\rm crit}_N$; in the mean-field limit ($N \to \infty$), $g^{\rm crit}_{\infty}=1$ \cite{sompolinsky1988chaos}. 

To obtain the low-rank counterpart, we initialized a full-rank RNN as in Eq.~(\ref{eq:chaotic_rnn}), applied a rank-reduction operation based on the singular value decomposition of $W^{\rm rec}$, and re-scaled the entries to obtain:
\begin{equation}\label{eq:low_rank_chaotic_rnn}
    W^{\rm rec}_K = \frac{\textrm{std}(W^{\rm rec})}{\textrm{std}(W_K)}W_K, \quad W_K = U\Sigma V^T \quad \text{where} \quad \Sigma_{ii} = 0 \quad \forall i > K,
\end{equation}
where $U$ and $V$ contain the left and right singular vectors of $W^{\rm rec}$, respectively. The dynamics of these low-rank RNNs exhibited a divergent trajectories at a critical value $g = g^{\rm crit}_{N,K}$ (Fig. \ref{figs3}\textbf{F-G}). For $g < g^{\rm crit}_{N,K}$ the dynamics decayed to zero, while for $g > g^{\rm crit}_{N,K}$ there was self-sustained activity in the network, marked by the exponential separation of neural trajectories with nearby initial conditions (perturbation magnitude of $10^{-5}$). As shown in Fig.~\ref{figs3}\textbf{F}, we also observed that $g^{\rm crit}_{N,K} \geq g^{\rm crit}_{N}$ for $K \leq N$ for the network parameters tested. On the other hand, it is important to note that these networks do not necessarily sustain chaotic dynamics, as perturbations in dimensions orthogonal to the $K$-dimensional encoding subspace would decay exponentially. Future work should study the chaotic properties of these networks in more details.

\subsubsection*{Sequence sorting task}
For the experiments involving the sequence sorting task, we trained a set of low-rank RNNs on sequence lengths from 2 to 4. The number of neurons per network ranged from 16, 64, 256 or 1024, whereas the ranks were between 1 and 6. We trained using the Adam optimizer with weight decay set to $10^{-4}$ and a learning rate of $10^{-3} \frac{\sqrt{B}}{16}$, where $B=8192$ is the batch size. We trained for 10000 steps with random data draws for each batch and repeated the whole experiment 10 times with different random seeds, leading to 10 distinct networks per condition. We chose $\Delta t = \tau$ and added a random noise sampled from a normal distribution with zero mean and $0.1$ s.d. at each time step.

\subsubsection*{Embedding of the $K$-dimensional unit sphere}
For the experiments involving static embedding maps, we sampled uniform samples from the $K$-dimensional unit sphere by first sampling a vector $\kappa \in \mathbb R^K$, with each element sampled from standard normal distribution. Then, we normalize the points by their norm, leading to samples on the  $K$-dimensional unit sphere. Since the resulting vector has independent units, it has rotational symmetry around the origin, leading to uniformly sampled data points. We used several nonlinearities (tanh, sign, ReLu, and sin) and varying levels of embedding strengths, $\delta m^{(i)}$ to embed the $K$-dimensional unit spheres into the $N_{\rm rec}$ dimensional neural activity spaces. 

\subsection*{\color{subsectioncolor}Effective and ineffective perturbation analyses}

For the perturbation experiments in Figure \ref{fig3}, we used the example RNN with $100$ neurons from Fig. \ref{fig1}, which was trained to perform a 2-bit flip flop task. We generated $100$ new neurons by sampling embedding weights from a discrete probability distribution defined on $\{-1,0,1\}$, with probabilities $\{0.1,0.8,0.1\}$ such that roughly $20$ neurons would be tuning for the flip flop states. The encoding weights for these neurons were zero. We refer to the first and second groups as ``coding" and ``redundant" neurons, respectively. 

We generated the weight matrix subserving the LPU via:
\begin{equation}
    W^{\rm latent} = \sum_{i=1}^{2} m^{(i)} n^{(i)T}.
\end{equation}
The total weight matrix was the summation of this structured component plus a randomly sampled component from a Gaussian distribution with zero mean and s.d. of $0.7/\sqrt{200}$:
\begin{equation}
     W^{\rm rec} =  W^{\rm latent} +  W^{\rm random}.
\end{equation}
The random connections allowed the newly generated neurons to influence the first group, thus preventing a simple interpretation of the perturbation results being due to lack of connectivity between coding and redundant neurons. We used the latent encoding weights to define the readout from the coding neurons. For the redundant neurons, we trained a linear regression on a sample of $1000$ data points. 

For the optogenetics manipulations, we computed two distinct vectors by taking the signs of the embedding weights of the first latent variable for the coding and redundant neurons, respectively. These vectors had zero values for one or the other groups of neurons, and were added as new inputs to the RNN to allow access to distinct groups. For the first experiment in Fig. \ref{fig3}\textbf{C}, we initialized the new network at the state (0,1), and then applied a single pulse (pulse value = $-1$) at [500,505]ms interval to the coding neurons. For the second experiment in Fig. \ref{fig3}\textbf{C}, we initialized the new network at the state (0,1), and then applied multiple pulses (pulse value = $-1$) at [500,600]ms interval to the redundant neurons. Both pulses were designed to switch the state to (0,0).

\subsection*{\color{subsectioncolor}Evaluations of representational drift in task-trained RNNs}

For the drift experiments in the main text, we used several networks trained to perform 3-bit flip flop tasks. Specifically, we used the $50$ networks with $100$ neurons using the two-step training paradigm described above. Since training RNNs with many neurons is computationally expensive, we used an alternative resampling approach to simulate RNNs with more than $100$ neurons. Specifically, each neuron in the RNN is unique defined by $3K$ parameters $\{m_i^{(1)},m_i^{(2)},m_i^{(3)},n_i^{(1)},n_i^{(2)},n_i^{(3)},(W^{\rm in})_{i1},(W^{\rm in})_{i2},(W^{\rm in})_{i3}\}$. We artificially created new neurons by resampling from this set and later normalizing the encoding weight values $n_i^{(p)} \to n^{(p)}_i \times  100 / N_{\rm rec}$.  Since adding new neurons and then normalizing their contribution to the latent variables lead to the same dynamical system equations (See Eq. (\ref{eq:lfrnn_latent_circuit})), this approach for increasing the neuron numbers exactly maintains the original LPUs.

\subsubsection*{Simulating the drift}
We evolved each network for $2000$ data points, after which we performed an instantaneous update in embedding weights, $m^{(p)} := m^{(p)} + \Delta m^{(p)}$ for $p=1,2,3$, and allowed the network to run for another $4000$ time points after the drift. To test the robustness of the networks to different types of representational drift, we picked the changes in $\Delta m^{(p)}$ in three distinct manners. First, we sampled $\Delta m^{(p)}$ from a random Gaussian with zero mean and $g$. This constituted the case of the fully random drift. For the targeted drifts, we projected the random weights to the directions parallel or orthogonal to the encoding weight space, spanned by $n^{(p)}$. To account for changes in power due to projecting, we re-scaled $\Delta m^{(p)}$ to have the same standard deviation as the one before the projection.

\subsubsection*{Evaluating the accuracy} 
We estimated the latent variables by either using the full set of neurons, or by using only the first $100$ neurons. Then, we computed the internal states of the RNN by thresholding the estimated latent variables. These latent states were compared with the outputs. If all three states were correct, the accuracy was one for that time point; otherwise one. For each network, we performed three distinct runs using a randomly chosen initial conditions and calculated mean accuracies across the three runs. When calculating these mean accuracies, we used only the last $2000$ time points, allowing the transient responses right after the drifts to settle.

\subsubsection*{Additional details on Figure \ref{fig4}}

For the illustrative tuning curve example in Fig. \ref{fig4}\textbf{B}, we reused the example RNN from Fig. \ref{fig1}. The applied drift had a strength of $g_{\rm drift} = 0.2$ and was orthogonal to the encoding subspace. We used the same network for the illustrations in \textbf{E-H}, where we evolved the RNN with an input noise drawn from a Gaussian distribution of mean zero and s.d. $10^{-2}$. The noise was added independently at each time point to the current term before the evaluation of the nonlinearity. For the RNNs solving the 3-bit flip flop task, which we reported in Fig. \ref{fig4}\textbf{D, I-J} in batch experiments, we set the standard deviation of the noise to $10^{-1}$. To obtain the fitted curves in these panels, we fit a sigmoid function between $1$ and $1/8$ (chance level) for the accuracy vs the logarithm of the drift strength values. $g_{\rm half}$ values correspond to the transition points of the fitted sigmoid curves, in which the sigmoid terms halve.

\clearpage
\newpage

\section*{Figures and Captions}

\begin{figure}[!t]
    \centering
    \includegraphics[width=\linewidth]{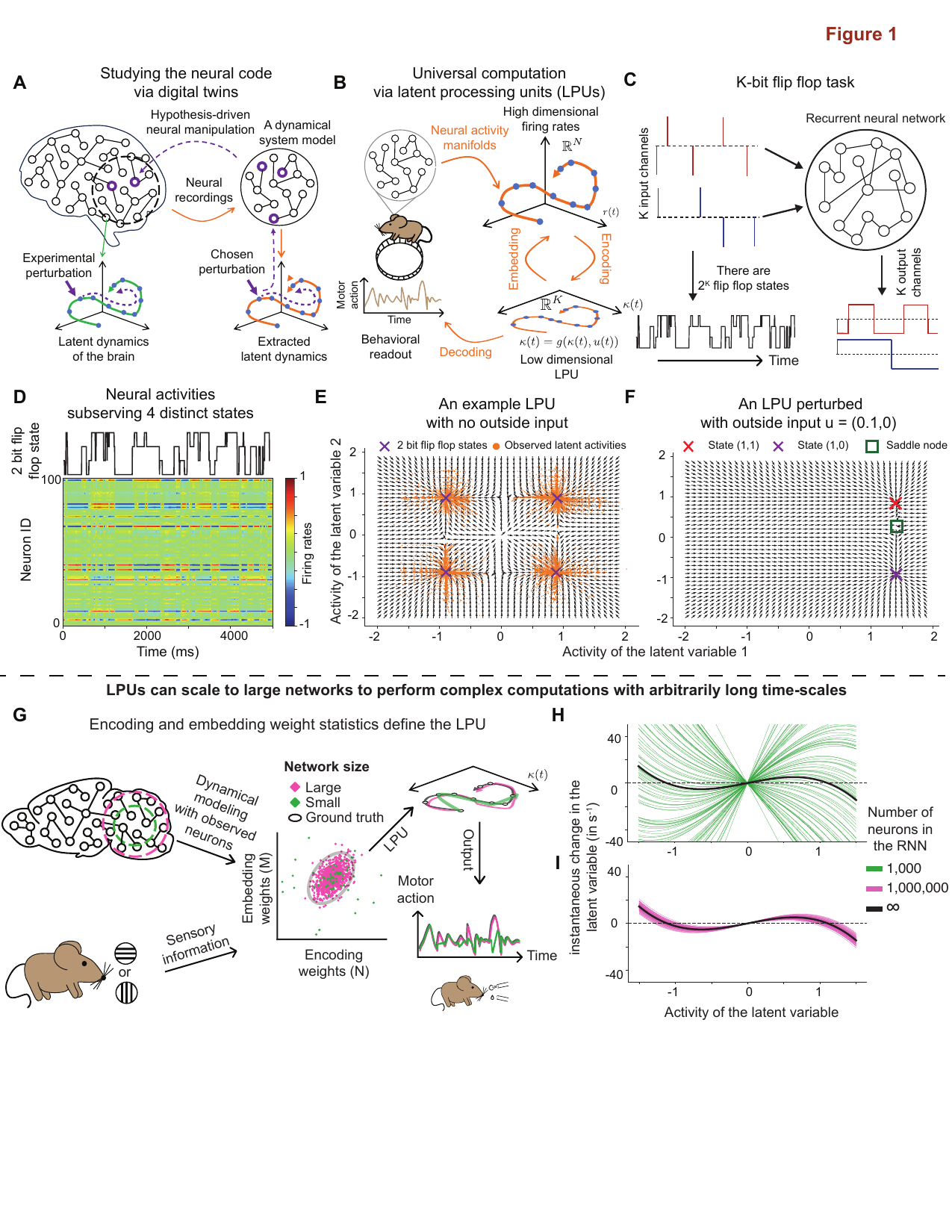}
\end{figure}

\clearpage
\newpage

\captionof{figure}{\textbf{LPUs enable universal computation despite the limitations of individual neurons.}} \label{fig1}

\textbf{(A)} A schematic of a model-driven manipulation experiment designed to test the neural manifold hypothesis. The observed neural dynamics are replicated by an interpretable dynamical model, which suggests potential single-cell manipulation experiments. This model is iteratively refined to predict the brain's responses to novel perturbations. The dimensionality of the computational variables in the model helps determine whether observed neural activities reside on low-dimensional curved manifolds, as predicted by the neural manifold hypothesis. \\ 

\textbf{(B)} An overview of the latent computation framework (LCF) formalizing the connections between animal behavior, observed neural activities, and population-level coding variables. In LCF, latent processing units (LPUs) can be accessed via linear encoding maps from neural activities, subserve arbitrarily complex computations, and jointly support behavior through linear readouts while driving observed neural dynamics via nonlinear embedding maps. \\ 

\textbf{(C)} To demonstrate the utility of LPUs with a simulated example, we trained an RNN to perform a 2-bit flip-flop task. The RNN learns four attractive fixed points, enabling it to maintain four internal states that correspond to the flip-flop combinations. \\ 

\textbf{(D)} Consistent with the traditional view of neural activity, neurons exhibit preferential coding for specific stimuli and/or internal states. Although latent variables can explain tuning properties, the converse direction, \textit{i.e.}, the tuning curve perspective alone, lacks causal predictions linking behavior with neural activities. \\

\textbf{(E, F)} Illustrations of the LPU belonging to the RNN in Panel \textbf{(D)}. \\ 

\textbf{(E)} The LPU without any targeted external input, where four attractive fixed points correspond to the four flip-flop states. Small random perturbations around these points decay back to the original state. \\

\textbf{(F)} The LPU under a directed perturbation aligned with the input weights of the first channel. In this case, the states (0,0) and (0,1) disappear, forcing the network to settle into either state (1,0) or (1,1), depending on the flip-flop state before the perturbation. \\

In both panels, \textbf{(E,F)}, arrows indicate the normalized direction of state updates, orange dots represent the network's current states, crosses mark attractive fixed points, and the rectangle denotes the saddle node that separates the two remaining states following the perturbation. \\

\textbf{(G)} In LCF, the population statistics of the encoding and embedding weights, hence the synaptic connections between neurons, define the LPU. In this view, the collective sum of synaptic connections, not any particular one, is responsible for the circuit operation. The goal of the trained RNN is not just to match motor actions, but also to reproduce the observed neural activities. \\

\textbf{(H, I)} Although individual neurons have decay times of $\tau = 1$ ms, latent variables can be designed to evolve at timescales that are several orders of magnitude slower. The plots demonstrate the latent dynamics obtained by drawing encoding and embedding weights from a predetermined two-dimensional Gaussian distribution that constructs the desired LPU as $N_{\rm rec} \rightarrow \infty$ (black solid lines, \textbf{Methods}). Neural networks with one thousand neurons exhibit significant variability in their LPUs (\textbf{H}), while networks with one million neurons can robustly construct the desired LPU (\textbf{I}). 

\clearpage
\newpage

\begin{figure}
    \centering
    \includegraphics[width=\linewidth]{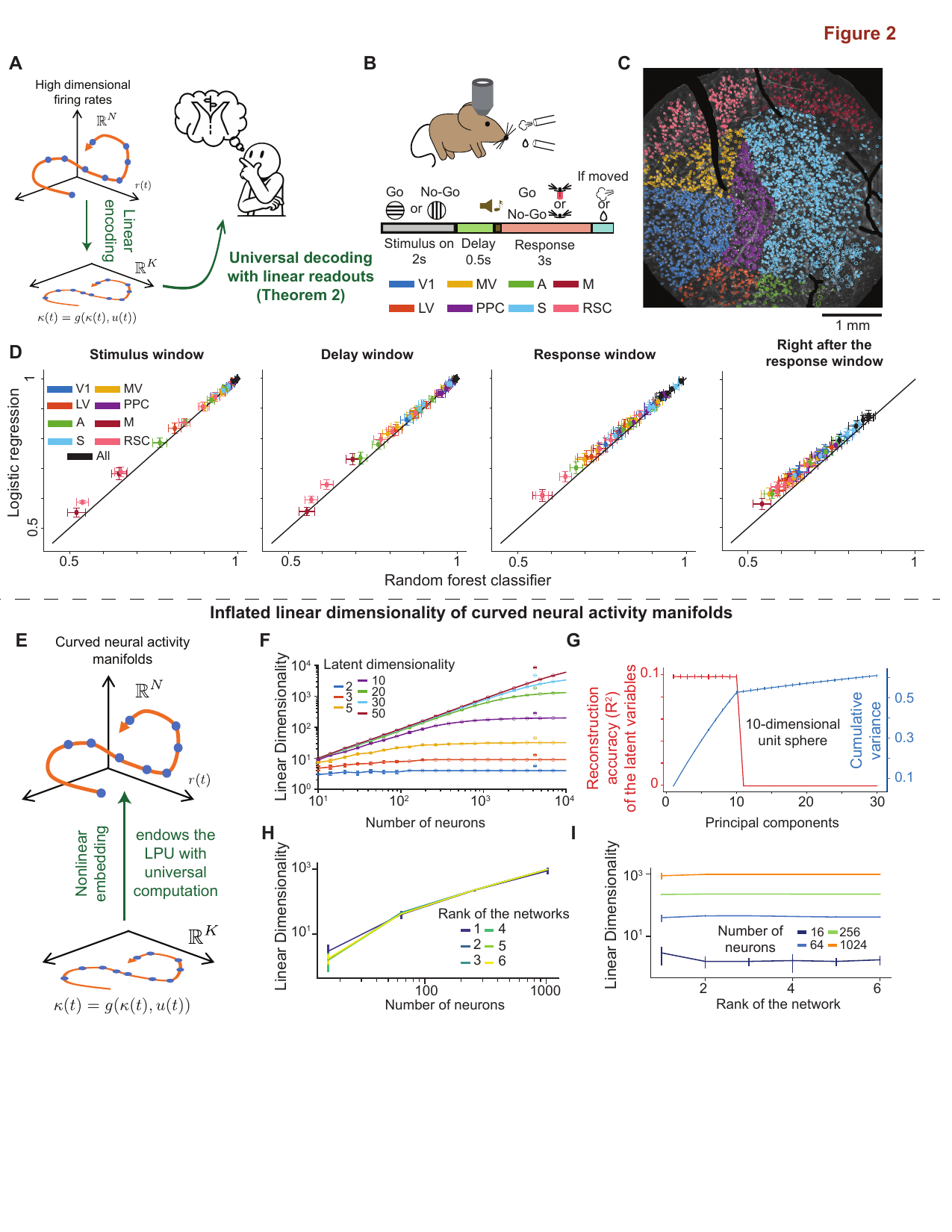}
    
\end{figure}

\clearpage
\newpage

\captionof{figure}{\textbf{Linear encoding and nonlinear embedding maps explain key experimental findings.}}\label{fig2}

\textbf{(A)} The linear encoding of LPUs enables direct decoding of animal behavior from neural activities, bypassing the need for explicit identification of latent variables within neural populations. \\

\textbf{(B, C)} To examine the sufficiency of linear decoders with low-dimensional bottlenecks, we re-analyzed previously published datasets of mice performing a visual discrimination task \cite{ebrahimi2022emergent}.  \\

\textbf{(B)}  A mouse performing a visual Go-NoGo task, where it must lick a spout or remain still based on the visual stimulus presented. Each trial consists of 2 seconds of stimulus presentation, followed by a 0.5-second delay and a 3-second response window.  \\

\textbf{(C)} An example maximum projection of the mouse cortex, where individual neurons are color-coded according to their respective cortical regions. Each dot represents a cell, with a total of 5,292 cells recorded in this session. Brain regions include: V1 (primary visual cortex), LV (lateral visual cortex), MV (medial visual cortex), PPC (posterior parietal cortex), A (auditory cortex), S (somatosensory cortex), M (motor cortex), and RSC (retrosplenial cortex). \\ 

\textbf{(D)} To assess the sufficiency of linear readouts, we performed decoding experiments to classify trial type (correct Go vs. NoGo), applying linear bottlenecks to regularize the dimensionality. These reduced dimensions (3 dimensions here, also see Fig. \ref{figs2}) were obtained using partial least squares (PLS) following the approach in \cite{ebrahimi2022emergent} (see \textbf{Methods}). We trained logistic regression (linear) and random forest (nonlinear) classifiers. In all cases, the linear decoders either matched or outperformed the nonlinear decoders, which tended to overfit. In each panel, dots represent the mean test accuracy for cortical neurons from a specific brain region in a single animal. Error bars indicate standard deviations across 100 training and testing splits. Panels include data from six mice and a total of 30 imaging sessions. \\

\textbf{(E)} Nonlinear embedding enables universal computation at the expense of inflated linear dimensionality. \\

\textbf{(F)} A \emph{static} embedding of a $K$-dimensional sphere via $r = \sin(\kappa)$ leads to orders of magnitude increases in the linear dimensionality of neural activities. Lines: means. Error bars: s.e.m. over five random initializations. \\

\textbf{(G)} A \emph{noiseless} embedding of a 10-dimensional unit sphere results in half the variance being in dimensions that contribute minimally to recovering the LPU. Lines: means. Error bars: s.e.m. over 100 random initializations.\\

\textbf{(H, I)} RNNs performing a sequence sorting task ($T=2$) showed increased linear dimensionality with added neurons \textbf{(H)}, despite having ranks within 1-6. The increase in linear dimensionality cannot be explained by the increase in rank \textbf{(I)}. Lines: means. Error bars: s.d. across 10 runs.

\clearpage
\newpage

\begin{figure}
    \centering
    \includegraphics[width=\linewidth]{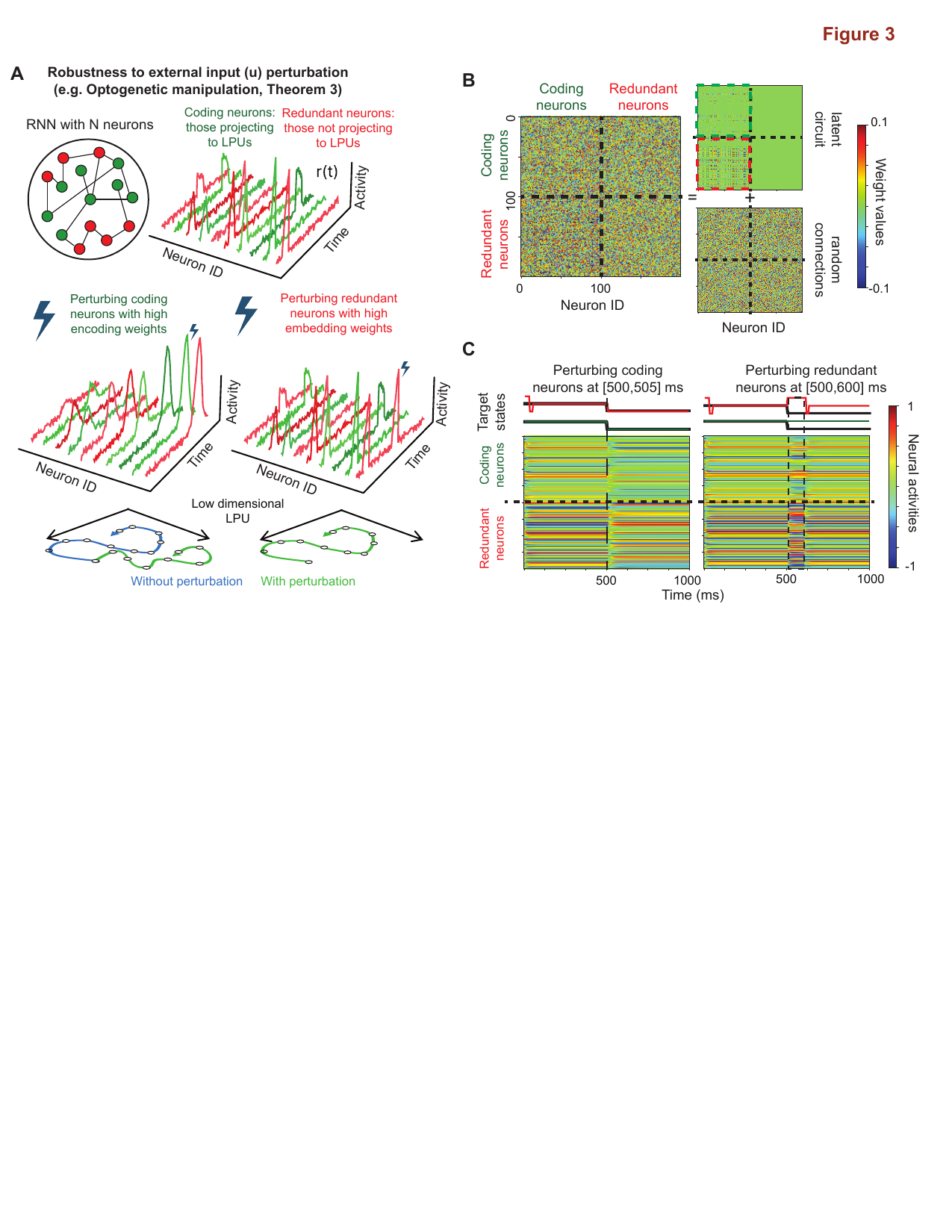}
\end{figure}
\clearpage
\newpage

\captionof{figure}{\textbf{A simple illustration of effective and ineffective perturbation directions in neural manifolds.} } \label{fig3}

\textbf{(A)} We define coding neurons for a given task as those that have non-zero latent projections for at least one of the latent variables of the circuit (\textbf{Methods}). Thanks to the distinction between (linear) encoding and (nonlinear) embedding maps, neurons that do not encode any of the latent variables can still hold redundant representations (\textbf{Theorem \ref{thm4}}). \\

\textbf{(B)} To illustrate how elusive the identities of coding and redundant neurons can be in practice, we conducted a simulation using the example RNN from Fig. \ref{fig1}\textbf{D-F} with 100 neurons, performing a 2-bit flip flop task. We added 100 additional neurons that did not code for the LPU but exhibited representations of the flip-flop states through non-zero embedding weights (\textbf{Methods}). To mask the stereotypical structure and bands of zero connectivities in the LPU, we introduced random projections with twice the standard deviation of the structured connections. The green rectangle marks connections between coding neurons, while the red rectangle highlights connections from coding to redundant neurons. \\

\textbf{(C)} While the encoding weights are not observable, the embedding weights are, indirectly, through tuning curves. We designed a realistic perturbation experiment based on observable tuning properties of individual neurons (\textbf{Methods}). The black lines denote the desired flip-flop states, the green lines represent states readout from coding neurons, and the red lines represent those readout from redundant neurons. \emph{Left:} Perturbing coding neurons even slightly led to the desired state change, readable from both coding and redundant neurons. \emph{Right:} Perturbing redundant neurons did not alter the flip-flop state, though the readout (incorrectly) changed for redundant neurons only. This experiment suggests that tuning properties alone do not necessarily indicate a neuron's contribution to the LPU.

\clearpage
\newpage

\begin{figure}
   \centering
\includegraphics[width=\linewidth]{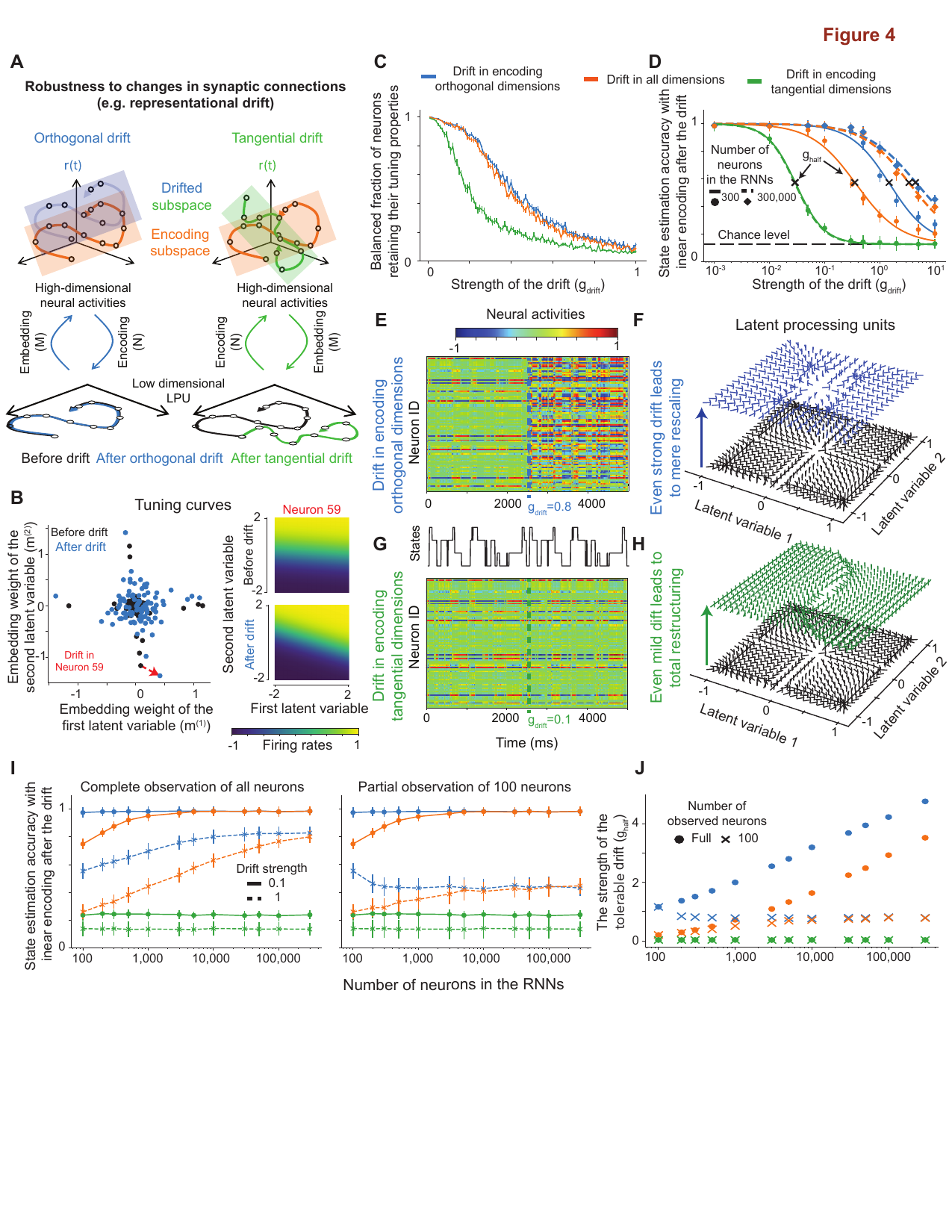}
\end{figure}

\clearpage
\newpage

\captionof{figure}{\textbf{Drift-robust encoding of neural activities into LPUs.} }  \label{fig4} 

\textbf{(A, B)} Drift-robust encoding of LPUs enables RNNs to preserve stable neural representations even as perturbations to the embedding weights alter neural tuning curves.  \\

\textbf{(A)} We simulated the representational drift as changes to the embedding weights ($M$), which define the maps from coding latent variables to the neural activities. Our goal is to study how different types of drifts in these weights might affect the latent code. \\

\textbf{(B)} Tuning properties of individual neurons change as representational drift occurs. \emph{Left:} Representational drift is simulated by altering the embedding weights from the latent variables to the neural activities. \emph{Right:} The change in the embedding weights of neuron 59 shifts its tuning properties. Before the drift, the neuron selectively codes for the second latent variable (aligned with the second output), whereas after the drift, the tuning curve becomes mixed selective, representing both outputs jointly. \\

\textbf{(C, D)} In these panels, we conducted drift experiments on low-rank RNNs performing 3-bit flip-flop tasks. After allowing the RNNs to evolve, we introduced instantaneous changes to the embedding weights in three distinct manners: For fully random drift, changes were sampled from a zero-mean Gaussian distribution with a standard deviation of $g_{\rm drift}.$ For targeted drifts, the random weights were projected onto directions either parallel or orthogonal to the encoding subspace, with rescaling to maintain constant power. \\

\textbf{(C)} The drift experiment, in which an abrupt change is applied to the embedding weights, changes the coding properties of the neurons depending on the strength of the drift. \\

\textbf{(D)} Average state estimation accuracies after the three types of drift. Linearly encoded LPUs remained robust to drifts in dimensions orthogonal to the encoding subspace with as few as a few hundred neurons. In contrast, robustness to fully random drifts emerged with large-scale networks. In contrast, drifts along the encoding subspace consistently resulted in significant decreases in state estimation accuracy from latent variables, regardless of the number of neurons. Here, $g_{\rm half}$ refers to the drift strength at which the sigmoid fit halves (see \textbf{Methods}), defined as the ``tolerable drift" strength before state estimation accuracies rapidly decline. \\

\textbf{(E-H)} To investigate the mechanism behind the distinct drops in readout accuracies following tangential versus orthogonal drifts, we reanalyzed the example RNN with 100 neurons from Fig. \ref{fig1}, trained to perform the 2-bit flip flop task. To highlight the contrast, we applied a strong encoding orthogonal drift with $g_{\rm drift} = 0.8$ and a milder encoding tangential drift with $g_{\rm drift} = 0.1$. \\

\textbf{(E)} Neural activity overview before and after the encoding orthogonal drift. The drift caused substantial changes in the tuning properties of individual neurons, significantly altering their selectivity. \\

\textbf{(F)} The LPU before (black, the same as in Fig. \ref{fig1}\textbf{(E)}) and after (blue) the drift. Despite the changes in the tuning curves, the LPU structure remained largely intact, preserving the four attractive fixed points. Drift simply rescaled the latent flow map. \\

\textbf{(G)} Overview of the neural activities before and after the encoding tangential drift. The mild drift caused changes in the tuning properties of neurons already involved in coding but, unlike in panel \textbf{(E)}, did not introduce new coding neurons.  \\

\textbf{(H)} The LPU before (black, same as in panel \textbf{(F)}) and after (green) the drift. Despite mild changes in the tuning curves, the LPU underwent significant restructuring. After the drift, the LPU no longer has the four attractive fixed points necessary to store the flip-flop states. \\

\textbf{(I-J)} Quantification of the robustness to drift as a function of the number of neurons under full and partial observation. \\

\textbf{(I)} Same results as in panel \textbf{(D)}, but shown as a function of neuron count across different drift strengths. While networks become increasingly resistant to higher levels of drift as neuron numbers grow, this robustness may be masked when only a small fraction of neurons are observed. \\

\textbf{(J)} Tolerable drift strength as a function of the number of neurons in the RNN for the three types of drift. Robustness to fully random drift emerged only in networks with more than a few thousand neurons, while encoding orthogonal drifts could be mitigated with just a handful of neurons. Encoding tangential drift consistently impaired performance, regardless of the number of neurons. Moreover, when the number of observed neurons was limited to 100, qualitative robustness was still observed in the first two types of drift, though the quantitative values were significantly reduced. \\

\clearpage
\newpage

\renewcommand\thefigure{S\arabic{figure}}
\renewcommand\thetable{S\arabic{table}}
\renewcommand{\theHfigure}{S\arabic{figure}}
\setcounter{table}{0}
\setcounter{figure}{0}

\clearpage
\newpage

\begin{figure}
    \centering
    \includegraphics[width=\linewidth]{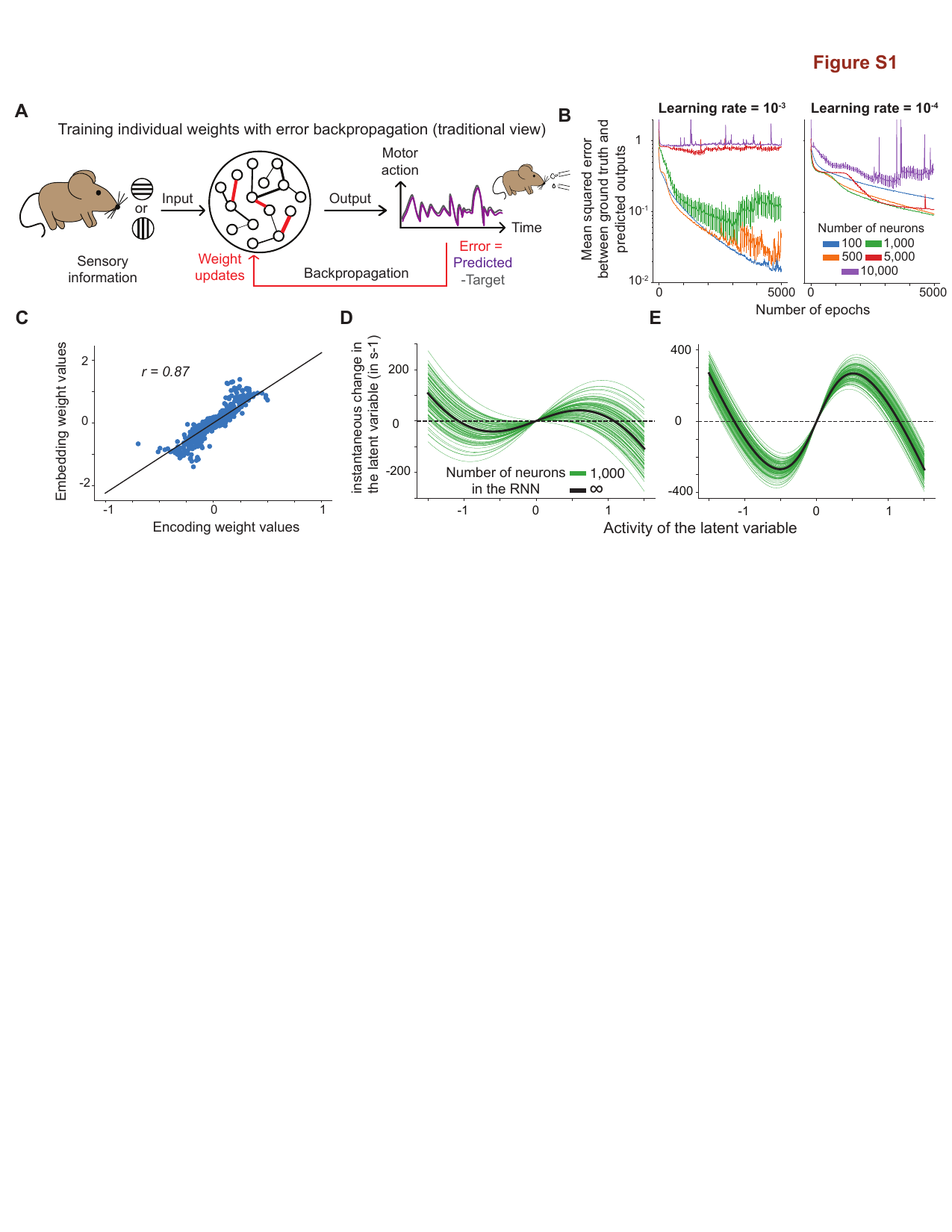}
\end{figure}

\clearpage
\newpage

\captionof{figure}{\textbf{Analyzing LPUs in large-scale RNNs capable of storing flip-flop states.}} \label{figs1}

\textbf{(A)} The traditional view of training artificial networks focuses on minimizing a loss function by propagating error back in time and space and updating the weights accordingly. In this view, the particular synaptic connections between individual neurons play significant roles. The neural activities are only relevant to the extent of how well the predictions match the target motor actions. \\

\textbf{(B)} We trained RNNs (with varying numbers of neurons) using backpropagation through time to perform the 3-bit flip-flop tasks. To train larger RNNs, lower learning rates were required, which inevitably increased the number of training epochs required to achieve low training errors. Trained RNNs were full-rank and no biologically motivated constraints were enforced to rule out alternative explanations of the learning deficiencies. Solid lines: means. Error bars: s.e.m. over 20 networks. \\

\textbf{(C)} To assess the synaptic structure of the LPUs in RNNs trained to store flip-flop states, we studied $50$ networks with $100$ neurons trained to perform the 3-bit flip-flop tasks. These RNNs were trained using the two-step paradigm described in \textbf{Methods}. The plot shows aggregated encoding, vs embedding weights across all networks and LPUs. Each dot corresponds to a single neuron. Two quantities showed significant correlations with each other ($p<0.001$).  \\

\textbf{(D-E)} As in Fig. \ref{fig1}\textbf{H}, but for LPUs with faster time-scales. RNNs with fewer neurons were able to faithfully model computations that had fast time scales on par with neuronal decay times (of 1 ms). \\ 

\clearpage
\newpage

\begin{figure}
    \centering
    \includegraphics[width=\linewidth]{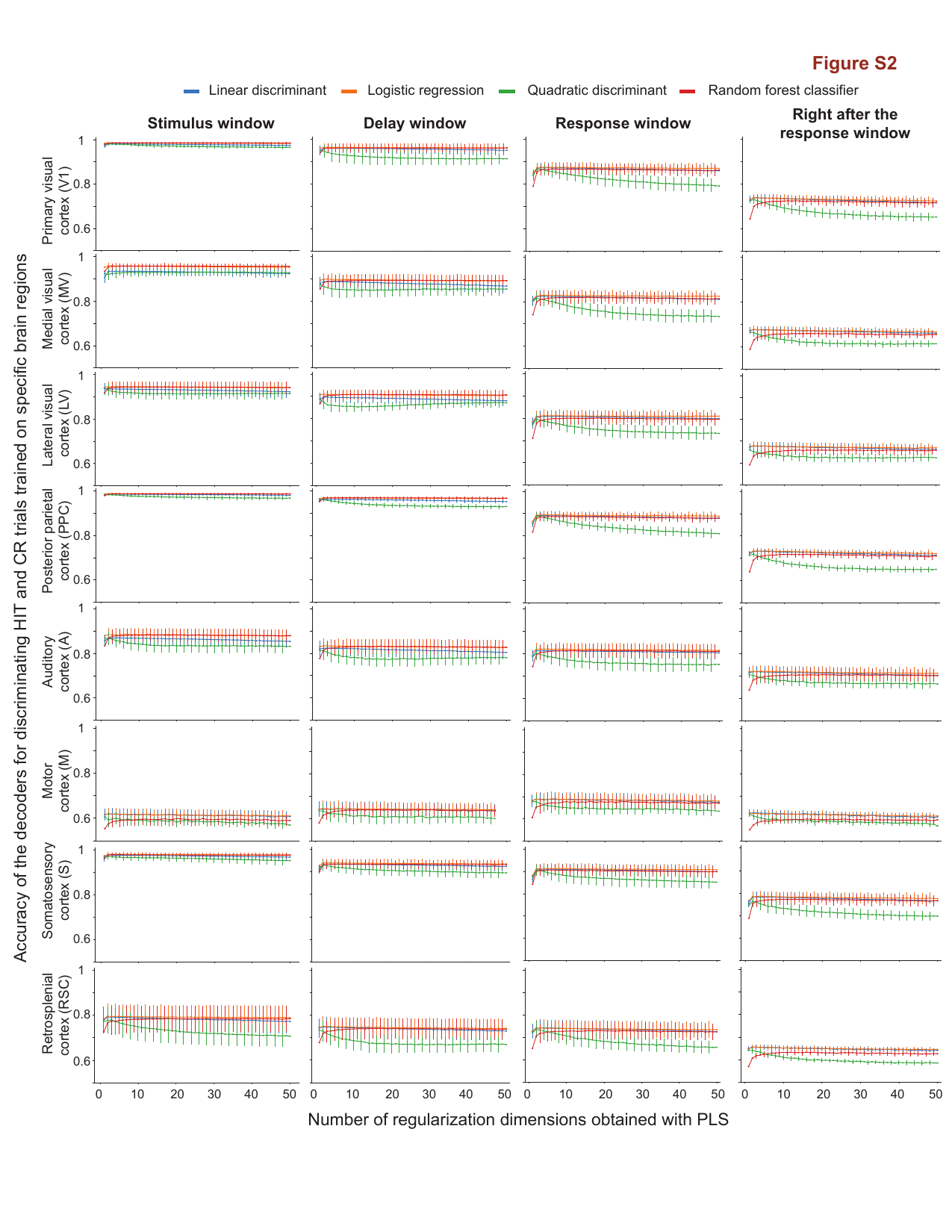}
\end{figure}

\clearpage
\newpage

\captionof{figure}{\textbf{Decoding of trial identity in eight neocortical regions through linear and nonlinear methods.}}
    \label{figs2}

Accuracy results for decoding HIT versus CR trials using various classifiers trained on neural data from different cortical regions, including the primary visual cortex (V1), medial visual cortex (MV), lateral visual cortex (LV), posterior parietal cortex (PPC), auditory cortex (A), motor cortex (M), somatosensory cortex (S), and retrosplenial cortex (RSC). Each classifier—linear discriminant, logistic regression, quadratic discriminant, and random forest—was trained with a varying number of regularization dimensions obtained using partial least squares (PLS). Results are shown for multiple time windows: stimulus, delay, response, and immediately after the response. Each plot indicates mean accuracy across 100 training and testing splits over six mice, with s.e.m. shown across the six mice.

\clearpage
\newpage

\begin{figure}
    \centering
    \includegraphics[width=\linewidth]{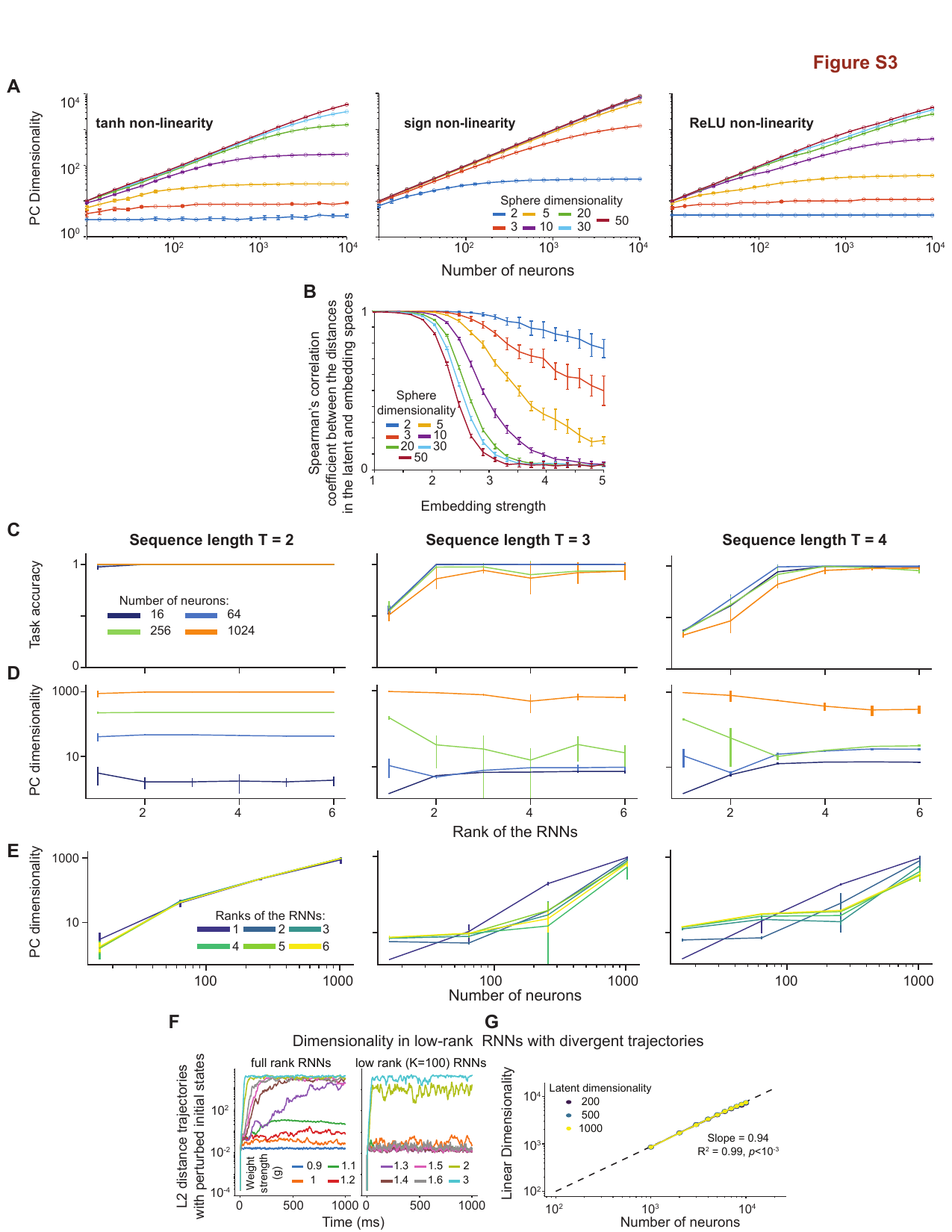}
\end{figure}

\clearpage
\newpage

\captionof{figure}{\textbf{Inflated linear dimensionality of neural manifolds in simulated networks.}} \label{figs3}

\textbf{(A)} We performed the analysis in Fig. \ref{fig2}\textbf{(F)} for three distinct nonlinearities: tanh, sign, and ReLU. \\ 

\textbf{(B)} The plot shows the Spearman’s correlation between the distances in the latent and embedding spaces. Increasing the embedding strength beyond a certain threshold resulted in embedding manifolds that failed to preserve the distances between points in the latent space. \\

\textbf{(C-E)} We varied the sequence length for RNNs performing a sequence sorting task and measured the resulting linear dimensionality. Points and bars represent the mean and standard deviation across 10 experimental repetitions with different random seeds. \\

\textbf{(C)} Task accuracy plateaued once the RNN rank matched the task dimensionality, $T - 1$, where $T$ is the sequence length of the sorting task. This plateau occurred independently of neuron count. \\

\textbf{(D)} The extrinsic dimensionality of the neural manifold remained constant across increasing RNN ranks, unaffected by the task dimensionality $T - 1$. \\

\textbf{(E)} The extrinsic dimensionality of the neural manifold increased linearly with the number of neurons, independent of both RNN rank and task dimensionality. \\

\textbf{(F)} When connections between neurons are randomly drawn from a Gaussian distribution with zero mean and s.d. $>\frac{1}{\sqrt{N_{\rm rec}}}$, the networks can sustain spontaneous activity and exhibit chaotic behavior. We designed a similar random sampling process for the weight matrices of low-rank RNNs. Even when only $K=100$ out of 1000 ranks of the weight matrix were occupied, the RNNs were able to sustain spontaneous, diverging neural trajectories. Each line represents a single network. \\

\textbf{(G)} Low-rank RNNs ($g=15$) show a nearly linear increase in their principal dimensionality with added neurons, displaying little sign of saturation. Line: mean. Error bars: s.d. across three runs.

\clearpage
\newpage

\begin{table}[!t]
\begin{center}
\caption{\label{tabs1} \textbf{An overview of existing research and relevance to this work.} Below, we provide a list of relevant work in order of their relevance and summarize how these approaches handle the questions covered by our framework. For theoretical calculations, please refer to the section ``\emph{Existing models in encoding-embedding framework}'' in \textbf{Methods}. \\}
\scriptsize
\begin{tabular}{ |p{1.5cm}|p{2.5cm}| p{2.5cm}| p{2.5cm}| p{3cm}|p{3cm} |} 
\hline
Description & Latent variables & Identifiability and behavioral readouts & Coding neural activity manifolds & Tuning curves and representational drift & Biological relevance   \\ \hline
Latent processing units & LPUs provide a causal model of latent dynamics. Theorem \ref{thm1} guarantees universal approximation capabilities, whereas Proposition \ref{prop1} introduces practical limits, \textit{i.e.}, large-scale populations are needed for computations with separation of timescales. & Theorem \ref{thm2} guarantees the optimality of linear readouts, and that they do not require any form of latent variable identification, thereby not suffering from identifiability issues. Latent variables themselves are linearly identifiable. & Proposition \ref{prop2} provides a positive example for how low-dimensional LPUs can lead to high-dimensional neural activities in non-trivial manifold geometries. Deviations to these coding manifolds decay exponentially fast (Theorem \ref{thm4}). &  LPUs can model non-trivial tuning curves thanks to the non-linearity of the embedding, which enables novel simulations of representational drift in RNNs. Theorem \ref{thm5} guarantees that not only behavioral readouts, but also the latent dynamics could remain robust to the drift. & LPUs are designed with biologically relevant motivations, and can explain several experimentally observed phenomena with no known contradictions.  \\ \hline
 Low-rank leaky current RNNs \cite{mastrogiuseppe2018linking,valente2022extracting,dubreuil2022role,beiran2021shaping}; also see \cite{Langdon2025} for a recent new direction & The first line of work to establish the connection between low-rank connectivity matrices and low-dimensional coding subspaces. In a particular RNN architecture, these works also introduce a causal model of latent dynamics. An equivalent of Theorem \ref{thm1} exists. & Linear readouts can circumvent identifiability issues, though no known results on their efficiency. Our Theorem \ref{thm2} could extend to this architecture if readouts were taken from the state variables $x$, but neural activities are nonlinear functions of stat variables, $\phi(x)$, in this architecture. Latent variables are linearly identifiable. & The coding subspaces lie on hyperplanes, \textit{i.e.}, flat manifolds. Similar to our Theorem \ref{thm4}, deviations to these hyperplanes decay exponentially. & Tuning curves are defined as linear maps from latent variables ($r = M \kappa$), where $M$ contains the left-singular vectors of the connectivity matrix (\textbf{Methods}). No known result on representational drift, and unclear whether our theorems cover these since tuning relationships are overly constrained. & The linearity constraint using left-singular vectors couples latent variable definition with tuning curves, which leads to several unrealistic restrictions: coding dimensions are hyperplanes, tuning curves are linear, implications for representational drift currently unknown. Assumptions contradict known results on non-linear tuning curves, and high-dimensional curved neural activity manifolds.  \\  \hline
 rSLDS framework  \cite{linderman2017bayesian,vinograd2024causal,hu2025modeling} & First line of work to prove the existence of line attractors in the brain with intervention experiments. rSLDS also provide a causal model of latent dynamics. No known result on universal approximation, though in the limit of infinitely many internal states, it may be possible to achieve universal computation. & Linear readouts can circumvent identifiability issues, though no known results on their efficiency. Latent variables are linearly identifiable. & The coding subspaces lie on hyperplanes, \textit{i.e.}, flat manifolds. & Linear tuning relationships, with no known discussion of representational drift. It is unclear whether our theorems cover these architectures.  & rSLDS framework often considers the learned networks as effective models of brain activities, rather than making claims on the biological processes. Similar to above, the linear map from latent variables significantly constraint the manifold geometry and tuning curves, though see \cite{karniolmodeling} and \cite{gao2016linear} for distinct directions with nonlinear readouts and/or dynamics. \\ \hline
Autoencoder based models \cite{schneider2023learnable,o2022direct,abbaspourazad2024dynamical} & Most models do not constraint latent dynamics, and some enforce simple dynamics such as linear maps. For the latter, it is not clear if the models have universal approximation properties. & In some cases, linear identifiability of latent \emph{variables} is achievable. Since encoding in these models is often nonlinear, identifiability or efficiency of linear readouts we have proven in this work do not extend to these models. & NA/Not a causal framework of neural coding & NA & NA \\ \hline
\end{tabular}
\end{center}
\end{table}

\clearpage
\newpage

\begin{table}[!t]
\begin{center}
\caption{\label{tabs2} \textbf{A summary of encoding and embedding types for dynamical systems.} For the theoretical derivations of the results in this table, See \textbf{Methods}. $^*$ANN stands for artificial neural networks. $^\dag$The leaky current RNNs fall within linear encoding and nonlinear embedding with additional constraints. }

\begin{tabular}{ |l|c|c| }
\hline
 & Nonlinear encoding $\phi$ & Linear encoding $\phi$ \\ \hline
\multirow{3}{*}{\parbox{2cm}{Nonlinear\\ embedding $\varphi$}} 
 & May require additional constraints  & Implemented by leaky current$^\dag$/firing rate RNNs  \\
 & Embedding and encoding function = ANNs$^*$  & $ \tau \dot  r(t)= - r(t) + \tanh\left(W^{\rm rec} r(t)  \right)$ \\
  & Can support rich computations  & Can support rich computations \\

 \hline
\multirow{3}{*}{\parbox{2cm}{Linear\\ embedding $\varphi$}}
 & May require constraints on & Implemented by linear systems \\
 & the nonlinearity  &  $\dot r(t) = A r(t)$ and $\dot \kappa(t) = B \kappa(t)$ \\
  &Can support rich computations & Cannot support rich computations \normalsize \\
 \hline
 
\end{tabular}
\end{center}
\end{table}

\clearpage
\newpage
\setlength{\parskip}{2ex} 
\setlength{\parindent}{1.5em}

\setcounter{equation}{0}
\setcounter{theorem}{0}
\setcounter{lemma}{0}
\setcounter{definition}{0}
\setcounter{corollary}{0}
\setcounter{section}{0}

\renewcommand\thesection{A\arabic{section}}
\renewcommand\theequation{A\arabic{equation}}
\renewcommand\thefigure{A\arabic{figure}}
\renewcommand\thetable{A\arabic{table}}
\renewcommand{\theHfigure}{A\arabic{figure}}
\setcounter{table}{0}
\setcounter{figure}{0}

\part*{\Large Supplementary Note: Constructing latent processing units in finite networks}

In this supplementary note, we illustrate some common problems associated with latent variable  \cite{schneider2023learnable,zimnik2024identifying}, especially in the case of partial observation of the networks. First, we discuss how to design one-dimensional LPUs that have bistable dynamics and derive constraints on the synaptic connections between neurons such that resulting dynamics can take place in target time-scales. Then. inspired by previous work that aims to model functional connections between neurons using data-constrained RNN models \cite{perich2020rethinking,perich2021inferring}, we design a simple, yet insightful, procedure for reconstructing LPUs with partial observation of neurons. Both analyses lead to analytical scaling relationships for the errors resulting from LPU (re-)construction, with two distinct empirically relevant predictions: i) a quantitative relationship for how many neurons are needed to accurately design the bistable LPU with desired time-scales, and ii) how well existing LPUs can be reconstructed, with particular focus on the contrast in the sparse ($N_{\rm obs} \ll N_{\rm rec}$) and rich ($N_{\rm obs} \approx N_{\rm rec}$) observation limits. Here, $N_{\rm obs}$ refers to the number of recorded neurons, which can become $\sim O(N_{\rm rec})$ with the recent large-scale recording technologies \cite{manley2024simultaneous}.

\section{Theoretical analysis of latent time-scales in a bistable circuit}

We start by writing the flow-map of a one-dimensional LPU without any inputs or biases:
\begin{equation} \label{supeq1}
   G_{N_{\rm rec}}(\kappa) := -\kappa + \frac{1}{N_{\rm rec}} \sum_{i=1}^{N_{\rm rec}} n_i \tanh(m_i \kappa) \xrightarrow{N_{\rm rec}\to \infty} -\kappa + E[n\tanh(m\kappa)]:=G(\kappa),
\end{equation}
where, for a given $\kappa$, $G_{N_{\rm rec}}(\kappa)$ approximates a Gaussian random variable with mean $G(\kappa)$ and some variance $\Sigma/N_{\rm rec}$ by the central limit theorem. Our goal in this section is to constrain the distribution of the pair $\{n,m\}$ such that the latent dynamical system approximates a bistable toy model in the mean-field limit, \textit{i.e.}, has a repelling fixed-point at the origin and two attractive fixed-points at $\kappa^* \pm 1$. Then, we will study the quantitative behavior and stability of this system as a function of finite $N_{\rm rec}$ neurons in the network.

\subsection{One-dimensional bistable toy model}

Since our interest is to design a bistable LPU (\textit{e.g.}, those in Fig. \ref{fig1}\textbf{G-I}), we start by discussing the behavior of the network around the origin, \textit{i.e.}, small $\kappa$ such that:
\begin{equation} \label{supeq2}
   G_{N_{\rm rec}}(\kappa) = -\kappa + \frac{1}{N_{\rm rec}} \sum_{i=1}^{N_{\rm rec}} n_i \tanh(m_i \kappa) \approx -  \left(1-\underbrace{\frac{1}{N_{\rm rec}} \sum_{i=1}^{N_{\rm rec}} n_i m_i }_{C(n,m)}\right) \kappa
\end{equation}
where $C(n,m)$ refers to the empirical covariance of the random variables $n$ and $m$ (which we assume to have zero mean for simplicity). By inspection, $E[C(n,m)] = E[nm]$, \textit{i.e.}, it is an unbiased estimator, and as $N_{\rm rec \to \infty}$, $C(n,m) \to E[nm]$ by the central limit theorem. Here, we focus on this limit for theoretical analysis, and later consider the asymptotical behavior for finite $N_{\rm rec}$. 

By inspection of the Eq. \eqref{supeq1}, the origin is  a fixed-point regardless of the distribution of the pair $\{n,m\}$. Then, inspecting the linearized dynamics in Eq. \eqref{supeq2} reveals that, for $E[nm] > 1$, the origin is a repeller fixed-point, whereas for $E[nm]<1$, it is an attractive fixed-point. For a bistable system, the origin is ought to be a repeller, hence we enforce that our distribution should satisfy the property $E[nm] > 1$. Then, with no additional constraint on the distribution, we can argue that at least one attractive fixed-point should be present for $\kappa > 0$ (and since the flow map in Eq. \eqref{supeq1} is anti-symmetric, similarly for $\kappa<0$). Specifically, for large $\kappa$, the second term in $G(\kappa)$ saturates such that $G(\kappa \to \infty) \to - \kappa + O(1)$. Yet, for $E[nm]>1$, $G(\kappa)>0$ for small but non-zero $\kappa >0$. Hence, due to the intermediate value theorem, there has to be at least one $\kappa^*$ such that $G(\kappa^*) = 0$. Moreover, we can also assert that at least for one such $\kappa^*$,  $G'(\kappa^*)<0$, since $G'(\kappa \to \infty) \to -1$ and $G'(\kappa = 0) > 0$. This statement can be proven by contradiction, \textit{i.e.}, if no such $\kappa^*$ exists, by the continuity of $G'(\kappa)$, $G(\kappa)$ cannot have a negative derivative for $\kappa>0$, which is a contradiction. Thus, we argued, with mild assumptions, that the origin becomes a repeller and there has to be an attractive fixed-point $\kappa^*$, which can be set $\kappa^* = 1$ without loss of generality by re-scaling the variables $n$ and $m$, respectively. This choice ensures that arbitrary transformations of $\kappa$, using the $GL_K(\mathbb R)$ symmetry, cannot trivially extend or shrink the desired time-scales of the LPUs.

Yet, this analysis does not constrain the number of $\kappa^*$ values for which $G(\kappa^*)=0$, which can be more than one for a general case. Fortunately, by noting the equivalency of Eq. \eqref{supeq1} to the latent dynamical system derived for the leaky current low-rank RNNs \cite{beiran2021shaping}, we can set a constraint ensuring that there is only one $\kappa^* > 0$ with the property $G(\kappa^*)=0$. Specifically, previous work has shown that if we constrain $\{n,m\}$ to be sampled from a Gaussian distribution, then this one-dimensional dynamical system can have two additional attractive fixed points \citep[Eq. (4.3)]{beiran2021shaping}. For this supplementary note, we constraint our design to Gaussian random variables, which leads to two options for the LPU: i) for $E[nm]<1$, there is only one fixed-point at the origin, or ii) for $E[nm]>1$, there is a repeller at the origin and attractive fixed points at $\pm \kappa^*$, \textit{i.e.}, the bistable system. Practically, after setting the scale on $\kappa^*=1$ (\textit{e.g.}, by rescaling $n$ and $m$ appropriately), there are two free parameters of the Gaussian distribution: i) the joint scale of the parameters $\sigma = \sigma_m \sigma_n$, and ii) the correlation, $\rho$, between the variables. 

\subsection{Designing the latent processing unit}

To design the LPU with a target time-scale, we now return to the linearization around $\kappa \approx 0$, which defines a local time-scale following:
\begin{equation}
    \tau \dot \kappa(t) = \kappa(t)  \left(1-E[nm] \right) \implies \frac{\tau}{\left(E[nm]-1 \right)} \dot \kappa(t) = \kappa(t).
\end{equation}
Here, we can define $\tau_{\rm latent} = \left|  \frac{\tau}{E[nm]-1}  \right|$ as the latent time-scale. Then, when $E[nm]<1$, the latent dynamics around the origin ($\kappa(0) = \kappa_0)$ decay following an exponential function $\kappa(t) = \kappa_0 \exp(-t/\tau_{\rm latent})$, whereas for $E[nm]>1$, they grow via $\kappa(t) = \kappa_0 \exp(t/\tau_{\rm latent})$. Hence, we can use the covariance, $E[nm] = \sigma_{n} \sigma_m \rho = \sigma \rho$, to design a bistable network with the following latent time-scale:
\begin{equation}
    \tau_{\rm latent} =  \frac{\tau}{\sigma \rho - 1},
\end{equation}
which satisfies the condition $\sigma \rho > 1$. By choosing $\sigma \rho \to 1^+$ or $\sigma \rho \to \infty$, it is theoretically possible to design a bistable circuit with a time-scale $\tau_{\rm latent} \in (0,\infty)$. However, how robust are these time-scales when there are finite number of neurons?

To study this question, we return to the finite-neuron limit. First, we note that in Fig. \ref{fig1}\textbf{G-I}, we observed a rather large variance in $\tau_{\rm latent}$ values, sometimes fluctuation between very large positive and negative values. Thus, for analytical convenience, we focus on the grow rates instead: 
\begin{equation} \label{eqsup_decay}
    \hat \gamma_{\rm latent} = \hat \tau_{\rm latent}^{-1} = \frac{C(n,m)-1}{\tau},
\end{equation}
which have continuous behavior between negative and positive values. Then, we compute the mean and the variance of the distribution:
\begin{equation}
    E[C(n,m)] = \sigma \rho  \quad \text{and} \quad  \text{Var}\left[C(n,m)\right] = \frac{1}{N_{\rm rec}} \text{Var}[n_i m_i ] = \frac{\sigma^2 (1+\rho^2)}{N_{\rm rec}}.
\end{equation}
With this, the normalized error in the growth rates of the designed circuits follow the relationship:
\begin{equation}
  \Delta_{\gamma} = \frac{\sqrt{\text{Var}[\hat \gamma_{\rm latent}]}}{E[\hat \gamma_{\rm latent}]} = \frac{\sigma \sqrt{1+\rho^2}}{\sigma \rho - 1} N^{-0.5}_{\rm rec} = \frac{\sigma \rho}{\sigma \rho - 1} \frac{\sqrt{1+\rho^2}}{\rho} N_{\rm rec}^{-0.5}
\end{equation}
After a change of variables, defining $\tau_{\rm latent} = \tau / (\sigma \rho - 1)$, we find:
\begin{equation}
    \frac{\sigma \rho}{\sigma \rho - 1} = \left(\frac{\sigma \rho-1}{\tau} + \frac{1}{\tau} \right) \tau_{\rm latent}  = \left( \frac{1}{\tau_{\rm latent}} + \frac{1}{\tau} \right) \tau_{\rm latent} = \frac{\tau_{\rm latent} + \tau}{\tau}.
\end{equation}
Plugging this back into the scaling equations leads to:
\begin{equation} \label{eqsup_sc1}
    \Delta_\gamma = \frac{\tau_{\rm latent} + \tau}{\tau}  \frac{\sqrt{1+\rho^2}}{\rho} N_{\rm rec}^{-0.5}.
\end{equation}
The scaling relationship already provides a clear prediction. The error is minimized for $\rho = 1$, \textit{i.e.}, when the variables are perfectly correlated. In reality, it is likely not possible to achieve this, since it requires precise control. Regardless, this term contributes as a pre-factor. Moreover, we are particularly interested in the limit $\tau_{\rm latent}\gg\tau$. Incorporating both observations, we achieve a scaling relationship of the form:
\begin{equation}
    \Delta_\gamma \propto \frac{\tau_{\rm latent}}{\tau } \frac{1}{\sqrt{N_{\rm rec}}}
\end{equation}
Intuitively, this means that to simulate dynamics in double time-scales with equivalent errors, the network should quadruple in size. Yet, biological neurons have $O(ms)$ time-scales, as opposed to second-long nature of many relevant behaviors. For example, decision-making processes often require extended periods of evidence accumulation before committing to one of two stable states (represented by attractive fixed points at $\kappa^* = \pm 1$). In such cases, the variable $\kappa$ must maintain an intermediate position near zero until sufficient evidence accumulates, rather than prematurely settling into either attractor state. Therefore, this scaling relationship quantifies the need for scale when designing LPUs with extended time-scales. 

\subsection{Latent processing unit reconstruction from partial observation}

The analysis above can be trivially extended to an empirically relevant consideration: LPU reconstructed from the partial observations. Specifically, the scaling relationship in Eq. \eqref{eqsup_sc1} applies to the scenario $N_{\rm obs} \ll N_{\rm rec} \to\infty$, where $N_{\rm obs}$ refers to the number of observed neurons. In this limit, resampling from the pairs of encoding and embedding weights, $\{n_i,m_i\}$ for $i=1,\ldots, N_{\rm rec}$, would be equivalent to designing a network with smaller number of neurons using the ground truth probability distribution. Below, we perform an extension of this analysis for a general $N_{\rm obs}$, \textit{i.e.}, even for when $N_{\rm obs} \approx N_{\rm rec}$. 

First, we introduce the reconstruction procedure for the LPUs from partial observation, but for an idealized limit that allows access to the $\{n,m\}$ pairs of the observed neurons. Specifically, we enforce that the functional connections, \textit{i.e.}, the pairs $\{n,m\}$, are perfectly estimated through some means (though often one focuses on reconstructing the dynamics, enforcing that the functional connections are properly reconstructed is one important aspect of how RNNs enable interpretable predictions when trained to reproduce brain recordings \cite{perich2021inferring}). Then, for a group of observed neurons and a one-dimensional LPU ($K=1$ in Eq. \eqref{eq_lowrank}), we reconstruct the recurrent weights by simply using their ground truth encoding and embedding values and replace the division by $N_{\rm rec}$ with a division by $N_{\rm observed}$ such that
\begin{equation}
    W_{\rm obs} = \frac{m n^T}{N_{\rm obs}},
\end{equation}
where $m$ and $n$ refer to the embedding and encoding weights of the one-dimensional LPU, respectively. If $N_{\rm obs}= N_{\rm rec}$, this estimation procedure reconstructs the ground truth weight matrix in Eq. \eqref{eq_lowrank}. More importantly, since $m$ and $n$ are sampled from the ground truth values, the reconstructed latent dynamical system:
\begin{equation}
    \tau \dot \kappa(t) = - \kappa(t) + \frac{1}{N_{\rm obs}} \sum_{i=1}^{N_{\rm obs}} n_i \tanh(m_i \kappa(t))
\end{equation}
on average reconstructs the time dynamics of the original LPU following:
\begin{equation}
    \frac{1}{N_{\rm obs}} \sum_{i=1}^{N_{\rm obs}} n_i \tanh(m_i \kappa(t)) \approx E_{n,m \sim P}[n \tanh(m \kappa(t))] \approx \frac{1}{N_{\rm rec}} \sum_{i=1}^{N_{\rm rec}} n_i \tanh(m_i \kappa(t)).
\end{equation}
In other words, this estimation process simply computes a sample mean, which has several desirable properties such as being unbiased and optimal under a Gaussian distribution of the underlying samples $\xi_i(t) =  n_i \tanh(m_i \kappa(t))$ (which is not true for a general scenario, yet sample mean remains a practical estimator). 

Clearly, such an estimation procedure is not feasible in an experimental scenario, since if we knew the ground truth weights and dynamical systems, why would we need data with partial observations? We collect the latter to gain insights to the former in the first place. In a more realistic scenario, the encoding-embedding weight pair for each observed neuron would need to be estimated from neural activities, which are themselves functions of the ground truth $\{n_i,m_i\}$ values. Hence, additional estimation procedures would make it unlikely to perfectly estimate the original pair. Yet, this idealized estimation procedure can illustrate a fundamental problem with reconstructing latent dynamics using subsampled neuronal populations even when the ground truth dynamical system equations are known: The reconstructed LPU at a particular reconstruction instance can have high variations from the ground truth if the neurons are extremely sparsely sampled. In other words, large-scale neural recordings are necessary to achieve minimal errors on reconstructed latent dynamical systems.

To demonstrate this, we now derive the errors between the estimated time-scales:
\begin{equation}
    \hat \gamma_{\rm rec}-\hat \gamma_{\rm obs} = \frac{\Delta C(n,m)}{\tau},
\end{equation}
where we define the variable:
\begin{equation}
    \Delta C(n,m) = \frac{1}{N_{\rm rec}} \sum_{i=1}^{N_{\rm rec}} n_i m_i - \frac{1}{N_{\rm obs}} \sum_{i=1}^{N_{\rm obs}} n_i m_i.
\end{equation}
We see that $E[\Delta C(n,m)] = 0$, \textit{i.e.}, $\hat \gamma_{\rm obs}$ is an unbiased estimator of $\hat \gamma_{\rm rec}$. Then, we can compute the variance as:
\begin{equation}
\begin{split}
    \text{Var}[\Delta C(n,m)]&= N_{\rm obs} \text{Var}\left[ n_i m_i \left(\frac{1}{N_{\rm rec}} - \frac{1}{N_{\rm obs}} \right) \right] + \frac{(N_{\rm rec}-N_{\rm obs})}{N_{\rm rec}^2} \text{Var}[ n_i m_i], \\
    &=\frac{ (N_{\rm rec} - N_{\rm obs})^2 + N_{\rm obs} (N_{\rm rec}-N_{\rm obs}) }{N_{\rm rec}^2 N_{\rm obs}} \sigma^2 (1+\rho^2), \\
    &= \frac{(N_{\rm rec} - N_{\rm obs}) }{N_{\rm rec} N_{\rm obs}} \sigma^2 (1+\rho^2)
\end{split}
\end{equation}

Similar to before, we are interested in the normalized error on the estimation:
\begin{equation}
    \hat \Delta_{\gamma}  = \frac{\sqrt{\text{Var}[ \hat \gamma_{\rm rec}-\hat \gamma_{\rm obs}]}}{E[\hat \gamma_{\rm rec}]} = \sqrt{\frac{N_{\rm rec}-N_{\rm obs}}{N_{\rm rec} N_{\rm obs}}}  \frac{\sqrt{1+\rho^2}}{\rho} \frac{\sigma\rho}{\rho \sigma - 1} = \sqrt{\frac{N_{\rm rec}-N_{\rm obs}}{N_{\rm rec} N_{\rm obs}}}  \frac{\sqrt{1+\rho^2}}{\rho} \frac{\tau_{\rm latent} + \tau}{\tau}
\end{equation}
After rearranging the terms, we obtain:
\begin{equation} \label{eqsup_sc2}
    \hat \Delta_{\gamma} = \frac{\tau_{\rm latent} + \tau}{\tau} \frac{\sqrt{1+\rho^2}}{\rho}  \sqrt{\frac{N_{\rm rec}-N_{\rm obs}}{N_{\rm rec} N_{\rm obs}}} \xrightarrow{N_{\rm rec} \to \infty} \frac{\tau_{\rm latent} + \tau}{\tau} \frac{\sqrt{1+\rho^2}}{\rho}  N_{\rm obs}^{-0.5},
\end{equation}
which is the same as in our calculation above in the limit $N_{\rm rec} \to \infty$, as expected. In other words, the reconstruction of the latent growth rates scales as $N_{\rm obs}^{-0.5}$ when only a sparse subset of neurons are observed. In the other limit, where $\Delta N = N_{\rm rec} - N_{\rm obs} \ll N_{\rm rec}$, \textit{i.e.}, a large population of neurons are observed, we obtain a different scaling:
\begin{equation}
   \hat \Delta_\gamma (N_{\rm obs} \approx N_{\rm rec}) \propto  \frac{\tau_{\rm latent}}{\tau} \frac{\sqrt{\Delta N}}{N_{\rm obs}},
\end{equation}
which means that the added neurons now rapidly decrease the reconstruction errors due to $\Delta N \to 0$ and $N_{\rm obs}^{-1}$ scaling.

\section{Empirical results with an idealized reconstruction procedure}

\begin{figure}[!t]
    \centering
    \includegraphics[width=\linewidth]{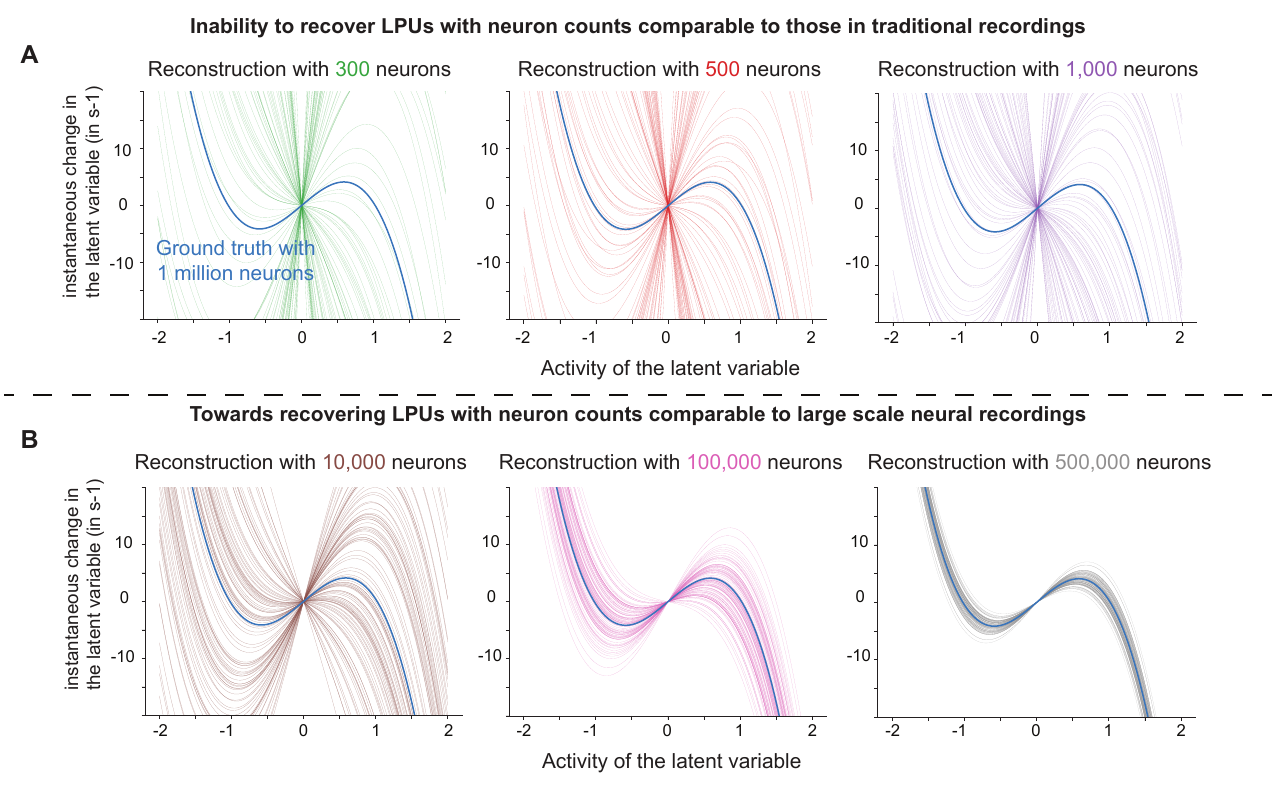}
    \caption{\textbf{Recovering LPUs with partial observations requires large-scale neural recordings.} We designed an RNN with a million neurons and a single LPU by sampling a million entries for encoding and embedding weights from a zero-mean and highly correlated 2D Gaussian distribution. We then reconstructed the LPUs with partial observation from the million neurons and appropriate rescaling of encoding weights (\textbf{Methods}). \textbf{(A)} Sparse observation of a few hundreds to thousands of neurons did not accurately reconstruct the true LPU. \textbf{(B)} Recording large populations, but still a fraction, of neurons led to accurate reconstruction of the true LPU. In panels \textbf{(A-B)}, each color-coded solid line corresponds to the LPU reconstructed with a randomly subsampled population. There are 100 such reconstructions per panel. Blue lines correspond to the ground truth LPU of the RNN with one million neurons.}
    \label{figa1}
\end{figure}

First, we designed an experiment to illustrate the errors inferred during LPU reconstruction from partial observation. Initially, we designed a LPU supported by one million neurons (using the procedure outlined in the previous section). Then, we attempted to reproduce this circuit by observing varying subsets of the original neuronal population (Fig. \ref{figa1}). The results qualitatively demonstrated that observing a small fraction of neurons led to substantial variations in the reconstructed LPUs (Fig. \ref{figa1}\textbf{A}), while accurate and low-variance reconstruction required observation of several percent of the original neurons (Fig. \ref{figa1}\textbf{B}). Importantly, while all reconstructed LPUs maintained their characteristic bistable form when approximately 10\% or more of the neurons were observed (Fig. \ref{figa1}\textbf{B}), reconstructions based on sparser observations could fail to capture even the qualitative structure of the LPU (Fig. \ref{figa1}\textbf{A}). Hence, we qualitatively confirmed our theoretical discussions above, \textit{i.e.}, reconstructing LPUs with second-long timescales requires observing large populations of neurons.

\begin{figure}[!t]
    \centering
    \includegraphics[width=0.8\linewidth]{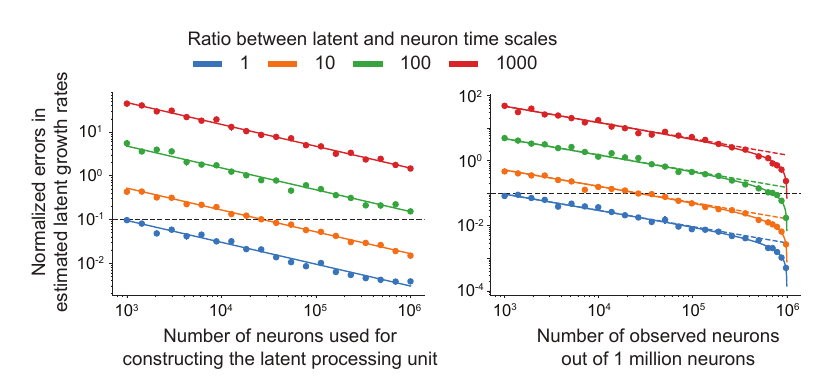}
    \caption{\textbf{Empirical verification of the scaling relationships with time-scales estimated from finite number of neurons.} To verify the scaling relationships in Eqs. \eqref{eqsup_sc1} and \eqref{eqsup_sc2}, we ran simulations by sampling $\{n,m\}$ pairs from a 2D Gaussian distributions. We computed the empirical growth rates following Eq. \eqref{eqsup_decay}, and computed the normalized errors predicted in Eqs. \eqref{eqsup_sc1} and \eqref{eqsup_sc2}. \emph{Left.} We computed the average deviations from the target latent growth rates as a function of number of neurons used to construct the LPU. Dots correspond to simulated values, solid lines the analytical scaling relationship. \emph{Right.} To test how well latent growth rates of a given LPU can be estimated, we first designed networks with $1,000,000$ neurons. Then, using subsampled neurons from the original group, we reconstructed the LPUs and computed their growth rates. Dots correspond to simulated values, solid lines correspond to the scaling relationships, and dotted lines correspond to the asymptotic $N_{\rm obs}^{-0.5}$ scaling described in the text. Parameters: For both cases, we set $\rho = 0.9$ and assumed $\tau = 1ms$. }
    \label{figa2}
\end{figure}

As the second experiment, we then set out to validate the quantitative scaling relationship we found in our analysis above. Specifically, we focused on the errors incurred regarding the latent growth rates, illustrated in Fig. \ref{figa2}. First, we designed an analytical circuit by first setting $\rho = 0.9$ and a given $\tau_{\rm latent}/\tau$ ratio, which constrained the remaining free parameter $\rho$ following:
\begin{equation}
    \sigma := \frac{1+\tau/\tau_{\rm talent} }{\rho}.
\end{equation}
With these parameters, we systematically varied the number of neurons used to construct the LPU and computed the deviation between the empirical (finite $N_{\rm rec}$) and target ($N_{\rm rec} \to \infty$) growth rates (Fig. \ref{figa2}, \emph{left}). The error scaled as $N_{\rm rec}^{-0.5}$, consistent with our theoretical predictions in Eq. \eqref{eqsup_sc1} (Note the agreement between simulation and analytical predictions). Importantly, we found that this relationship remained robust across different realizations of encoding and embedding weight distributions, confirming that the observed scaling behavior is not an artifact of a particular sampling configuration. 

Next, we performed a complementary experiment: instead of constructing the LPU in the $N_{\rm rec} \to \infty$ limit, we first designed a network (analogous to the brain) with finite number of neurons (1 million) and then reconstructed this particular instantiation of the LPU using only the ``observed'' neurons (Fig. \ref{figa2}, \emph{right}). Once again, we found that the simulations aligned with the empirical scaling relationship in Eq.~\eqref{eqsup_sc2}. As the number of observed neurons ($N_{\rm obs}$) increased, the deviation from the $N_{\rm obs}^{-0.5}$ scaling became more pronounced in a positive direction. Thus, increasing the number of observed neurons provided progressively greater improvements in the rate of error reduction, \textit{i.e.}, reconstructing latent dynamics benefits uniquely from large-scale neural recordings.

Overall, our findings demonstrate that large-scale neural recordings are essential for accurately estimating bistable dynamics, which represent one of the most fundamental phenomena in dynamical systems theory \cite{strogatz2018nonlinear}. Although recordings from a small subset of neurons may appear to capture the underlying dynamics, this example reveals how limited neuronal sampling introduces substantial variance that obscures the link between LPUs and the true functional connections (here, the $\{m,n\}$ pairs). This insight extends previous work suggesting that RNNs trained on neural recordings could capture functional connections \cite{perich2021inferring}, which may be studied to uncover inter-area communication patterns. Our analysis indicates that training such dynamical models requires, first and foremost, comprehensive large-scale recordings to achieve reliable results.

\end{document}